\newcommand{\vct}[1]{}
\newcommand{\mtx}[1]{}
\newcommand{\Id}{\mathbf{I}}
\newcommand{\J}{\mathbf{J}}
\newcommand{\zeromtx}{\mathbf{0}}
\newcommand{\eps}{\varepsilon}
\newcommand{\ignore}[1]{}
\newcommand{\Sym}{\mathcal{S}}
\newcommand{\OO}{\mathcal{O}}
\newcommand{\reals}{\mathbb{R}}
\newcommand{\NN}{\mathbb{N}}
\newcommand{\Prob}[1]{\ensuremath{\mathbb{P}\left(#1\right)}}
\newcommand{\e}{\ensuremath{{\rm e}}}
\DeclareMathOperator{\EE}{\mathbb{E}}
\newcommand{\norm}[1]{\ensuremath{\left\|#1\right\|}}
\newcommand{\infnorm}[1]{\ensuremath{\left\|#1\right\|_\infty}}
\newcommand{\frobnorm}[1]{\ensuremath{\left\|#1\right\|_{\text{\rm F}}}}
\newcommand{\sr}[1]{\ensuremath{\mathrm{\textbf{\small sr}}\left(#1\right)}}
\newcommand{\nnz}[1]{\ensuremath{\mathrm{\textbf{\footnotesize nnz}}\left(#1\right)}}
\newcommand{\trace}[1]{\ensuremath{\mathrm{\textbf{tr}}\left(#1\right)}}
\newcommand{\dil}[1]{\ensuremath{\mathcal{D}\left(#1\right)}}
\newcommand{\expm}[1]{\ensuremath{\mathrm{\textbf{\footnotesize exp}}\left[#1\right]}}
\newcommand{\coshm}[1]{\ensuremath{\mathrm{\textbf{\footnotesize cosh}}\left[#1\right]}}
\newcommand{\sign}[1]{\ensuremath{\mathrm{\textbf{\footnotesize sgn}}\left(#1\right)}}
\newcommand{\argmin}{\operatorname*{arg\; min}}
\newcommand{\diag}[1]{\ensuremath{\mathrm{\textbf{\footnotesize diag}}\left(#1\right)}}
\newcommand{\polylog}[1]{\ensuremath{\mathrm{polylog}\left(#1\right)}}
\newcommand{\Cay}[2]{\ensuremath{\mathrm{Cay}\left(#1; #2\right)}}
\def\Eqref Eq:#1:{\eqref{eq:#1}}
\newtheorem{fact}[theorem]{Fact}
\title{A Matrix Hyperbolic Cosine Algorithm and Applications}
\author{Anastasios Zouzias
}
\institute{Department of Computer Science\\ University of Toronto, Canada}
\begin{document}
\maketitle
\begin{abstract}
In this paper, we generalize Spencer's hyperbolic cosine algorithm to the matrix-valued setting. We apply the proposed algorithm to several problems by analyzing its computational efficiency under two special cases of matrices; one in which the matrices have a group structure and an other in which they have rank-one. As an application of the former case, we present a deterministic algorithm that, given the multiplication table of a finite group of size $n$, it constructs an expanding Cayley graph of logarithmic degree in near-optimal $\OO(n^2\log^3 n)$ time. For the latter case, we present a fast deterministic algorithm for spectral sparsification of positive semi-definite matrices, which implies an improved deterministic algorithm for spectral graph sparsification of dense graphs. In addition, we give an elementary connection between spectral sparsification of positive semi-definite matrices and element-wise matrix sparsification. As a consequence, we obtain improved element-wise sparsification algorithms for diagonally dominant-like matrices.

\end{abstract}

\section{Introduction}
A non-trivial generalization of Chernoff bound type inequalities for matrix-valued random variables was introduced by Ahlswede and Winter~\cite{chernoff:matrix_valued:AW}. In parallel, Vershynin and Rudelson introduced similar matrix-valued concentration inequalities using different machinery~\cite{rudelson:isotropic,lowrank:rankone:VR}. Following these two seminal papers, many variants have been proposed in the literature~\cite{recht:simple_completion}; see~\cite{chernoff:matrix_valued:Tropp} for more. Such inequalities, similarly to their real-valued ancestors, provide powerful tools to analyze probabilistic constructions and the performance of randomized algorithms. There is a rapidly growing line of research exploiting the power of these inequalities including new proofs of probabilistic constructions of expander graphs~\cite{expander:AlonRoichman:orig,expander:AlonRoichman:RusLan,chernoff:matrix_valued:Azuma_Naor}, matrix approximation by element-wise sparsification~\cite{matrix:sparsification:IPL2011}, graph approximation via edge sparsification~\cite{graph:sparsifiers:eff_resistance}, analysis of algorithms for matrix completion and decomposition of low rank matrices~\cite{recht:simple_completion,chernoff:matrix_valued:MZ11}, semi-definite relaxation and rounding of quadratic maximization problems~\cite{chernoff:matrix_valued:opt:journal}.
In many settings, it is desirable to convert the above probabilistic proofs into \emph{efficient} deterministic procedures. That is, to derandomize the proofs. Wigderson and Xiao presented an efficient derandomization of the matrix Chernoff bound by generalizing Raghavan's method of pessimistic estimators to the matrix-valued setting~\cite{chernoff:matrix_valued:derand:WX08}. In this paper, we generalize Spencer's hyperbolic cosine algorithm to the matrix-valued setting~\cite{hyperbolic_cosine:Spencer}. In an earlier, preliminary version of our paper~\cite{matrix:hypercosine_zouzias} the generalization of Spencer's hyperbolic cosine algorithm was also based on the method of pessimistic estimators. However, here we present a proof which is based on a simple averaging argument. Next, we carefully analyze two special cases of matrices; one in which the matrices have a group structure and the other in which they have rank-one. We apply our main result to the following problems: deterministically constructing Alon-Roichman expanding Cayley graphs, approximating graphs via edge sparsification and approximating matrices via element-wise sparsification.
The Alon-Roichman theorem asserts that Cayley graphs obtained by choosing a logarithmic number of group elements independently and uniformly at random are expanders~\cite{expander:AlonRoichman:orig}. The original proof of Alon and Roichman is based on Wigner's trace method, whereas recent proofs rely on matrix-valued deviation bounds~\cite{expander:AlonRoichman:RusLan}. Wigderson and Xiao's derandomization of the matrix Chernoff bound implies a deterministic $\OO(n^4 \log n )$ time algorithm for constructing Alon-Roichman graphs. Independently, Arora and Kale generalized the multiplicative weights update (MWU) method to the matrix-valued setting and, among other interesting implications, they improved the running time to $\OO(n^3\polylog{n})$~\cite{phdthesis:Kale:2008}. Here we further improve the running time to $\OO(n^2 \log^3 n)$ by exploiting the group structure of the problem. In addition, our algorithm is combinatorial in the sense that it only requires counting the number of all closed (even) paths of size at most $\OO(\log n)$ in Cayley graphs. All previous algorithms involve numerical matrix computations such as eigenvalue decompositions and matrix exponentiation.
The second problem that we study is the graph sparsification problem. This problem poses the question whether any dense graph can be approximated by a sparse graph under different notions of approximation. Given any undirected graph, the most well-studied notions of approximation by a sparse graph include approximating, \emph{all} pairwise distances up to an additive error~\cite{graph:spanners:PelegS89}, every cut to an arbitrarily small multiplicative error~\cite{graph:sparsifier:BenczurK96} and every eigenvalue of the difference of their Laplacian matrices to an arbitrarily small relative error~\cite{graph:sparsifier:ICM2010}; the resulting graphs are usually called \emph{graph spanners}, \emph{cut sparsifiers} and \emph{spectral sparsifiers}, respectively. Given that the notion of spectral sparsification is stronger than cut sparsification, so we focus on spectral sparsifiers. An efficient randomized algorithm to construct an $(1+\eps)$-spectral sparsifier with $\OO(n\log n /\eps^2)$ edges was given in~\cite{graph:sparsifiers:eff_resistance}. Furthermore, an $(1+\eps)$-spectral sparsifier with $\OO(n/\eps^2)$ edges can be computed in $\OO(mn^3/\eps^2)$ deterministic time~\cite{graph:sparsifiers:twice_ram}. The latter result is a direct corollary of the spectral sparsification of positive semi-definite (psd) matrices problem as defined in~\cite{phdthesis:Srivastava:2010}; see also~\cite{graph:sparsification:Naor} for more applications. Here we present a fast deterministic algorithm for spectral sparsification of psd matrices and, as a consequence, we obtain an improved deterministic spectral graph sparsification algorithm for the case of dense graphs.
The last problem that we analyze is the element-wise matrix sparsification problem. This problem was first introduced by Achlioptas and McSherry in~\cite{matrix:sparsification:optas}. They described sampling-based algorithms that select a small number of entries from an input matrix $A$, forming a sparse matrix $\widetilde{A}$, which is close to $A$ in the operator norm sense. The motivation to study this problem lies on the need to speed up several matrix computations including approximate eigenvector computations~\cite{matrix:sparsification:optas} and semi-definite programming solvers~\cite{fast_SDP:AHK05,Asp09}. Recently, there are many follow-up results on this problem~\cite{matrix:sparsification:arora,matrix:sparsification:IPL2011}. To the best of our knowledge, all known algorithms for this problem are randomized (see Table~$1$ of~\cite{matrix:sparsification:IPL2011}). In this paper we present the first deterministic algorithm and strong sparsification bounds for self-adjoint matrices that have an approximate diagonally dominant\footnote{A self-adjoint matrix $A$ of size $n$ is called \emph{diagonally dominant} if $|A_{ii}| \geq \sum_{j\neq i} |A_{ij}|$ for every $i\in{[n]}$.} property. Diagonally dominant matrices arise in many applications such as the solution of certain elliptic differential equations via the finite element method~\cite{SDD:Vavasis}, several optimization problems in computer vision~\cite{SDD:vision:Koutis} and computer graphics~\cite{SDD:graphics:Joshi}, to name a few.
%
%
\paragraph{Organization of the Paper.}
%
%
The paper is organized as follows. In~\S~\ref{sec:derand_Bernstein}, we present the matrix hyperbolic cosine algorithm (Algorithm~\ref{alg:derand:Bernstein}). We apply the matrix hyperbolic cosine algorithm to derive improved deterministic algorithms for the construction of Alon-Roichman expanding Cayley graphs in~\S~\ref{sec:AR_graphs}, spectral sparsification of psd matrices in~\S~\ref{sec:fast_isotrop_sparse} and element-wise matrix sparsification in \S~\ref{sec:sparsification:matrix}. Due to space constraints, almost all proofs have been deferred to the Appendix.
%
\subsection*{Our Results}
%
The main contribution of this paper is a generalization of Spencer's hyperbolic cosine algorithm to the matrix-valued setting~\cite{hyperbolic_cosine:Spencer},~\cite[Lecture~$4$]{book:probmeth:Spencer}, see Algorithm~\ref{alg:derand:Bernstein}. As mentioned in the introduction, our main result has connections with a recent derandomization of matrix concentration inequalities~\cite{chernoff:matrix_valued:derand:WX08}. We should highlight a few advantages of our result compared to~\cite{chernoff:matrix_valued:derand:WX08}. First, our construction does not rely on composing two separate estimators (or potential functions) to achieve operator norm bounds and does not require knowledge of the sampling probabilities of the matrix samples as in~\cite{chernoff:matrix_valued:derand:WX08}. In addition, the algorithm of~\cite{chernoff:matrix_valued:derand:WX08} requires computations of matrix expectations with matrix exponentials which are computationally expensive, see~\cite[Footnote~$6$, p. $63$]{chernoff:matrix_valued:derand:WX08}. In this paper, we demostrate that overcoming these limitations leads to faster and in some cases simpler algorithms.
Next, we demonstrate the usefulness of the main result by analyzing its computational efficiency under two special cases of matrices. We begin by presenting the following result
\begin{theorem}[Restatement of Theorem~\ref{thm:AR_graphs}]
There is a deterministic algorithm that, given the multiplication table of a group $G$ of size $n$, constructs an Alon-Roichman expanding Cayley graph of logarithmic degree in $\OO(n^2\log^3 n)$ time. Moreover, the algorithm performs only group algebra operations that correspond to counting closed paths in Cayley graphs.
\end{theorem}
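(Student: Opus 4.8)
The plan is to instantiate the matrix hyperbolic cosine algorithm (Algorithm~\ref{alg:derand:Bernstein}) in the setting where the candidate matrices carry a group structure, and then to show that each step of the algorithm can be executed purely combinatorially and fast. First I would recall the probabilistic statement behind Alon--Roichman: if $g_1,\dots,g_t$ are chosen independently and uniformly at random from $G$ with $t = \OO(\log n / \eps^2)$, then the normalized Cayley graph $\Cay{G}{\{g_i^{\pm 1}\}}$ has second eigenvalue at most $\eps$ with positive probability; in matrix language, writing $\rho$ for the regular representation of $G$ and $A_i = \tfrac{1}{2}(\rho(g_i)+\rho(g_i^{-1}))$, one wants the sum $\tfrac1t\sum_i A_i$ to be close in operator norm to its expectation $\EE A_i$, which is the projection onto the trivial component (equivalently, the all-ones normalized matrix $\J$). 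This is exactly the kind of operator-norm deviation that Algorithm~\ref{alg:derand:Bernstein} is designed to derandomize: the algorithm greedily selects $g_1, g_2, \dots$ one at a time, at each step choosing the group element minimizing a hyperbolic-cosine (trace-exponential) potential $\Phi$, and the main result guarantees that after $t = \OO(\log n/\eps^2)$ steps the resulting sum satisfies the desired spectral bound.

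The heart of the argument — and the step I expect to be the main obstacle — is showing that the potential function, and hence the greedy choice at each step, can be computed \emph{without any eigenvalue decomposition or matrix exponentiation}, using only integer arithmetic that counts closed walks in Cayley graphs. The key observations are: (i) because $\rho$ is the regular representation, all the matrices $A_i$ and the running partial sum $S = \sum_{j<i} A_j$ lie in the group algebra $\reals[G]$, so they are determined by $n$ numbers (their first column), and products in $\reals[G]$ are convolutions; (ii) the matrix hyperbolic cosine potential is a trace of a polynomial (a truncated exponential, or a Chebyshev-type expansion of $\coshm{\cdot}$) in $S + A_i$, and $\trace{(S+A_i)^k}$ counts, up to normalization, the number of closed walks of length $k$ in the multigraph whose generators are the already-chosen elements together with the candidate $g_i$. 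Thus I would: (a) show that it suffices to work with a polynomial approximation of $\coshm{\cdot}$ of degree $\OO(\log n / \eps^2)$ — this follows from the standard fact that truncating the power series of $\cosh$ (equivalently, using a low-degree polynomial that matches $\cosh$ to within $1/\text{poly}(n)$ on the relevant interval) changes the potential negligibly, so the correctness analysis of Algorithm~\ref{alg:derand:Bernstein} goes through; (b) show that evaluating this polynomial of the matrix argument at all $n$ candidate group elements amounts to computing a bounded number of convolutions over $\reals[G]$, together with reading off traces (diagonal entries), which is precisely ``counting closed even paths of length $\OO(\log n)$.''

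Finally I would assemble the running time. There are $t = \OO(\log n)$ greedy steps (taking $\eps$ a constant, so the Cayley graph has $\OO(\log n)$ generators and the expander bound holds). At each step, for each of the $n$ candidate elements $g$ we must evaluate the potential of $S + \tfrac12(\rho(g)+\rho(g^{-1}))$, which is a degree-$\OO(\log n)$ polynomial; evaluated via Horner's rule in $\reals[G]$ this is $\OO(\log n)$ convolutions. A single convolution over the group algebra is done naively in $\OO(n^2)$ time using the multiplication table — and crucially this is the only place the multiplication table is used, and no FFT or representation-theoretic machinery is needed. So one step costs $n \cdot \OO(\log n) \cdot \OO(n^2) = \OO(n^3 \log n)$ if done bluntly; the improvement to $\OO(n^2 \log^3 n)$ comes from observing that one does not need to recompute the whole polynomial from scratch for each candidate: the partial sum $S$ is fixed within a step, so the powers of $S$ (or a basis of $\reals[G]$ supported on closed walks through the chosen generators) can be precomputed once in $\OO(n^2 \log n)$ time, after which testing each candidate $g$ costs only $\OO(n \log n)$ (a constant number of group-algebra multiplications by the sparse element $\tfrac12(\rho(g)+\rho(g^{-1}))$, each costing $\OO(n)$, times the degree $\OO(\log n)$), for a per-step cost of $\OO(n^2\log n) + n\cdot\OO(n\log n) = \OO(n^2 \log n)$ and, after accounting for the $\OO(\log^2 n)$ overhead hidden in the polynomial degree and precision (degree $\OO(\log n/\eps^2)$ and $\OO(\log n)$-bit arithmetic), a total of $\OO(n^2 \log^3 n)$. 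The main obstacle, as noted, is the careful bookkeeping in step (ii): verifying that the truncated-$\cosh$ potential is still a valid pessimistic-style estimator for the operator norm in the group-algebra setting, and that all intermediate quantities stay polynomially bounded so that the arithmetic cost per operation is only $\polylog n$ factors worse than the bit model suggests.
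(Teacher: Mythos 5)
Your high-level strategy matches the paper's: instantiate Algorithm~\ref{alg:derand:Bernstein} with $f(g) = \tfrac12(R(g)+R(g^{-1})) - \J/n$ so that the selected elements generate an expanding Cayley graph (this is exactly Theorem~\ref{thm:hypercosine:main} after checking $\gamma = 2$, $\rho^2 \leq 2$), and then observe that the trace--hyperbolic-cosine potential is a truncated exponential power series in a group-algebra element, so that its evaluation reduces to counting closed walks in a partial Cayley graph. The paper packages this second observation via the \emph{even Estrada index} (Lemma~\ref{lem:cosh_Estrada} and Lemma~\ref{lem:fastEstrada:Cayley}), but the substance is the same as your step~(ii).

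There is, however, a real gap in your running-time accounting. You claim the per-step cost drops from $\OO(n^3 \log n)$ to $\OO(n^2\log n)$ because the powers of the fixed partial sum $S$ can be precomputed once and then each candidate tested by $\OO(\log n)$ multiplications by the $2$-sparse element $\tfrac12\pp{\rho(g)+\rho(g^{-1})}$, each in $\OO(n)$ time. This reuse argument does not go through: $\reals[G]$ is non-commutative for a general finite group, so $(S + A_g)^k$ expands into $2^k$ mixed words in $S$ and $A_g$ that cannot be reassembled from precomputed $S^j$ together with cheap multiplications by $A_g$; knowing walk counts in $\Cay{G}{S}$ does not let you cheaply count walks in $\Cay{G}{S \cup g \cup g^{-1}}$, because a closed walk of length $k$ may interleave new and old generators arbitrarily. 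The paper's Lemma~\ref{lem:fastEstrada:Cayley} avoids precomputation altogether and is simpler: for each candidate, $T_l\bigl(\theta\sum_{s \in S'\cup\{g,g^{-1}\}} s\bigr)$ is built iteratively from scratch, and each of the $l$ iterations convolves a \emph{dense} group-algebra element with the \emph{sparse} generating element of support $|S'| + 2 = \OO(\log n)$ (not merely the $2$-sparse $g + g^{-1}$), costing $\OO(n|S'|) = \OO(n\log n)$ rather than $\OO(n^2)$. With $l = \OO(\log n)$ and constant $\eps$, this gives $\OO(n\log^2 n)$ per candidate, $\OO(n^2\log^2 n)$ per step, and $\OO(n^2\log^3 n)$ over the $t = \OO(\log n)$ steps --- the same total you reached, but via sparsity of the convolving element, not via precomputation, and without the vague ``$\log^2 n$ overhead'' fudge you needed to make the numbers come out. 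A minor side remark: Theorem~\ref{thm:hypercosine:main} in this paper is proved by an averaging argument rather than by pessimistic estimators, so ``pessimistic-style estimator'' is not the mechanism used here, although your appeal to it is harmless for this application.
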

To the best of our knowledge, the above theorem improves the running time of all previously known deterministic constructions of Alon-Roichman Cayley graphs~\cite{arora:fast_SDP,chernoff:matrix_valued:derand:WX08,phdthesis:Kale:2008}. Moreover, notice that the running time of the above algorithm is optimal up-to poly-logarithmic factors since the size of the multiplication table of a finite group of size $n$ is $\OO(n^2)$.
In addition, we study the computational efficiency of the matrix hyperbolic cosine algorithm on the case of matrix samples with rank-one. The motivation for studying this special setting is its connection with problems such as graph approximation via edge sparsification as was shown in~\cite{graph:sparsifiers:twice_ram,phdthesis:Srivastava:2010} and matrix approximation via element-wise sparsification as we will see later in this paper. The main result for this setting can be summarized in the following theorem (see \S~\ref{sec:fast_isotrop_sparse}), which improves the $\OO(mn^3 /\eps^2)$ running time of~\cite{phdthesis:Srivastava:2010} when, say, $m = \Omega(n^2)$ and $\eps$ is a constant.
\begin{theorem}
Suppose $0 < \eps < 1$ and $A = \sum_{i=1}^{m} v_i \otimes v_i $ are given, with column vectors $v_i\in\reals^n $. Then there are non-negative real weights $\{s_i\}_{i\leq m}$, at most $ \lceil n/\eps^2 \rceil$ of which are non-zero, such that
\begin{equation*}
	(1-\eps)^3 A \preceq \widetilde{A} \preceq (1+\eps)^3 A,
\end{equation*}
where $\widetilde{A} = \sum_{i=1}^{m}s_i v_i \otimes  v_i$. Moreover, there is a deterministic algorithm which computes the weights $s_i$ in\footnote{The $\widetilde{\OO}(\cdot)$ notation hides $\log\log n$ and $\log\log (1/\eps)$ factors throughout the paper.} $\widetilde{\OO}(mn^2 \log^3 n  /\eps^2 + n^4 \log n /\eps^4)$ time.
\end{theorem}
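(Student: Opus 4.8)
The plan is to reduce to the near-isotropic case by whitening and then sparsify in two stages: a cheap stage based on the matrix hyperbolic cosine algorithm of~\S\ref{sec:derand_Bernstein} that cuts the number of vectors down to $\widetilde{\OO}(n/\eps^2)$, and a Batson--Spielman--Srivastava--type stage that compresses the support to $\lceil n/\eps^2\rceil$. For the reduction, we may assume $A$ is invertible on an $n$-dimensional space (otherwise restrict to $\im{A}$ and replace $n$ by $\rank{A}\le n$), and set $u_i := A^{-1/2}v_i$; then $\sum_i u_i\otimes u_i = \Id$ and $\sum_i\norm{u_i}^2 = n$. Since $v_i\otimes v_i = A^{1/2}(u_i\otimes u_i)A^{1/2}$ and congruence by $A^{1/2}$ preserves the Loewner order, any non-negatively reweighted sum $\sum_i s_i\,u_i\otimes u_i$ that $(1\pm\delta)$-approximates $\Id$ pulls back to $\widetilde{A} := \sum_i s_i\,v_i\otimes v_i$ with $(1-\delta)A\preceq\widetilde{A}\preceq(1+\delta)A$. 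Hence it suffices to sparsify $\Id = \sum_i u_i\otimes u_i$; forming $A^{-1/2}$ and all the $u_i$ costs $\OO(n^\omega + mn^2)$, which is dominated by the claimed bound.

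\textbf{Stage 1.} Put $p_i := \norm{u_i}^2/n$ (a probability vector) and $X_i := \tfrac{n}{T}\,(u_i\otimes u_i)/\norm{u_i}^2$; then $\zeromtx\preceq X_i\preceq\tfrac{n}{T}\Id$ and $\sum_i p_i X_i = \Id/T$, which is exactly the regime in which a sum of $T = \OO(n\log n/\eps^2)$ i.i.d.\ samples concentrates around $\Id$. Running the matrix hyperbolic cosine algorithm on these matrices returns, deterministically, a multiset of $T$ indices, i.e.\ weights $s_i^{(1)}\geq 0$ with at most $T$ nonzero, such that $(1-\eps)\Id\preceq\sum_i s_i^{(1)}u_i\otimes u_i\preceq(1+\eps)\Id$. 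The difficulty is speed: evaluating the algorithm's potential (a trace of matrix exponentials) at each of the $m$ candidates in each of the $T$ rounds naively costs $\widetilde{\OO}(mn^\omega/\eps^2)$. Instead we replace each matrix exponential inside the potential by its Taylor polynomial of degree $\Theta(\log(n/\eps))$ and estimate the marginal quadratic forms $u_i^\top p(M)\,u_i$ through a derandomized Johnson--Lindenstrauss sketch of $p(M)^{1/2}$ of dimension $\OO(\log n)$, updated along the $T$ rank-one changes of $M$; each candidate then costs $\widetilde{\OO}(n)$ per round, which gives the term $\widetilde{\OO}(mn^2\log^3 n/\eps^2)$.

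\textbf{Stage 2.} The Stage~1 output is an instance with $N = \widetilde{\OO}(n/\eps^2)$ vectors whose weighted outer products lie within $(1\pm\eps)$ of $\Id$; invoking the deterministic spectral sparsifier of~\cite{graph:sparsifiers:twice_ram,phdthesis:Srivastava:2010} on it reweights it to at most $\lceil n/\eps^2\rceil$ nonzero weights while perturbing the spectrum by a further $(1\pm\eps)$ (after a scalar rescaling), in time $\OO(Nn^3/\eps^2) = \widetilde{\OO}(n^4\log n/\eps^4)$, the second term. Composing the two multiplicative guarantees, pulling back through the exact whitening congruence, and starting from a slightly smaller internal accuracy so the constants collapse, yields $(1-\eps)^3 A\preceq\widetilde{A}\preceq(1+\eps)^3 A$ with at most $\lceil n/\eps^2\rceil$ nonzero weights, as required. (The existence claim by itself follows from Stage~2 alone, applied directly to the $u_i$.)

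\textbf{Main obstacle.} The whitening step and the black-box use of the Batson--Spielman--Srivastava sparsifier are routine; the real work is Stage~1, i.e.\ carrying out the matrix hyperbolic cosine selection in $\widetilde{\OO}(mn)$ time per round rather than the naive $\widetilde{\OO}(mn^\omega)$. This requires (i) bounding the Taylor degree of the matrix exponential so that the resulting approximate potential is still good enough for the averaging argument behind the algorithm to go through with only an $\eps$-loss, and (ii) maintaining, across the $T$ rank-one updates, a sketch of a low-degree polynomial in the running matrix from which all $m$ marginal potential changes can be read off by matrix--vector products; one must also verify that $A^{-1/2}$ may be computed only approximately without invalidating the final guarantee.
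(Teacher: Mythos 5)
Your outer structure exactly matches the paper's: whiten with $A^{-1/2}$ so that $\sum_i u_i\otimes u_i=\Id$, run a deterministic hyperbolic-cosine sparsifier to cut to $\widetilde{\OO}(n/\eps^2)$ vectors with $(1\pm\eps)$-accuracy, conjugate back by $A^{1/2}$, then invoke the Batson--Spielman--Srivastava sparsifier \cite[Theorem~3.1]{phdthesis:Srivastava:2010} on the reduced instance to reach $\lceil n/\eps^2\rceil$ nonzeros with another $(1\pm\eps)^2$ loss, giving $(1\pm\eps)^3$ overall. That part is right, and your observation that the existence claim already follows from BSS alone is also correct.

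Where you diverge --- and where the gap is --- is the realization of Stage~1. The paper (Algorithm~\ref{alg:fast:isotrop}, Theorem~\ref{thm:derand:isotrop:fast}) does not approximate the matrix exponential by a low-degree Taylor polynomial nor use any sketching. It maintains, throughout the $T=\OO(n\log n/\eps^2)$ rounds, an explicit eigenbasis in which the running sum is \emph{diagonal}, and exploits two facts: (a)~the eigenvalues of ``diagonal plus rank one'' can all be computed in $\OO(n\,\mathrm{polylog})$ time by the Gu--Eisenstat method (Lemma~\ref{lem:comp_eigs}), so the potential $\trace{\coshm{\cdot}}$ is evaluated \emph{exactly} up to a controlled additive $2^{-b}$ error per eigenvalue; and (b)~the change-of-basis matrix between consecutive eigenbases is a Cauchy matrix, so all $m$ candidate vectors can be rotated into the new frame in $\widetilde{\OO}(mn)$ time per round via FMM (Lemma~\ref{lem:fast_mm:gerasoulis}). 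The additive eigenvalue error is converted to a multiplicative $e^{2\delta}$ slack in the potential (Lemma~\ref{lem:technical_cosh}), which composes harmlessly over the $T$ rounds after choosing $b=\OO(\log(n\log n/\eps^3))$.

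Your replacement of this machinery by ``Taylor truncation of $\expm{\cdot}$ plus a derandomized Johnson--Lindenstrauss sketch of $p(M)^{1/2}$'' has a real hole, not just an unverified detail. First, the theorem demands a \emph{deterministic} algorithm, and ``derandomized JL'' is not an off-the-shelf primitive in the regime you need: you would need a fixed $\OO(\log n)\times n$ matrix $S$ such that $\norm{S p(M)^{1/2}u_i}^2$ is a good multiplicative estimate of $u_i^\top p(M)u_i$ \emph{simultaneously} for all $m$ vectors and for the $\widetilde{\OO}(n/\eps^2)$ different matrices $M$ that arise along the run; no deterministic construction of such an oblivious embedding with $\OO(\log n)$ rows is available, and adaptivity in $M$ makes a union-bound-over-queries derandomization delicate. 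Second, you would have to show how to maintain $S p(M)^{1/2}$ (a \emph{square root} of a polynomial in $M$, not a polynomial) across rank-one updates of $M$ in $\widetilde{\OO}(n)$ amortized time, which is not explained and is not a routine maintenance task. Third, the JL relative error would have to be compared against the very small multiplicative headroom $\exp(\eps^2\rho^2/\gamma^2)$ that the averaging argument (Ineq.~\eqref{ineq:barrier_incr}) leaves per round; you correctly flag this, but it is precisely the step the paper resolves via Lemma~\ref{lem:technical_cosh} with its FMM-controlled additive error, and your proposal does not supply an analogue. So Stages~0 and~2 are fine, but Stage~1 as written would not go through; the paper's FMM-plus-Cauchy route is the missing ingredient.
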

First, as we have already mentioned the graph sparsification problem can be reduced to spectral sparsification of positive semi-definite matrix. Hence as a corollary of the above theorem (proof omitted, see~\cite{phdthesis:Srivastava:2010} for details), we obtain a fast deterministic algorithm for sparsifying dense graphs, which improves the currently best known $\OO(n^5 /\eps^2 )$ running time for this problem.
\begin{corollary}
Given a weighted dense graph $H=(V,E)$ on $n$ vertices with positive weights and $0< \eps <1$, there is a deterministic algorithm that returns an $(1+\eps)$-spectral sparsifier with $\OO(n/ \eps^2)$ edges in $\widetilde{\OO}(n^4 \log n /\eps^2$ $ \max\{ \log^2 n, 1/\eps^2 \})$ time.
\end{corollary}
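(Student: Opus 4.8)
The plan is to reduce the problem to spectral sparsification of a PSD matrix and then invoke the preceding theorem. Write the weighted Laplacian of $H$ as $L_H = \sum_{e=(u,v)\in E} b_e\otimes b_e$, where $b_e = \sqrt{w_e}\,(\mathbf e_u-\mathbf e_v)\in\reals^n$. Assume first that $H$ is connected (otherwise run the argument on each connected component separately); then $L_H$ is PSD with kernel spanned by the all-ones vector, and every $b_e$ lies in $U:=\mathbf 1^{\perp}$, the image of $L_H$. Let $\Pi$ be the orthogonal projection onto $U$, let $L_H^{\dagger}$ be the Moore--Penrose pseudoinverse of $L_H$, and set $v_e := (L_H^{\dagger})^{1/2} b_e$. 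Since each $b_e\in U$, we get the "whitened" identity
\begin{equation*}
A \;:=\; \sum_{e\in E} v_e\otimes v_e \;=\; (L_H^{\dagger})^{1/2}\,L_H\,(L_H^{\dagger})^{1/2} \;=\; \Pi ,
\end{equation*}
a PSD matrix of rank $n-1$.

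Next I would apply the previous theorem to $A=\Pi$, the vectors $\{v_e\}$, and parameter $\eps' := \eps/7$ in place of $\eps$. This produces nonnegative weights $\{s_e\}_{e\in E}$, at most $\lceil n/\eps'^2\rceil=\OO(n/\eps^2)$ of them nonzero, with $(1-\eps')^3\,\Pi \preceq \sum_e s_e\,v_e\otimes v_e \preceq (1+\eps')^3\,\Pi$. Conjugating this inequality by $L_H^{1/2}$ and using $L_H^{1/2}v_e = L_H^{1/2}(L_H^{\dagger})^{1/2}b_e = \Pi b_e = b_e$ together with $L_H^{1/2}\Pi L_H^{1/2}=L_H$ yields
\begin{equation*}
(1-\eps')^3\,L_H \;\preceq\; \sum_{e\in E} s_e\, b_e\otimes b_e \;\preceq\; (1+\eps')^3\,L_H .
\end{equation*}
The middle matrix is exactly the Laplacian $L_{\widetilde H}$ of the graph $\widetilde H$ on the same vertex set with edge weights $s_e w_e$; only $\OO(n/\eps^2)$ of these are nonzero, so $\widetilde H$ has $\OO(n/\eps^2)$ edges. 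Since $(1-\eps/7)^3\ge 1-\eps$ and $(1+\eps/7)^3\le 1+\eps$ for $0<\eps<1$, the graph $\widetilde H$ is a $(1+\eps)$-spectral sparsifier of $H$.

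For the running time, note that $H$ dense means $m=\binom n2=\Theta(n^2)$. An eigendecomposition of $L_H$ gives $(L_H^{\dagger})^{1/2}$ and $\Pi$ in $\OO(n^3)$ time; each $v_e$ is the difference of two rescaled columns of $(L_H^{\dagger})^{1/2}$, costing $\OO(n)$, hence $\OO(mn)=\OO(n^3)$ for all of them; and reading off $\widetilde H$ from the $\{s_e\}$ costs $\OO(m)$. By the theorem, computing the weights takes $\widetilde{\OO}(mn^2\log^3 n/\eps'^2 + n^4\log n/\eps'^4) = \widetilde{\OO}(n^4\log^3 n/\eps^2 + n^4\log n/\eps^4) = \widetilde{\OO}\!\big(n^4\log n/\eps^2\cdot\max\{\log^2 n,\,1/\eps^2\}\big)$, which dominates the $\OO(n^3)$ preprocessing.

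There is no deep obstacle here; the reduction is essentially bookkeeping. The two points that need care are (i) the whitening step, i.e. verifying $\Pi b_e = b_e$ so that conjugation by $L_H^{1/2}$ turns the PSD inequality back into a genuine Laplacian inequality (this is exactly where connectivity, or passing to connected components, is used), and (ii) absorbing the $(1\pm\eps)^3$ distortion into $(1\pm\eps)$ by the constant rescaling of $\eps$, which only changes the hidden constants in the edge count and the running time.
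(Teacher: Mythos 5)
The paper omits this proof, simply citing Srivastava's thesis for the reduction; your argument is exactly that standard reduction and it is correct. A small simplification worth noting: Theorem~\ref{thm:sparsification:here} already accepts an arbitrary PSD matrix $A=\sum_i v_i\otimes v_i$, so you could apply it directly to $A=L_H$ with $v_e=b_e$ (the output $\sum_e s_e\,b_e\otimes b_e$ is then automatically the Laplacian of a reweighted subgraph, with $s_e w_e$ as new edge weights), skipping your explicit whitening step $v_e=(L_H^{\dagger})^{1/2}b_e$; that whitening is precisely what the proof of Theorem~\ref{thm:sparsification:here} performs internally via $u_i=A^{-1/2}v_i$ before invoking Algorithm~\ref{alg:fast:isotrop}. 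The one point to make explicit---and it affects both your invocation with $A=\Pi$ and the theorem's own ``assume WLOG $A$ has full rank''---is that $L_H$ (equivalently $\Pi$) has rank $n-1$, not $n$; the standard fix is to restrict to the range of $L_H$, or equivalently work with the reduced Laplacian obtained by grounding one vertex, which changes neither the $\OO(n/\eps^2)$ sparsity nor the $\widetilde{\OO}(n^4\log n/\eps^2\cdot\max\{\log^2 n,\,1/\eps^2\})$ time bound. Your $\eps'=\eps/7$ rescaling to absorb the cubed factors and your time accounting with $m=\Theta(n^2)$ both check out.
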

Second, we give an elementary connection between element-wise matrix sparsification and spectral sparsification of psd matrices. A direct application of this connection implies strong sparsification bounds for self-adjoint matrices that are close to being \emph{diagonally dominant}. More precisely, we give two element-wise sparsification algorithms for self-adjoint and diagonally dominant-like matrices; in its randomized and the other in its derandomized version (see Table~$1$ of~\cite{matrix:sparsification:IPL2011} for comparison). 
Here, for the sake of presentation, we state our results for diagonally dominant matrices, although the results hold under a more general setting (see \S~\ref{sec:sparsification:matrix} for details).
\begin{theorem}
Let $A$ be any self-adjoint and diagonally dominant matrix of size $n$ and $ 0 < \eps <1$. Assume for normalization that $\norm{A}=1$.
\begin{enumerate}[(a)]
 \item
There is a randomized linear time algorithm that outputs a matrix $\widetilde{A}\in \reals^{n\times n}$ with at most $\OO( n \log n /\eps^2)$ non-zero entries such that, with probability at least $1-1/n$, $\norm{A - \widetilde{A}} \leq \eps.$
\item
There is a deterministic $\widetilde{\OO}( \eps^{-2} \nnz{A} n^2 \log n  \max\{ \log^2 n,1/\eps^2 \} )$ time algorithm that outputs a matrix $\widetilde{A}\in \reals^{n\times n}$ with at most $\OO(n/\eps^2)$ non-zero entries such that $\norm{A - \widetilde{A}} \leq \eps.$
\end{enumerate}
\end{theorem}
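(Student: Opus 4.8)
\noindent\emph{Proof plan.} The strategy is to make precise the ``elementary connection'' mentioned above: a self-adjoint diagonally dominant matrix decomposes into a nonnegative combination of rank-one matrices of the special form $(e_i\pm e_j)\otimes(e_i\pm e_j)$ and $e_i\otimes e_i$, each supported on at most four entries, so that \emph{any} spectral sparsifier of this decomposition --- which is merely a reweighting of its rank-one terms --- is simultaneously an \emph{entry-wise} sparsifier of the original matrix. First I would pass to $B := A + 2\Id$. Since $\norm{A}=1$ forces $|A_{ii}|\le 1$, a short check gives $B\succeq 0$, that $B$ is again diagonally dominant, and that $\norm{B}\le 3$. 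Using $e_ie_j^\top+e_je_i^\top = (e_i+e_j)(e_i+e_j)^\top - e_ie_i^\top - e_je_j^\top = e_ie_i^\top + e_je_j^\top - (e_i-e_j)(e_i-e_j)^\top$, I would rewrite each off-diagonal contribution $A_{ij}(e_ie_j^\top+e_je_i^\top)$ with whichever sign of $A_{ij}$ keeps all coefficients nonnegative, folding the leftover weight onto the diagonal; diagonal dominance of $B$ makes the residual diagonal coefficients $(A_{ii}+2)-\sum_{j\ne i}|A_{ij}|$ nonnegative. This yields $B=\sum_{k=1}^m v_k\otimes v_k$ with each $v_k$ a scalar multiple of $e_i$ or $e_i\pm e_j$ and $m=\OO(\nnz{A})$ rank-one terms, each contributing at most four nonzero entries.

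For part (b) I would feed $B=\sum_k v_k\otimes v_k$ to the deterministic rank-one spectral sparsification theorem stated above, with accuracy $\eps' := c\eps$ for a small absolute constant $c$. It returns weights $s_k\ge 0$, at most $\lceil n/\eps'^2\rceil$ of them nonzero, with $(1-\eps')^3 B \preceq \widetilde B \preceq (1+\eps')^3 B$ for $\widetilde B=\sum_k s_k v_k\otimes v_k$. Setting $\widetilde A := \widetilde B - 2\Id$, the matrix $\widetilde A$ has at most $4\lceil n/\eps'^2\rceil + n = \OO(n/\eps^2)$ nonzero entries, and from $0\preceq B\preceq 3\Id$ one gets $\norm{A-\widetilde A} = \norm{B-\widetilde B}\le\big((1+\eps')^3-1\big)\norm{B}\le 21\eps'\le\eps$ for a suitable $c$. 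The running time is $\OO(\nnz{A})$ for computing the decomposition plus one call to the rank-one sparsification algorithm on $m=\OO(\nnz{A})$ vectors in $\reals^n$ with parameter $\Theta(\eps)$; substituting $m=\OO(\nnz{A})$ into the $\widetilde{\OO}(mn^2\log^3 n/\eps^2+n^4\log n/\eps^4)$ bound of that theorem --- and using $\nnz{A}\le n^2$ together with the fact that every $v_k$ is $\OO(1)$-sparse, so the inner quadratic-form evaluations are cheap --- should give the claimed $\widetilde{\OO}(\eps^{-2}\nnz{A}\,n^2\log n\,\max\{\log^2 n,1/\eps^2\})$.

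For part (a) I would sparsify the same decomposition $B=\sum_k w_k\,u_ku_k^\top$ (now written with unit $u_k$ and $w_k\ge 0$) by importance sampling: draw $T$ indices i.i.d.\ with $\Prob{k}\propto w_k$ and average the correspondingly rescaled rank-one terms, so that the expectation equals $B$. Since $\trace{B}/\norm{B}=\trace{B/\norm{B}}\le n$, a matrix Chernoff bound of Ahlswede--Winter / Rudelson--Vershynin type --- the same tool behind the Alon--Roichman construction, cf.\ \cite{chernoff:matrix_valued:AW,expander:AlonRoichman:RusLan} --- shows that $T=\OO(n\log n/\eps^2)$ samples suffice to make $\norm{\widetilde B-B}\le\eps$ with probability at least $1-1/n$; then $\widetilde A:=\widetilde B-2\Id$ has $\OO(n\log n/\eps^2)$ nonzero entries and, after rescaling $\eps$, satisfies $\norm{A-\widetilde A}\le\eps$. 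Building the sampling distribution costs $\OO(\nnz{A})$ and drawing and assembling the samples costs $\OO(T)$, i.e.\ linear in the size of the input plus output.

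The conceptual core --- the rank-one decomposition and the observation that spectral sparsification of it \emph{is} entry-wise sparsification of $A$ --- is genuinely elementary once the $e_i\pm e_j$ identities are in hand. The two places I expect to need care are: (i) the normalization, i.e.\ the shift to $B=A+2\Id$, which is what preserves diagonal dominance, keeps $\norm{B}=\OO(1)$, and converts the multiplicative guarantee $(1\pm\eps')^3$ into the additive bound $\norm{A-\widetilde A}\le\eps$ without spoiling the $\OO(n/\eps^2)$ sparsity count; and (ii) the running-time bookkeeping in part (b), where one must confirm that running the rank-one sparsification subroutine on $\OO(\nnz{A})$ highly sparse rank-one terms --- rather than on a generic collection of $m$ dense vectors --- yields exactly the stated bound.
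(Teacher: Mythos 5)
Your overall strategy is sound and close to the paper's, but you take a genuinely different route on the decomposition step. The paper (Lemma~\ref{lem:sparsif:decomp}, Theorems~\ref{thm:matrix_sparsif:rand} and~\ref{thm:matrix_sparsif:det}) writes $A = CC^\top + \diag{A} - R$, where $R_i = \sum_{j\neq i}|A_{ij}|$ and each column of $C$ is $\sqrt{|A_{ij}|}(e_i \pm e_j)$; crucially, the diagonal correction $\diag{A}-R$ is carried along \emph{exactly} and never sparsified, so only the off-diagonal PSD piece $CC^\top$ is fed to the rank-one machinery. No shift is needed, $A$ need not be PSD, and the argument extends verbatim to $\theta$-SDD matrices because the only quantities used are $\norm{CC^\top}\le\norm{A}+\infnorm{A}$ and $\frobnorm{C}^2\le 2n\infnorm{A}$. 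Your approach instead shifts to $B := A+2\Id$, uses diagonal dominance to absorb the diagonal residual $(A_{ii}+2)-\sum_{j\neq i}|A_{ij}|\ge 0$ into rank-one $e_i\otimes e_i$ terms, sparsifies \emph{all} of $B$, and un-shifts at the end. This is a clean and correct alternative for the diagonally dominant case (where a shift by $2\Id$ suffices given $\norm{A}=1$), but it obscures the more general $\theta$-SDD statement that the paper actually proves, where the required shift would scale with $\sqrt{\theta}$. Your $B\succeq \Id$, $\norm{B}\le 3$, and $\norm{A-\widetilde A}=\norm{B-\widetilde B}\le 7\eps'\norm{B}$ chain is correct, and your part~(a) argument via matrix Bernstein with $\gamma,\rho^2=\OO(n)$ is exactly the paper's, just applied to $B$ rather than $CC^\top$.

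On the running-time bookkeeping for part~(b), your hedge ``should give the claimed bound'' is warranted, and the tension you sense is real: Theorem~\ref{thm:sparsification:here} as proved in the paper gives $\widetilde{\OO}(mn^2\log^3 n/\eps^2 + n^4\log n/\eps^4)$, whose second term comes from applying the Batson--Spielman--Srivastava subroutine to the $\OO(n\log n/\eps^2)$ vectors surviving Algorithm~\ref{alg:fast:isotrop}, and that cost is $n^4\log n/\eps^4$ regardless of the sparsity of the original $v_k$ (the subroutine works with $u_i = A^{-1/2}v_i$, which are dense). Substituting $m=\OO(\nnz{A})$ therefore gives $\widetilde{\OO}(\nnz{A}n^2\log^3 n/\eps^2 + n^4\log n/\eps^4)$, which matches the paper's Theorem~\ref{thm:matrix_sparsif:det} (with $\theta=4$) but is larger than the stated $\widetilde{\OO}(\eps^{-2}\nnz{A}n^2\log n\max\{\log^2 n, 1/\eps^2\})$ whenever $\nnz{A}\ll n^2$. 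So the gap is not in your argument; it is a mild imprecision in the paper's own headline statement versus its proved bound, and your instinct that the $\OO(1)$-sparsity of $v_k$ should not by itself rescue the $n^4/\eps^4$ term is correct.
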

%
%
\paragraph{Preliminaries.}
The next discussion reviews several definitions and facts from linear algebra; for more details, see~\cite{book:matrix:Bhatia}. By $[n]$ to be the set $\{1,2,\ldots , n \}$. We denote by $\Sym^{n\times n}$ the set of symmetric matrices of size $n$. Let $x\in\reals^n$, we denote by $\diag{x}$ the diagonal matrix containing $x_1,x_2,\ldots ,x_n$. For a square matrix $M$, we also write $\diag{M}$ to denote the diagonal matrix that contains the diagonal entries of $M$. Let $A$ be an $m\times n$ matrix. $A^{(j)}$ will denote the $j$-th column of $A$ and $A_{(i)}$ the $i$-th row of $A$. We denote $\norm{A}=\max \{ \norm{Ax}~|~\norm{x} =1 \}$, $\infnorm{A} = \max_{i\in{[m]}} \sum_{j\in{[n]}} |A_{ij}|$ and by $\frobnorm{A}=\sqrt{\sum_{i,j}{A_{ij}^2}}$ the Frobenius norm of $A$. Also $\sr{A}:=\frobnorm{A}^2/\norm{A}^2$ is the \emph{stable rank} of $A$ and by $\nnz{A}$ the number of its non-zero entries. The trace of a square matrix $B$ is denoted as $\trace{B}$. We write $\J_n$ for the all-ones square matrices of size $n$. For two self-adjoint matrices $X,Y$, we say that $Y\succeq X$ if and only if $Y-X$ is a positive semi-definite (psd) matrix. Let $x\in\reals^n$, then $x\otimes x$ is the $n\times n$ matrix such that $ (x \otimes x)_{i,j} =x_i x_j$. Given any matrix $A$, its \emph{dilation} is defined as $\dil{A} = \left[\begin{matrix}
        \zeromtx      & A \\
	A^\top & \zeromtx
       \end{matrix}\right].
$
It is easy to see that $\lambda_{\max}(\dil{A}) = \norm{A}$, see e.g.~\cite[Theorem~$4.2$]{book:perturbation:stewart}.
%
\paragraph{Functions of Matrices.}
Here we review some basic facts about the matrix exponential and the hyperbolic cosine function, for more details see~\cite{book:Higham:Matrix_fcn}. All proofs of this section have been deferred to the appendix. The matrix exponential of a self-adjoint matrix $A$ is defined as $\expm{A} = \Id + \sum_{k=1}^{\infty} \frac{A^k}{k!}$. Let $A=Q\Lambda Q^\top$ be the eigendecomposition of $A$. It is easy to see that $\expm{A} = Q \expm{\Lambda} Q^\top$. For any real square matrices $A$ and $B$ of the same size that commute, i.e., $AB=BA$, we have that $ \expm{A+B} = \expm{A}\expm{B}$. In general, when $A$ and $B$ do not commute, the following estimate is known for self-adjoint matrices.
\begin{lemma}\cite{ineq:trace_exp:Golden,ineq:trace_exp:Thompson}\label{lem:ineq:golden_thompson}
For any self-adjoint matrices $A$ and $B$, $\trace{\expm{ A + B}} \leq \trace{\expm{A}\expm{B}}$.
\end{lemma}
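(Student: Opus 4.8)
The plan is the classical route: the Lie--Trotter product formula together with a trace inequality for powers of products of positive semidefinite matrices. First I would record $\expm{A+B} = \lim_{m\to\infty}\bigl(\expm{A/m}\,\expm{B/m}\bigr)^m$, valid for arbitrary square matrices: writing $\expm{A/m}\,\expm{B/m} = \Id + (A+B)/m + O(1/m^2) = \expm{(A+B)/m} + O(1/m^2)$ and using the elementary bound $\norm{X^m - Y^m} \le m\,\norm{X-Y}\max(\norm{X},\norm{Y})^{m-1}$ to compare $m$-th powers, both sides converge to the same limit. Since the trace is continuous it then suffices to restrict to $m = 2^k$ and to show that
\[
  f(k) := \trace{\bigl(\expm{A/2^k}\,\expm{B/2^k}\bigr)^{2^k}}
\]
is non-increasing in $k$, for then $\trace{\expm{A+B}} = \lim_{k\to\infty} f(k) \le f(0) = \trace{\expm{A}\,\expm{B}}$, which is the assertion.

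The monotonicity $f(k+1)\le f(k)$ is, after putting $P=\expm{A/2^{k+1}}$ and $Q=\expm{B/2^{k+1}}$ (positive definite, with $P^2=\expm{A/2^k}$ and $Q^2=\expm{B/2^k}$) and $n = 2^k$, exactly the inequality
\[
  \trace{(PQ)^{2n}} \le \trace{(P^2Q^2)^n}\qquad\text{for all positive semidefinite } P,Q .
\]
To prove this I would use that $PQ$ is similar to the positive semidefinite matrix $P^{1/2}QP^{1/2}$, so its eigenvalues $\lambda_i$ are nonnegative and $\trace{(PQ)^{2n}} = \sum_i\lambda_i^{2n}$, while $P^2Q^2$ is similar to $PQ^2P = (PQ)(PQ)^{*}$, so $\trace{(P^2Q^2)^n} = \sum_i\sigma_i^{2n}$ with $\sigma_i$ the singular values of $PQ$. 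Thus the inequality reduces to $\sum_i\lambda_i^{2n}\le\sum_i\sigma_i^{2n}$, which follows from Weyl's majorization $\prod_{i\le j}\lambda_i\le\prod_{i\le j}\sigma_i$ (all $j$) fed through the convex increasing map $t\mapsto t^{2n}$ of the logarithms. The invertibility used for the similarities is automatic since $P,Q$ are exponentials here; the general positive semidefinite case follows by a $P+\delta\Id$, $\delta\to 0$ limiting argument.

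The only step with real content is this last one --- that eigenvalues are dominated by singular values in the required sense; the case $n=1$ already exhibits the mechanism, since there $\trace{P^2Q^2} = \frobnorm{PQ}^2 = \sum_i\sigma_i^2$ while $\trace{(PQ)^2} = \sum_i\lambda_i^2 \le \sum_i\sigma_i^2$. Everything else --- the product formula, continuity of the trace, and the cyclic and similarity identities for traces of products --- is routine. As the statement is classical I would in any case cite Golden and Thompson, but the argument above is the one I would reconstruct if needed.
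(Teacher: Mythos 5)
The paper cites this lemma to Golden and Thompson and gives no proof of its own, so there is no in-text argument to compare against; I assess your reconstruction on its own terms. Your route is one of the standard proofs of the Golden--Thompson inequality and it is correct. The Lie--Trotter product formula reduces the claim to monotonicity of $f(k) = \trace{\bigl(\expm{A/2^k}\expm{B/2^k}\bigr)^{2^k}}$, and the monotone step $\trace{(PQ)^{2n}} \le \trace{(P^2Q^2)^n}$ for positive definite $P,Q$ follows, as you argue, from three observations: $PQ$ is similar to the positive semidefinite $P^{1/2}QP^{1/2}$, so its eigenvalues $\lambda_i$ are nonnegative and $\trace{(PQ)^{2n}} = \sum_i \lambda_i^{2n}$; $(P^2Q^2)^n$ is similar (conjugating by $P^{-1}$) to $\bigl(PQ^2P\bigr)^n = \bigl((PQ)(PQ)^*\bigr)^n$, whose trace is $\sum_i \sigma_i^{2n}$ with $\sigma_i$ the singular values of $PQ$; and Weyl's log-majorization $\prod_{i\le j}\lambda_i \le \prod_{i\le j}\sigma_i$ fed through the convex increasing map $t \mapsto e^{2nt}$ gives $\sum_i\lambda_i^{2n}\le\sum_i\sigma_i^{2n}$. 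The auxiliary details also check out: the similarity transforms are legitimate because $P,Q$ are matrix exponentials and hence invertible, and your bound $\norm{X^m - Y^m} \le m\,\norm{X-Y}\,\max(\norm{X},\norm{Y})^{m-1}$ does establish Lie--Trotter here since $\max(\norm{X},\norm{Y})^{m-1}$ stays uniformly bounded in $m$ (by $\e^{\norm{A}+\norm{B}}$) while $\norm{X-Y} = O(1/m^2)$. No gaps.
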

\ignore{
\begin{lemma}
For any self-adjoint matrix $A\in{\reals^{n\times n}}$ such that $\zeromtx \preceq A \preceq \Id$, and any $\alpha,\beta \in\reals$,
\[\expm{\alpha A + \beta(\Id-A)} \preceq A \exp(\alpha) + (\Id-A) \exp(\beta).\]
\end{lemma}
The Trotter product formula~\cite{expm:Trotter} states that
\begin{theorem}
Let $A,B$ be two self-adjoint matrices. Then
\[\expm{A+B} = \lim_{l\to \infty} \left(\expm{A/l} \expm{B/l}\right)^l\]
\end{theorem}
}We will also need the following fact about matrix exponential for rank one matrices.
\begin{lemma}\label{lem:expm:outerprod}
	Let $x$ be a non-zero vector in $\reals^n$. Then $ \expm{x \otimes x} = \Id_n + \frac{\e^{\norm{x}^2} - 1}{\norm{x}^2} x\otimes x$.
Similarly, $\expm{-x \otimes x} = \Id_n - \frac{1 - \e^{-\norm{x}^2}}{\norm{x}^2} x \otimes x$.
\end{lemma}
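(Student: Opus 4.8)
The plan is to exploit the fact that $M := x\otimes x$ is a rank-one self-adjoint matrix with a single nonzero eigenvalue, so that every power of $M$ is a scalar multiple of $M$ itself. Concretely, writing $M = x x^\top$ we have $M^2 = x(x^\top x)x^\top = \norm{x}^2\, M$, and an immediate induction gives $M^k = \norm{x}^{2(k-1)}\, M$ for every integer $k\geq 1$. This is precisely where the hypothesis $x\neq 0$ is used: the scalar $\norm{x}^2$ that will appear in the denominator is then nonzero, so dividing by it is legitimate.

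Substituting this identity into the defining series $\expm{M} = \Id + \sum_{k\geq 1} M^k/k!$ and pulling $M$ out of the sum, one obtains
\[
\expm{x\otimes x} = \Id + \Big(\sum_{k\geq 1}\frac{\norm{x}^{2(k-1)}}{k!}\Big)\, x\otimes x = \Id + \frac{1}{\norm{x}^2}\Big(\sum_{k\geq 1}\frac{\norm{x}^{2k}}{k!}\Big)\, x\otimes x = \Id + \frac{\e^{\norm{x}^2}-1}{\norm{x}^2}\, x\otimes x,
\]
using $\sum_{k\geq 1} t^k/k! = \e^{t}-1$ with $t=\norm{x}^2$. The second identity follows in exactly the same way after noting $(-M)^k = (-1)^k\norm{x}^{2(k-1)} M$, so that the bracketed sum becomes $\sum_{k\geq 1}(-\norm{x}^2)^k/k! = \e^{-\norm{x}^2}-1$; rewriting $\e^{-\norm{x}^2}-1 = -(1-\e^{-\norm{x}^2})$ gives the stated form.

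There is essentially no obstacle here; the only points requiring a little care are the index shift in the power series (so that the $\norm{x}^2$ in the denominator is matched correctly) and recording the hypothesis $x\neq 0$. A one-line alternative, if preferred, is the spectral route: set $u = x/\norm{x}$, complete $u$ to an orthonormal basis of $\reals^n$, observe that in that basis $x\otimes x = \norm{x}^2\,(u\otimes u)$ is diagonal with spectrum $\{\norm{x}^2,0,\dots,0\}$, and read off $\expm{x\otimes x}$ as the diagonal matrix with spectrum $\{\e^{\norm{x}^2},1,\dots,1\}$, i.e. $\Id + (\e^{\norm{x}^2}-1)(u\otimes u) = \Id + \tfrac{\e^{\norm{x}^2}-1}{\norm{x}^2}\, x\otimes x$; the sign variant is identical.
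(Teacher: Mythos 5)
Your proof is correct and follows essentially the same route as the paper: both use the identity $(x\otimes x)^k = \norm{x}^{2(k-1)}\, x\otimes x$ and substitute it into the defining power series of the matrix exponential. The spectral alternative you mention at the end is a nice sanity check but does not differ in substance.
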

Let us define the \emph{matrix hyperbolic cosine} function of a self-adjoint matrix $A$ as $ \coshm{A} := (\expm{A} + \expm{-A}) /2$. Next, we state a few properties of the matrix hyperbolic cosine.
\begin{lemma}\label{lem:dil_vs_expm}
Let $A$ be a self-adjoint matrix. Then $\trace{\expm{\dil{A}}} = 2 \trace{ \coshm{A} }$.
\end{lemma}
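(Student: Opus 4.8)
The plan is to compute $\expm{\dil{A}}$ in block form and then read off its trace. Write $D := \dil{A} = \left[\begin{matrix}\zeromtx & A\\ A & \zeromtx\end{matrix}\right]$, where we used $A^\top = A$ since $A$ is self-adjoint. A one-line induction on the block product shows that
\[
D^{2k} = \left[\begin{matrix}A^{2k} & \zeromtx\\ \zeromtx & A^{2k}\end{matrix}\right],
\qquad
D^{2k+1} = \left[\begin{matrix}\zeromtx & A^{2k+1}\\ A^{2k+1} & \zeromtx\end{matrix}\right],
\qquad k\geq 0:
\]
squaring $D$ yields the block-diagonal matrix with both diagonal blocks equal to $A^2$, and multiplying once more by $D$ moves the power into the off-diagonal position, which propagates the pattern.

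Next, substitute these powers into the absolutely convergent series $\expm{D} = \Id_{2n} + \sum_{k\geq 1} D^k/k!$ and group the even and odd indices. The even terms contribute only to the two diagonal blocks, each receiving $\sum_{k\geq 0} A^{2k}/(2k)!$, while the odd terms contribute only to the two off-diagonal blocks. By the definition of the matrix exponential, the odd powers cancel in the half-sum $\tfrac12(\expm{A}+\expm{-A})$, so $\sum_{k\geq 0} A^{2k}/(2k)! = \coshm{A}$; hence
\[
\expm{\dil{A}} \;=\; \left[\begin{matrix} \coshm{A} & S \\ S & \coshm{A} \end{matrix}\right],
\qquad S := \sum_{k\geq 0} \frac{A^{2k+1}}{(2k+1)!},
\]
and the precise form of the off-diagonal block $S$ is irrelevant. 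Taking the trace of a block matrix involves only its diagonal blocks, so $\trace{\expm{\dil{A}}} = \trace{\coshm{A}} + \trace{\coshm{A}} = 2\,\trace{\coshm{A}}$, as claimed.

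There is essentially no real obstacle here; the only points requiring (routine) care are the justification for rearranging the absolutely convergent exponential series by parity and the base case of the induction for $D^k$. An alternative, equally short route avoids the series altogether: if $Av = \lambda v$ then $(v,v)^\top$ and $(v,-v)^\top$ are eigenvectors of $\dil{A}$ with eigenvalues $\lambda$ and $-\lambda$ respectively, so the multiset of eigenvalues of $\dil{A}$ is that of $A$ together with its negatives, and therefore $\trace{\expm{\dil{A}}} = \sum_i (\e^{\lambda_i} + \e^{-\lambda_i}) = 2\sum_i \cosh(\lambda_i) = 2\,\trace{\coshm{A}}$.
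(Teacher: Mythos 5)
Your proof is correct and the main argument is essentially the same as the paper's: compute the block form of the even and odd powers of $\dil{A}$, observe that only the even (block-diagonal) part contributes to the trace, and identify that part with $\coshm{A}$. The eigenvalue-based alternative you sketch at the end is also valid and equally standard, but no comparison is needed since your primary route matches the paper's.
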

\begin{lemma}\label{lem:coshm_with_proj}
Let $A$ be a self-adjoint matrix and $P$ be a projector matrix that commutes with $A$, i.e., $PA=AP$. Then $\coshm{PA} = P\coshm{A} + \Id - P$.
\end{lemma}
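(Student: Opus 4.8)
The plan is to work directly from the power-series definition of the matrix exponential. The two facts that drive everything are: (i) since $P$ is a projector, $P^k = P$ for every integer $k \geq 1$; and (ii) since $PA = AP$, the matrices $P$ and $A^k$ commute for all $k$, so in particular $(PA)^k = P^k A^k = P A^k$ for $k \geq 1$. Using these, I would expand
\[
\expm{PA} = \Id + \sum_{k=1}^{\infty} \frac{(PA)^k}{k!} = \Id + P\sum_{k=1}^{\infty} \frac{A^k}{k!} = \Id + P\bigl(\expm{A} - \Id\bigr) = P\expm{A} + \Id - P,
\]
where pulling $P$ out of the (absolutely convergent) sum is justified by linearity and continuity. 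The same computation applied to $-PA = P(-A)$, which also satisfies $(-PA)^k = P(-A)^k$ for $k \geq 1$, gives $\expm{-PA} = P\expm{-A} + \Id - P$.

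Adding the two identities and dividing by $2$ then yields
\[
\coshm{PA} = \frac{\expm{PA} + \expm{-PA}}{2} = P\cdot\frac{\expm{A} + \expm{-A}}{2} + \Id - P = P\coshm{A} + \Id - P,
\]
which is the claim. There is essentially no obstacle here beyond bookkeeping: the only point requiring a word of care is the legitimacy of rearranging/factoring the infinite series, which follows from absolute convergence of the exponential series in any submultiplicative matrix norm.

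As an alternative (and as a sanity check), one can instead use that $P$ and $A$ are commuting self-adjoint matrices and hence simultaneously unitarily diagonalizable; restricting to the eigenspaces of $P$ (eigenvalue $1$, where $PA = A$, and eigenvalue $0$, where $PA = 0$) reduces the identity to the two trivial cases $\coshm{A}$ and $\coshm{\zeromtx} = \Id$, which match $P\coshm{A} + \Id - P$ on each block. I would present the series argument as the main proof since it avoids invoking self-adjointness of $P$ explicitly and generalizes the computation style already used for Lemma~\ref{lem:expm:outerprod}.
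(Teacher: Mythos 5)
Your proposal is correct and follows essentially the same route as the paper: expanding $\expm{PA}$ as a power series, using $(PA)^k = PA^k$ to obtain $\expm{PA} = P\expm{A} + \Id - P$, and then applying the same identity to $-A$ and averaging. The paper's proof is just a terser version of your main argument, so no further changes are needed.
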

\begin{lemma}\cite[Lemma~$2.2$]{Tsuda05}\label{lem:trace:incr_psd}
For any positive semi-definite self-adjoint matrix $A$ of size $n$ and any two self-adjoint matrices $B,C$ of size $n$, $B\preceq C$ implies $\trace{AB} \leq \trace{AC}$.
\end{lemma}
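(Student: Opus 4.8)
The plan is to reduce the statement to the single fact that the trace of a product of two positive semi-definite matrices is non-negative, and then apply that fact to $A$ and $C-B$. Concretely, since $B \preceq C$ means by definition that $D := C - B$ is positive semi-definite, linearity of the trace gives $\trace{AC} - \trace{AB} = \trace{A(C-B)} = \trace{AD}$, so it suffices to show $\trace{AD} \geq 0$ whenever $A \succeq \zeromtx$ and $D \succeq \zeromtx$.

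To establish $\trace{AD}\ge 0$ I would use the positive semi-definite square root of $A$: write $A = A^{1/2}A^{1/2}$ with $A^{1/2}$ self-adjoint, and then use cyclicity of the trace to get $\trace{AD} = \trace{A^{1/2}A^{1/2}D} = \trace{A^{1/2} D A^{1/2}}$. The matrix $A^{1/2} D A^{1/2}$ is a congruence transform of the positive semi-definite matrix $D$, hence itself positive semi-definite, so all of its eigenvalues are non-negative and therefore its trace is non-negative. This yields $\trace{AD}\ge 0$ and the lemma follows. An alternative route, if one prefers to avoid square roots, is to diagonalize $D = \sum_i \lambda_i\, v_i \otimes v_i$ with $\lambda_i \geq 0$ and orthonormal $v_i$, and observe $\trace{AD} = \sum_i \lambda_i \trace{A (v_i \otimes v_i)} = \sum_i \lambda_i \ip{v_i}{A v_i} \geq 0$, since each $\ip{v_i}{A v_i}\ge 0$ by positive semi-definiteness of $A$.

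There is essentially no real obstacle here; the only point requiring minor care is justifying the existence and self-adjointness of $A^{1/2}$ (or, in the alternative argument, the non-negativity of the spectral weights $\lambda_i$ of $D$), both of which are immediate from the spectral theorem for self-adjoint matrices together with the definition of positive semi-definiteness. Since the lemma is quoted from \cite{Tsuda05}, it would also be acceptable to simply cite it, but the two-line argument above makes the excerpt self-contained.
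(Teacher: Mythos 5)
Your proof is correct, and in fact the paper does not prove this lemma at all --- it is simply imported by citation from Tsuda et al., so there is no in-paper argument to compare against. Your reduction to $\trace{AD}\ge 0$ for psd $A$ and $D:=C-B$, via the square root $A^{1/2}$ and cyclicity of the trace (or equivalently via the spectral decomposition of $D$), is the standard and complete argument, and either of your two variants would make the statement self-contained.
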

\vspace*{-4.0ex}
\section{Balancing Matrices: a matrix hyperbolic cosine algorithm}\label{sec:derand_Bernstein}
We briefly describe Spencer's balancing vectors game and then generalize it to the matrix-valued setting~\cite[Lecture~$4$]{book:probmeth:Spencer}. Let a two-player perfect information game between Alice and Bob. The game consists of $n$ rounds. On the $i$-th round, Alice sends a vector $v_i$ with $\infnorm{v_i}\leq 1$ to Bob, and Bob has to decide on a sign $s_i\in{\{\pm 1\}}$ knowing only his previous choices of signs and $\{v_{k}\}_{k < i}$.
At the end of the game, Bob pays Alice $\infnorm{\sum_{i=1}^{n} s_i v_i}$. We call the latter quantity, the \emph{value} of the game.
It has been shown in~\cite{Spencer:balanc_vct} that, in the above limited online variant, Spencer's six standard deviations bound~\cite{sixDeviation:Spencer} does not hold and the best value that we can hope for is $\Omega(\sqrt{n \ln n})$. Such a bound is easy to obtain by picking the signs $\{s_i\}$ uniformly at random. Indeed, a direct application of Azuma's inequality to each coordinate of the random vector $\sum_{i=1}^{n} s_i v_i$ together with a union bound over all the coordinates gives a bound of $\OO(\sqrt{n\ln n})$.
Now, we generalize the balancing vectors game to the matrix-valued setting. That is, Alice now sends to Bob a sequence $\{M_i\}$ of self-adjoint matrices of size $n$ with\footnote{A curious reader may ask him/her-self why the operator norm is the right choice. It turns out the the operator norm is the correct matrix-norm analog of the $\ell_\infty$ vector-norm, viewed as the \emph{infinity} Schatten norm on the space of matrices.} $\norm{M_i}\leq 1$, and Bob has to pick a sequence of signs $\{s_i\}$ so that, at the end of the game, the quantity $\norm{\sum_{i=1}^{n} s_i M_i} $ is as small as possible. Notice that the balancing vectors game is a restriction of the balancing matrices game in which Alice is allowed to send only diagonal matrices with entries bounded in absolute value by one. Similarly to the balancing vectors game, using matrix-valued concentration inequalities, one can prove that Bob has a randomized strategy that achieves at most $\OO(\sqrt{n\ln n})$ w.p. at least $1/2$. Indeed,
\begin{lemma}\label{lem:balanc_mtx}
Let $M_i \in \Sym^{n\times n}$, $\norm{M_i} \leq 1$, $1 \leq i \leq n$. Pick $s_i^*\in{\{\pm 1\} }$ uniformly at random for every $i\in{[n]}$. Then $\norm{\sum_{i=1}^{n} s_i^* M_i} = \OO(\sqrt{  n \ln n})$ w.p. at least $1/2$.
\end{lemma}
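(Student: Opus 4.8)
The plan is to use the matrix hyperbolic cosine as a Bernstein-type potential function: first control $\EE\,\trace{\coshm{tS}}$ for a well-chosen scale $t>0$, where $S:=\sum_{i=1}^{n}s_i^* M_i$, and then translate a trace bound into an operator-norm bound. For the translation, note that $\cosh$ is even, increasing on $[0,\infty)$, and satisfies $\cosh x\ge \tfrac12 e^{|x|}$, while every eigenvalue of $\coshm{tS}$ is at least $1$; hence
\[
	\trace{\coshm{tS}}\ \ge\ \cosh\!\big(t\norm{S}\big)\ \ge\ \tfrac12\,e^{t\norm{S}},
	\qquad\text{so}\qquad
	\norm{S}\ \le\ \tfrac1t\ln\!\big(2\,\trace{\coshm{tS}}\big).
\]
Thus it suffices to show $\trace{\coshm{tS}}$ is small with probability at least $1/2$, which by Markov's inequality follows once we bound $\EE\,\trace{\coshm{tS}}$.

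To bound the expectation I would peel the signs off one at a time. Put $S_k:=\sum_{i=1}^{k}s_i^*M_i$, so that $S_k=S_{k-1}+s_k^*M_k$ with $S_{k-1}$ independent of $s_k^*$. Conditioning on $s_1^*,\dots,s_{k-1}^*$ and applying Golden--Thompson (Lemma~\ref{lem:ineq:golden_thompson}) to the self-adjoint matrices $tS_{k-1}$ and $ts_k^*M_k$,
\[
	\EE_{s_k^*}\trace{\expm{tS_k}}\ \le\ \EE_{s_k^*}\trace{\expm{tS_{k-1}}\expm{ts_k^*M_k}}\ =\ \trace{\expm{tS_{k-1}}\coshm{tM_k}},
\]
using linearity of the trace and $\EE_{s_k^*}\expm{ts_k^*M_k}=\coshm{tM_k}$. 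Since $\norm{M_k}\le 1$, every eigenvalue of $\coshm{tM_k}$ is at most $\cosh t\le e^{t^2/2}$, i.e.\ $\coshm{tM_k}\preceq e^{t^2/2}\Id$; as $\expm{tS_{k-1}}\succeq\zeromtx$, Lemma~\ref{lem:trace:incr_psd} gives $\trace{\expm{tS_{k-1}}\coshm{tM_k}}\le e^{t^2/2}\trace{\expm{tS_{k-1}}}$. Taking the outer expectation and iterating $n$ times down to $S_0=\zeromtx$ (where $\trace{\expm{\zeromtx}}=n$) yields $\EE\,\trace{\expm{tS}}\le n\,e^{nt^2/2}$. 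Running the identical argument with $-t$ in place of $t$ and averaging, $\EE\,\trace{\coshm{tS}}\le n\,e^{nt^2/2}$.

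Combining the two parts: by Markov's inequality, with probability at least $1/2$ we have $\trace{\coshm{tS}}\le 2n\,e^{nt^2/2}$, and then $\norm{S}\le \tfrac1t\ln\!\big(4n\,e^{nt^2/2}\big)=\tfrac{\ln(4n)}{t}+\tfrac{nt}{2}$. Choosing $t=\sqrt{2\ln(4n)/n}$ balances the two terms and gives $\norm{S}\le\sqrt{2n\ln(4n)}=\OO(\sqrt{n\ln n})$, which is the claimed bound.

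I do not foresee a serious obstacle; the steps that need care are (i) invoking Golden--Thompson so the last sign can be decoupled even though $S_{k-1}$ and $M_k$ need not commute, (ii) turning $\coshm{tM_k}\preceq e^{t^2/2}\Id$ into a trace inequality, which is exactly what Lemma~\ref{lem:trace:incr_psd} provides, and (iii) the scalar estimate $\cosh x\le e^{x^2/2}$, immediate from comparing Taylor coefficients since $(2k)!\ge 2^k k!$. An equivalent route is to run the same peeling on the dilations $\dil{M_i}$ and use Lemma~\ref{lem:dil_vs_expm}, but working directly with $\coshm{\cdot}$ is cleaner here since the $M_i$ are already self-adjoint.
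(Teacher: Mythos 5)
Your proof is correct, but it takes a genuinely different route from the paper's. The paper dispatches this lemma as a black-box application of the matrix Azuma inequality (it cites Theorem~$7.1$ of~\cite{chernoff:matrix_valued:Tropp}, sets up the difference sequence $f_j(k)=(2(k-1)-1)M_j$, invokes the tail bound with $\eps=\sqrt{10\ln(4n)/n}$, and rescales). You instead prove the concentration bound from scratch by the matrix Laplace transform method: Golden--Thompson to peel off one sign at a time, the scalar estimate $\cosh t\le e^{t^2/2}$ combined with Lemma~\ref{lem:trace:incr_psd} to control each step, Markov's inequality on $\trace{\coshm{tS}}$, and the spectral lower bound $\trace{\coshm{tS}}\ge\tfrac12 e^{t\norm{S}}$ to convert a trace bound into an operator-norm bound, optimizing $t$ at the end. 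The paper's route is shorter and yields a slightly tighter constant for free by leaning on a known theorem. Your route is self-contained, uses only the lemmas already stated in the paper's preliminaries, and --- notably --- is essentially the same argument (Golden--Thompson plus the $\coshm{\cdot}$ potential, with a trace-to-norm conversion at the end) that the paper later uses to prove its main Theorem~\ref{thm:hypercosine:main}. In that sense your proof makes the logical kinship between the randomized existence claim and its subsequent derandomization much more transparent: the greedy algorithm is literally the conditional-expectation derandomization of the exact argument you wrote. One small cosmetic point: the observation that every eigenvalue of $\coshm{tS}$ is at least $1$ is true but unused; the bound $\trace{\coshm{tS}}\ge\cosh(t\norm{S})$ needs only that the trace is a sum of nonnegative terms one of which is $\cosh(t\norm{S})$.
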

Now, let's assume that Bob wants to achieve the above probabilistic guarantees using a \emph{deterministic} strategy. Is it possible? We answer this question in the affirmative by generalizing Spencer's hyperbolic cosine algorithm (and its proof) to the matrix-valued setting. We call the resulting algorithm \emph{matrix hyperbolic cosine} (Algorithm~\ref{alg:derand:Bernstein}). It is clear that this simple greedy algorithm implies a deterministic strategy for Bob that achieves the probabilistic guarantees of Lemma~\ref{lem:balanc_mtx} (set $f_j\sim s_j M_j$, $t=n$ and $\eps = \OO(\sqrt{ \ln n / n})$ and notice that $\gamma,\rho^2$ are at most one).
Algorithm~\ref{alg:derand:Bernstein} requires an extra assumption on its random matrices compared to Spencer's original algorithm. That is, we assume that our random matrices have uniformly bounded their ``matrix variance'', denoted by $\rho^2$. This requirement is motivated by the fact that in the applications that are studied in this paper such an assumption translates bounds that depend quadratically on the matrix dimensions to bounds that depend linearly on the dimensions.
We will need the following technical lemma for proving the main result of this section, which is a Bernstein type argument generalized to the matrix-valued setting~ \cite{chernoff:matrix_valued:Tropp}.
\begin{lemma}\label{lem:bounding_w}
Let $f:[m] \to \Sym^{n\times n}$ with $\norm{f(i)} \leq \gamma$ for all $i\in{[m]}$. Let $X$ be a random variable over $[m]$ such that $\EE{f(X)}=\zeromtx$ and $\norm{\EE f(X)^2 } \leq \rho^2$. Then, for any $\theta >0$, $\norm{\EE [ \expm{ \dil{ \theta f(X)} }]}\ \leq\ \exp\left( \rho^2( \e^{\theta \gamma } -1 - \theta \gamma )/ \gamma^2\right).$ In particular, for any $0 < \eps < 1$, setting $\theta = \eps /\gamma$ implies that $\EE [ \expm{ \dil{ \eps f(X) / \gamma} }] \preceq \e^{\eps^2 \rho^2 / \gamma^2} \Id_{2n}$.
\end{lemma}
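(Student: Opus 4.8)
The plan is to run the standard matrix Bernstein argument (in the style of Tropp) through the ``transfer rule'' for self-adjoint matrices, combined with the block structure of the dilation, so that the operator-norm bound pops out essentially for free.

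\emph{Step 1: a scalar inequality.} Fix $\theta>0$ and consider $h(y):=(\e^y-1-y)/y^2$ (with $h(0)=1/2$). From the integral form of Taylor's remainder, $\e^y-1-y=y^2\int_0^1(1-t)\,\e^{ty}\,dt$, so $h(y)=\int_0^1(1-t)\,\e^{ty}\,dt$, which is nondecreasing in $y$ because each integrand is. Hence for every real $x$ with $|x|\le\gamma$ we have $h(\theta x)\le h(\theta\gamma)$, and multiplying through by $x^2>0$ gives $\e^{\theta x}\le 1+\theta x + c\,x^2$, where $c:=(\e^{\theta\gamma}-1-\theta\gamma)/\gamma^2$.

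\emph{Step 2: lift to matrices.} Put $Y:=\dil{\theta f(X)}$. Since $f(X)\in\Sym^{n\times n}$ and $\norm{\dil{f(X)}}=\norm{f(X)}\le\gamma$, the matrix $Y$ is self-adjoint with spectrum in $[-\theta\gamma,\theta\gamma]$, so the transfer rule applied to Step 1 yields $\expm{Y}\preceq \Id_{2n}+Y+c\,Y^2/\theta^2$, i.e. $\expm{\theta\dil{f(X)}}\preceq \Id_{2n}+\theta\dil{f(X)}+c\,\dil{f(X)}^2$. Because $f(X)$ is self-adjoint, $\dil{f(X)}^2$ is the block-diagonal matrix with two copies of $f(X)^2$ on the diagonal. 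Taking expectations and using $\EE\,\dil{f(X)}=\dil{\EE f(X)}=\zeromtx$ gives $\EE[\expm{\theta\dil{f(X)}}]\preceq \Id_{2n}+c\,B$, where $B$ is block-diagonal with two copies of $\EE f(X)^2$.

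\emph{Step 3: bound the variance term and conclude.} The matrix $\EE f(X)^2$ is positive semidefinite with operator norm at most $\rho^2$, hence $\EE f(X)^2\preceq\rho^2\Id_n$ and therefore $B\preceq\rho^2\Id_{2n}$. Combining, $\EE[\expm{\theta\dil{f(X)}}]\preceq(1+c\rho^2)\Id_{2n}\preceq\e^{c\rho^2}\Id_{2n}$ by $1+a\le\e^a$. The left-hand side is positive semidefinite (an average of matrix exponentials), so its operator norm equals its top eigenvalue, which is at most $\e^{c\rho^2}=\exp\!\big(\rho^2(\e^{\theta\gamma}-1-\theta\gamma)/\gamma^2\big)$; this is the first assertion. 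For the ``in particular'' part, set $\theta=\eps/\gamma$, so $\theta\gamma=\eps$ and $\dil{\theta f(X)}=\dil{\eps f(X)/\gamma}$; the elementary bound $\e^{\eps}-1-\eps\le(\e-2)\eps^2\le\eps^2$ for $0<\eps<1$ (compare the exponential tails term by term) replaces $\e^{c\rho^2}\Id_{2n}$ by the larger $\e^{\eps^2\rho^2/\gamma^2}\Id_{2n}$ in the Loewner order.

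The only point needing any care is the monotonicity of $h$ (equivalently, that the quadratic upper bound in Step 1 holds for negative $x$ as well), which the integral representation handles cleanly; everything else is bookkeeping with the dilation's block structure and the hypothesis $\EE f(X)=\zeromtx$.
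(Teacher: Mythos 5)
Your proof is correct, and it is essentially the standard matrix Bernstein moment bound (scalar quadratic upper bound on $\e^{\theta x}$ over $[-\gamma,\gamma]$ via monotonicity of $(\e^y-1-y)/y^2$, the transfer rule, the zero-mean and variance hypotheses, and $\Id + A \preceq \e^{A}$) that the paper itself relies on by citing Tropp rather than reproving. The only cosmetic slip is that in Step 1 one multiplies through by $(\theta x)^2$ rather than $x^2$ (and handles $x=0$ trivially), which does not affect the resulting inequality.
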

%
Now we are ready to prove the correctness of the matrix hyperbolic cosine algorithm.
\vspace*{-4.0ex}
\begin{algorithm}{}
	\caption{Matrix Hyperbolic Cosine}\label{alg:derand:Bernstein}
\begin{algorithmic}[1]
\Procedure{Matrix-Hyperbolic}{$\{f_j\}$, $\eps$, $t$}\Comment{$f_j:[m] \to \Sym^{n\times n}$ as in Theorem~\ref{thm:hypercosine:main}, $0 < \eps < 1$.}
\State Set $\theta = \eps /\gamma $
\For {$i=1$ to $t$}
	\State Compute $x_i^*\in{[m]}$: $ x_i^* = \argmin_{k\in{[m]}}\trace{\coshm{ \theta \sum_{j=1}^{i-1} f_j(x_j^*) + \theta f_i(k) }} $
\EndFor
\State \textbf{Output:} $t$ indices $x_1^*, x_2^*, \ldots ,x_t^*$ such that $\norm{ \frac1{t} \sum_{j=1}^{t} f_j(x_j^*) } \leq \frac{\gamma \ln( 2n)}{t\eps } + \frac{\eps\rho^2}{\gamma} $
\EndProcedure
\end{algorithmic}
\end{algorithm}
\vspace*{-4.0ex}
\begin{theorem}\label{thm:hypercosine:main}
Let $f_j:[m] \to \Sym^{n\times n}$ with $\norm{f_j(i)} \leq \gamma$ for all $i\in{[m]}$ and $j=1,2,\ldots$. Suppose that there exists independent random variables $X_1,X_2,\ldots $ over $[m]$ such that $\EE{f_j(X_j)}=\zeromtx$ and $\norm{\EE f_j(X_j)^2 } \leq \rho^2$. Algorithm~\ref{alg:derand:Bernstein} with input $\{f_j\},\eps, t$ outputs a set of indices $\{x_j^*\}_{j\in{[t]}}$ over $[m]$ such that $ \norm{ \frac1{t}\sum_{j=1}^{t} f_j(x_j^*)} \leq \frac{ \gamma \ln (2n)}{t\eps} +  \frac{\eps \rho^2}{\gamma}.$
\end{theorem}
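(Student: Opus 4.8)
The plan is to run the standard potential-function argument, using the trace of the matrix hyperbolic cosine of the partial sums as the potential. Define, for each step $i$, the quantity $\Phi_i = \trace{\coshm{\theta \sum_{j=1}^{i} f_j(x_j^*)}}$, and also the initial value $\Phi_0 = \trace{\coshm{\zeromtx}} = n$. The algorithm greedily picks $x_i^*$ to minimize $\Phi_i$ given $x_1^*,\ldots,x_{i-1}^*$. The heart of the argument is to show that the greedy (minimum) choice does no worse than the \emph{average} over the distribution $X_i$, i.e. that $\Phi_i \leq \EE_{X_i}\trace{\coshm{\theta \sum_{j<i} f_j(x_j^*) + \theta f_i(X_i)}}$, and then to bound this expectation in terms of $\Phi_{i-1}$ times a small multiplicative factor. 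This is where Lemma~\ref{lem:bounding_w} enters.

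First I would reduce the hyperbolic cosine to a matrix exponential via the dilation trick: by Lemma~\ref{lem:dil_vs_expm}, $2\coshm{A} $ has the same trace as $\expm{\dil{A}}$, and since $\dil{\cdot}$ is linear, $\dil{\sum_j \theta f_j(x_j^*)} = \sum_j \theta\,\dil{f_j(x_j^*)}$. So it suffices to control $\trace{\expm{\sum_{j\le i}\theta\,\dil{f_j(x_j^*)}}}$. Writing $S_{i-1} = \sum_{j<i}\theta\,\dil{f_j(x_j^*)}$, I would apply Golden--Thompson (Lemma~\ref{lem:ineq:golden_thompson}) to split off the last term:
\[
\EE_{X_i}\trace{\expm{S_{i-1} + \theta\,\dil{f_i(X_i)}}} \leq \trace{\expm{S_{i-1}}^{1/2}\, \EE_{X_i}\!\big[\expm{\theta\,\dil{f_i(X_i)}}\big]\, \expm{S_{i-1}}^{1/2}}.
\]
Now Lemma~\ref{lem:bounding_w}, with $\theta = \eps/\gamma$, gives $\EE_{X_i}[\expm{\theta\,\dil{f_i(X_i)}}] \preceq \e^{\eps^2\rho^2/\gamma^2}\Id_{2n}$, and then Lemma~\ref{lem:trace:incr_psd} (with the psd matrix $\expm{S_{i-1}}$) yields the recursion $\EE_{X_i}\trace{\expm{S_{i-1}+\theta\dil{f_i(X_i)}}} \leq \e^{\eps^2\rho^2/\gamma^2}\trace{\expm{S_{i-1}}}$, hence $\Phi_i \leq \e^{\eps^2\rho^2/\gamma^2}\Phi_{i-1}$ for the greedy choice (since the minimum is at most the average). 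The existence of the random variable $X_i$ is used only to guarantee that \emph{some} choice of index achieves the average bound; the algorithm itself never samples.

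Iterating the recursion $t$ times gives $\Phi_t \leq n\, \e^{t\eps^2\rho^2/\gamma^2}$, and therefore $\trace{\expm{\dil{\theta\sum_{j\le t}f_j(x_j^*)}}} \leq 2n\,\e^{t\eps^2\rho^2/\gamma^2}$. Since every eigenvalue is dominated by the trace of the exponential, $\lambda_{\max}(\dil{\theta\sum_j f_j(x_j^*)}) \leq \ln(2n) + t\eps^2\rho^2/\gamma^2$, and using $\lambda_{\max}(\dil{A}) = \norm{A}$ this reads $\theta\norm{\sum_j f_j(x_j^*)} \leq \ln(2n) + t\eps^2\rho^2/\gamma^2$. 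Dividing by $\theta t = \eps t/\gamma$ gives exactly $\norm{\frac1t\sum_{j\le t}f_j(x_j^*)} \leq \frac{\gamma\ln(2n)}{t\eps} + \frac{\eps\rho^2}{\gamma}$, as claimed. I expect the main obstacle to be the careful handling of the Golden--Thompson splitting together with the monotonicity lemma — in particular making sure the expectation can be pulled inside before invoking $\EE\expm{\cdot}\preceq \e^{\eps^2\rho^2/\gamma^2}\Id$ — rather than any hard inequality; once the recursion is set up, the rest is bookkeeping. One should also double-check that Lemma~\ref{lem:bounding_w} is stated with the dilation already built in (it is), so no separate symmetrization of the $f_j$ is needed.
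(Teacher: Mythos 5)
Your proposal is correct and follows essentially the same route as the paper's proof: the potential $\trace{\coshm{\cdot}}$, the averaging argument that the greedy minimum is at most the expectation, the dilation plus Golden--Thompson plus Lemma~\ref{lem:bounding_w} plus trace monotonicity to obtain the multiplicative recursion, and the final lower bound $\e^{\norm{C}}\leq 2\trace{\coshm{C}}$. The only cosmetic differences are your symmetrized form of the Golden--Thompson step (equivalent by cyclicity of the trace) and a factor of $2$ in the normalization of the potential, neither of which affects the argument.
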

%
We conclude with an open question related to Spencer's six standard deviation bound~\cite{sixDeviation:Spencer}. Does Spencer's six standard deviation bound holds under the matrix setting? More formally, given any sequence of $n$ self-adjoint matrices $\{M_i\}$ with $\norm{ M_i}\leq 1$, does there exist a set of signs $\{s_i\}$ so that $\norm{ \sum_{i=1}^{n} s_i M_i} = \OO(\sqrt{n})$?
%
%
%
\ignore{
\paragraph{Connection with Arora-Kale's Matrix MWU Method.}
The balancing matrices game can be summarized as follows: At each round Alice presents to Bob a finite set of matrices with norm bounded by one and the promise that there is a convex combination of them that sums up to the all-zeros matrix (zero-mean condition). Bob selects one matrix from Alice's set based on his previous choices. Bob's goal is to keep the operator norm of the sum of his selected matrices as small as possible.
In Arora-Kale's setting~\cite{phdthesis:Kale:2008}, Alice is more powerful and the ``randomness'' is on Bob's side. At each round, Alice picks any psd self-adjoint matrix $M$ with norm at most one and present it to Bob. Bob maintains a probability distribution over unit vectors (experts), encoded in a density matrix $P$, that updates after each round. At the end of the round, Bob pays to Alice $\trace{P M}$. The claim is that it is possible to upper bound Bob's payoff if he had selected a fixed unit vector a-priori with Bob's average payoff. In other words, the latter claim states that it is possible to bound the maximum eigenvalue of the sum of Alice's selected matrices. 
These two different matrix games seems to be applicable only on different scenarios, although their proofs draw many similarities. Combining the ideas of Arora-Kale's result and few observations from the current work it is possible to extend Arora-Kale's framework to the set of non-square matrices. However, we do not have any interesting implications of such a result.
}
%
%
%
%
%
\vspace*{-3.0ex}
\section{Alon-Roichman Expanding Cayley Graphs}\label{sec:AR_graphs}
We start by describing expander graphs. Given a connected undirected $d$-regular graph $H=(V,E)$ on $n$ vertices, let $A$ be its adjacency matrix, i.e., $A_{ij}=w_{ij}$ where $w_{ij}$ is the number of edges between vertices $i$ and $j$. Moreover, let $\widehat{A}=\frac1{d}A$ be its normalized adjacency matrix. We allow self-loops and multiple edges. Let $\lambda_1(\widehat{A}),\ldots ,\lambda_n(\widehat{A})$ be its eigenvalues in decreasing order. We have that $\lambda_1(\widehat{A})=1$ with corresponding eigenvector $\mathbf{1}/\sqrt{n}$, where $\mathbf{1}$ is the all-one vector. The graph $H$ is called a spectral expander if $\lambda(\widehat{A}):=\max_{2\leq j}\{ |\lambda_j(\widehat{A})|\}\leq \eps$ for some positive constant $\eps<1$.
Denote by $m_k=m_k(H):= \trace{A^k}$. By definition, $m_k$ is equal to the number of self-returning walks of length $k$ of the graph $H$. A graph-spectrum-based invariant, recently proposed by Estrada is defined as $EE(A) := \trace{\expm{A}}$~\cite{estrada}, which also equals to $\sum_{k=0}^{\infty} m_k/k!$. For $\theta>0$, we define the \emph{even $\theta$-Estrada index} by $EE_{\text{even}}(A,\theta) := \sum_{k=0}^{\infty}  m_{2k}(\theta A)/(2k)!$.
Now let $G$ be any finite group of order $n$ with identity element $\mathtt{id}$. Let $S$ be a multi-set of elements of $G$, we denote by $S\sqcup S^{-1}$ the symmetric closure of $S$, namely the number of occurrences of $s$ and $s^{-1}$ in $S\sqcup S^{-1}$ equals the number of occurrences of $s\in S$. Let $R$ be the right regular representation\footnote{In other words, represent each group algebra element with a permutation matrix of size $n$ that preserves the group structure. This is always possible due to Cayley's theorem.}, i.e., $(R(g_1)\phi)(g_2) = \phi(g_1 g_2)$ for every $\phi : G \to \reals$ and $g_1,g_2\in G$. The Cayley graph $\Cay{G}{S}$ on a group $G$ with respect to the mutli-set $S\subset G$ is the graph whose vertex set is $G$, and where $g_1$ and $g_2$ are connected by an edge
if there exists $s\in S$ such that $g_2 = g_1 s$ (allowing multiple edges for multiple elements in $S$).
In this section we prove the correctness of the following greedy algorithm for constructing expanding Cayley graphs.
\begin{theorem}\label{thm:AR_graphs}
Algorithm~\ref{alg:estradaAR}, given the multiplication table of a finite group $G$ of size $n$ and $0<\eps<1$, outputs a (symmetric) multi-set $S\subset G$ of size $\OO(\log n /\eps^2)$ such that $\lambda (\Cay{G}{S}) \leq \eps$ in $\OO(n^2\log^3 n /\eps^3)$ time. Moreover, the algorithm performs only group algebra operations that correspond to counting closed paths in Cayley graphs.
\end{theorem}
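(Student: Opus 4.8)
The plan is to instantiate the matrix hyperbolic cosine algorithm (Algorithm~\ref{alg:derand:Bernstein}) with the random matrices coming from the right regular representation of $G$, and then argue that (i) the resulting greedy procedure produces a symmetric multi-set $S$ whose normalized adjacency operator is spectrally close to the projection onto the non-trivial isotypic components, and (ii) each greedy step can be implemented by counting closed even walks in Cayley graphs, which costs only $\widetilde{\OO}(n^2)$ per step. For the first part, I would set $f_j(g) = \tfrac12\bigl(R(g) + R(g^{-1})\bigr) - \tfrac1n\J_n$ for $g\in G$, so that $\EE_{g}\,f_j(g) = \zeromtx$ when $g$ is uniform on $G$ (using that $\tfrac1n\sum_g R(g)$ is the projection onto the all-ones vector); one checks $\norm{f_j(g)}\le 1$ and $\norm{\EE f_j(g)^2}\le \rho^2$ with $\rho^2 = \OO(1)$ — in fact $\rho^2 \le 1$ — because $R(g)$ is a permutation matrix and the averaged square is again a convex combination of (signed) permutation matrices minus the rank-one piece. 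Feeding these $f_j$ into Theorem~\ref{thm:hypercosine:main} with $t = \OO(\log n/\eps^2)$ and $\eps$ the target bias gives indices $g_1^*,\dots,g_t^*$ with $\bigl\|\tfrac1t\sum_j f_j(g_j^*)\bigr\| = \OO(\eps)$, and unfolding the definition of $f_j$ this says exactly that the normalized adjacency matrix of $\Cay{G}{S}$, with $S = \{g_j^*, (g_j^*)^{-1}\}$, has all non-trivial eigenvalues bounded by $\OO(\eps)$; rescaling $\eps$ by a constant yields $\lambda(\Cay{G}{S})\le\eps$.

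The second, and more delicate, part is the running-time claim: I must show the $\argmin$ in the inner loop can be evaluated combinatorially. The key identity is Lemma~\ref{lem:dil_vs_expm}, which lets me replace $\trace{\coshm{\theta M}}$ by $\tfrac12\trace{\expm{\dil{\theta M}}}$, and then Lemma~\ref{lem:coshm_with_proj} applied to the projection $P = \Id - \tfrac1n\J_n$ (which commutes with every $R(g)$ and hence with every partial sum $M_{i-1} + f_i(k)$, since all the $f_j$ live in the group algebra and annihilate the all-ones vector) to reduce everything to the trace of the matrix exponential of a group-algebra element. Now $\trace{\coshm{\theta A}} = \sum_{\ell\ge 0} m_{2\ell}(\theta A)/(2\ell)!$ for a group-algebra element $A$ (this is essentially $EE_{\text{even}}(A,\theta)$ from the preceding discussion), and $m_{2\ell}(A) = \trace{A^{2\ell}}$ counts closed walks of length $2\ell$ in the corresponding (weighted) Cayley graph; this is a purely combinatorial quantity in the group. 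Since $\theta\gamma \le \eps < 1$, the series $\sum_\ell m_{2\ell}(\theta A)/(2\ell)!$ converges geometrically and can be truncated after $L = \OO(\log n)$ terms with error $n^{-\Omega(1)}$ — this is where I would be careful, bounding the tail by $\trace{\Id}\cdot\sum_{\ell > L} (\theta \gamma \cdot \text{(walk-count radius)})^{2\ell}/(2\ell)!$ and checking it is negligible compared to the additive slack $\eps\rho^2/\gamma$ allowed by the theorem. Computing $m_{2\ell}$ for all $2\ell \le 2L$ of a group-algebra element supported on $\OO(L)$ generators amounts to $\OO(L)$ sparse-dense products in the group algebra, each costing $\OO(n^2)$ time via the multiplication table, so one evaluation of the potential costs $\widetilde{\OO}(n^2 \log n)$; times $m = n$ candidates $k$, times $t = \OO(\log n/\eps^2)$ rounds, gives $\widetilde{\OO}(n^2\log^3 n/\eps^3)$ — wait, I should track this: naively $t\cdot m\cdot (\text{per-eval}) = \OO(\log n/\eps^2)\cdot n\cdot \widetilde{\OO}(n^2\log n)$ overshoots in $n$, so the genuine point is that one does not recompute from scratch each candidate but incrementally updates the walk counts, reducing the per-round cost to $\widetilde{\OO}(n^2\log^2 n)$ and the total to $\OO(n^2\log^3 n/\eps^3)$.

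\textbf{Main obstacle.} The substantive hurdle is the last bookkeeping paragraph: showing that the $n$ candidate evaluations inside one round, each over $\OO(\log n)$ walk-lengths, collectively cost only $\widetilde{\OO}(n^2\log^2 n)$ rather than the naive $\widetilde{\OO}(n^3\log n)$. The trick I expect to need is that the partial sum $M_{i-1}$ is fixed within a round, so one precomputes the powers $M_{i-1}^{2\ell}$ (or rather the vectors $R(\cdot)$ applied to a fixed basis) once per round in $\widetilde{\OO}(n^2\log n)$, and then each candidate $f_i(k)$ contributes a rank-$\OO(1)$ (group-algebra) perturbation whose effect on $\trace{\coshm{\cdot}}$ is computable in $\widetilde{\OO}(n^2)$ by a telescoping/linearization of the exponential series — effectively using that $\trace{\coshm{\theta(M+P_k)}} - \trace{\coshm{\theta M}}$ is governed by low-degree terms because $\theta$ is small. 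The rest — the spectral guarantee, the bounds $\gamma,\rho^2\le 1$, convergence of the truncated Estrada series — is routine given Lemmas~\ref{lem:dil_vs_expm},~\ref{lem:coshm_with_proj} and Theorem~\ref{thm:hypercosine:main}.
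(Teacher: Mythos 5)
Your overall strategy is the one the paper uses: instantiate Theorem~\ref{thm:hypercosine:main} with $f(g)=\tfrac12(R(g)+R(g^{-1}))-\J/n$, note $\EE_g f(g)=\zeromtx$ and bounded $\gamma,\rho^2$ (your tighter bounds $\gamma,\rho^2\le 1$ are fine; the paper uses $2$), and identify the potential $\trace{\coshm{\theta\sum_s f(s)}}$ with the even Estrada index via Lemmas~\ref{lem:dil_vs_expm} and~\ref{lem:coshm_with_proj}, so that each greedy step is a closed-even-walk count. That part matches the paper's Lemma~\ref{lem:cosh_Estrada} and its proof of Theorem~\ref{thm:AR_graphs}.

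The gap is in your running-time paragraph, and it stems from a miscount. You price one product in the truncated Taylor series at $\OO(n^2)$ ``via the multiplication table,'' conclude that from-scratch evaluation of all $n$ candidates per round costs $\widetilde{\OO}(n^3)$, and then declare that the real difficulty is an incremental/telescoping update of the walk counts across candidates --- a trick you do not actually supply. No such trick is needed. Each Taylor step is a \emph{sparse-dense} convolution in $\reals[G]$: a dense element with $n$ coefficients times $\tfrac{\theta}{k}\sum_{s\in S}s$, which is supported on only $|S|=\OO(\log n/\eps^2)$ group elements, so one step costs $\OO(n|S|)$, not $\OO(n^2)$. This is exactly the content of Lemma~\ref{lem:fastEstrada:Cayley}: a single candidate's even Estrada index costs $\OO(n|S|\,l)=\widetilde{\OO}(n\cdot\mathrm{poly}(\log n,1/\eps))$, and summing over $n$ candidates and $t=\OO(\log n/\eps^2)$ rounds gives the stated $\widetilde{\OO}(n^2)$-type bound with no amortization across candidates. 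A second, smaller slip: your truncation length $L=\OO(\log n)$ is justified by ``$\theta\gamma\le\eps<1$,'' but the relevant norm is that of the accumulated matrix $\theta A$, which is $\theta|S|=\OO(\log n/\eps)$ and grows with the round; the factorial only beats the power once $l\gtrsim \e^2|S|\theta$, which is why Lemma~\ref{lem:fastEstrada:Cayley} requires $l\ge\max\{\log(n/\delta),2\e^2|S|\theta\}$. With those two corrections your argument closes along the same lines as the paper's, including the final step of absorbing the per-round approximation error $\delta=\e^{\eps^2}/n^c$ into the slack of Theorem~\ref{thm:hypercosine:main}.
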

\vspace*{-3.5ex}
\begin{algorithm}{}
	\caption{Expander Cayley Graph via even Estrada Index Minimization}\label{alg:estradaAR}
\begin{algorithmic}[1]
\Procedure{GreedyEstradaMin}{$G$, $\eps$}\Comment{Multiplication table of $G$, $0<\eps <1$}
\State Set $S^{(0)}=\emptyset$ and $t=\OO(\log n /\eps^2)$
\For {$i=1,\ldots t$ }
	\State Let $g_{*}\in G$ that (approximately) min. the even $\eps/ 2$-Estrada index of $\Cay{G}{S^{(i-1)}\cup g \cup g^{-1}}$ over all $g\in G $  \Comment{Use Lemma~\ref{lem:fastEstrada:Cayley}}
	\State Set $S^{(i)} = S^{(i-1)} \cup g_{*} \cup g_{*}^{-1}$
\EndFor
\State \textbf{Output:} A multi-set $S:=S^{(t)}$ of size $2t$ such that $\lambda(\Cay{G}{S}) \leq \eps$ 
\EndProcedure
\end{algorithmic}
\end{algorithm}
\vspace*{-4.0ex}
%
%
Let $\widehat{A}$ be the normalized adjacency matrix of $\Cay{G}{S\sqcup S^{-1}}$ for some $S\subset G$. It is not hard to see that $ \widehat{A} = \frac1{2|S|} \sum_{s\in S}{ (R(s) + R(s^{-1}))}$. 
We want to bound $\lambda (A)$. Notice that $\lambda(A)=\norm{(\Id - \J/n)A}$. Since we want to analyze the second-largest eigenvalue (in absolute value), we consider $(\Id - \J/n)A = \frac1{|S|} \sum_{s\in S}{ (R(s) + R(s^{-1})) /2} - \J/n.$
Based on the above calculation, we define our matrix-valued function as
\begin{equation}\label{eq:AR:samplesnew}
f(g) := (R(g) + R(g^{-1})) / 2 - \J /n
\end{equation}
for every $g\in G$. The following lemma connects the potential function that is used in Theorem~\ref{thm:hypercosine:main} and the even Estrada index.
\begin{lemma}\label{lem:cosh_Estrada}
Let $S\subset G $ and $A$ be the adjacency matrix of $\Cay{G}{S\sqcup S^{-1}}$. For any $\theta>0$, $\trace{ \coshm{ \theta \sum_{s\in S} f(s) } } = EE_{even} (A,\theta/2)  + 1 - \cosh(\theta |S|).$
\end{lemma}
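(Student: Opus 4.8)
The plan is to rewrite the argument of $\coshm{\cdot}$ in terms of the (unnormalized) adjacency matrix $A = \sum_{s\in S}(R(s)+R(s^{-1}))$ of $\Cay{G}{S\sqcup S^{-1}}$ and the rank-one orthogonal projector $P := \J_n/n$ onto the span of the all-ones vector $\mathbf 1$, and then exploit that $A$ and $P$ commute. Since $f(g) = (R(g)+R(g^{-1}))/2 - \J_n/n$, we get $\theta\sum_{s\in S} f(s) = \tfrac{\theta}{2}A - \theta|S|\,P$. The symmetric closure makes $\Cay{G}{S\sqcup S^{-1}}$ a $2|S|$-regular (undirected) graph, so each permutation matrix $R(s)$ fixes $\mathbf 1$ and therefore $A\mathbf 1 = 2|S|\,\mathbf 1$, i.e. $AP = PA = 2|S|\,P$. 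Substituting $AP = 2|S|P$ gives the key identity $\theta\sum_{s\in S} f(s) = \tfrac{\theta}{2}A(\Id-P) = (\Id-P)\tfrac{\theta}{2}A$, with $A$ commuting with the projector $\Id-P$.

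Next I would apply Lemma~\ref{lem:coshm_with_proj} with projector $\Id-P$ and self-adjoint matrix $\tfrac{\theta}{2}A$, obtaining $\coshm{\theta\sum_{s\in S} f(s)} = (\Id-P)\coshm{\tfrac{\theta}{2}A} + P$. Taking traces and using $\trace{P}=1$ yields $\trace{\coshm{\theta\sum_{s\in S} f(s)}} = \trace{\coshm{\tfrac{\theta}{2}A}} - \trace{P\coshm{\tfrac{\theta}{2}A}} + 1$. Because $\mathbf 1$ is an eigenvector of $A$ with eigenvalue $2|S|$, it is an eigenvector of $\coshm{\tfrac{\theta}{2}A}$ with eigenvalue $\cosh(\theta|S|)$, so $\trace{P\coshm{\tfrac{\theta}{2}A}} = \mathbf 1^\top\coshm{\tfrac{\theta}{2}A}\mathbf 1/n = \cosh(\theta|S|)$.

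It then remains to identify $\trace{\coshm{\tfrac{\theta}{2}A}}$ with $EE_{even}(A,\theta/2)$. Expanding $\coshm{B} = \tfrac12(\expm{B}+\expm{-B})$ into its power series (valid since $B=\tfrac{\theta}{2}A$ is self-adjoint) the odd powers cancel and the even ones double, so $\trace{\coshm{\tfrac{\theta}{2}A}} = \sum_{k\ge 0}\trace{(\tfrac{\theta}{2}A)^{2k}}/(2k)! = \sum_{k\ge 0} m_{2k}(\tfrac{\theta}{2}A)/(2k)! = EE_{even}(A,\theta/2)$ by the definitions of $m_k$ and of the even $\theta$-Estrada index. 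Combining the three displays gives exactly $\trace{\coshm{\theta\sum_{s\in S} f(s)}} = EE_{even}(A,\theta/2) + 1 - \cosh(\theta|S|)$.

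The argument is essentially bookkeeping around the spectral decomposition $\reals^n = \mathrm{span}(\mathbf 1)\oplus\mathbf 1^\perp$; the only points needing a little care are (i) the regularity fact $A\mathbf 1 = 2|S|\mathbf 1$ for the symmetric closure $S\sqcup S^{-1}$, which is what lets the all-ones direction split off cleanly and produces the $1-\cosh(\theta|S|)$ correction term, and (ii) the term-by-term rearrangement of the matrix exponential series, for which self-adjointness of $A$ suffices. I do not expect any genuine obstacle beyond these.
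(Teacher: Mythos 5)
Your proof is correct and takes essentially the same route as the paper's: split off the all-ones direction, commute the hyperbolic cosine past the projector via Lemma~\ref{lem:coshm_with_proj}, and identify $\trace{\coshm{\frac{\theta}{2}A}}$ with $EE_{\text{even}}(A,\theta/2)$ from the power-series definition. The only cosmetic difference is in how the correction term is evaluated: you compute $\trace{\frac{\J_n}{n}\coshm{\frac{\theta}{2}A}}=\cosh(\theta|S|)$ directly from the Perron eigenvector $\mathbf{1}$, whereas the paper invokes Lemma~\ref{lem:coshm_with_proj} a second time (now with the rank-one projector $\J_n/n$) together with Lemma~\ref{lem:expm:outerprod}.
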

The following lemma indicates that it is possible to efficiently compute the (even) Estrada index for Cayley graphs with small generating set.
\begin{lemma}\label{lem:fastEstrada:Cayley}
	Let $S\subset G $, $\theta,\delta >0$, and $A$ be the adjacency matrix of $\Cay{G}{S}$. There is an algorithm that, given $S$, computes an additive $\delta$ approximation to $EE(\theta A)$ or $EE_{\text{even}}(A,\theta)$ in $\OO(n|S| \max\{ \log (n/\delta) , 2\e^2 |S| \theta \})$ time.
\end{lemma}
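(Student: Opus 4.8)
The plan is to express the Estrada index (and its even variant) as a truncated power series in the moments $m_k = \trace{A^k}$, bound the truncation error, and then observe that each moment can be computed combinatorially by counting closed walks of length $k$ in $\Cay{G}{S}$ using the group structure. First I would recall that $EE(\theta A) = \sum_{k=0}^\infty m_k(\theta A)/k! = \sum_{k=0}^\infty \theta^k m_k / k!$ and $EE_{\text{even}}(A,\theta) = \sum_{k=0}^\infty \theta^{2k} m_{2k}/(2k)!$. Since $\Cay{G}{S}$ is $|S|$-regular, every eigenvalue of $A$ lies in $[-|S|,|S|]$, so $|m_k| = |\trace{A^k}| \le n |S|^k$. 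Hence the tail $\sum_{k > K} \theta^k m_k/k!$ is bounded in absolute value by $n \sum_{k>K} (\theta|S|)^k/k!$, which, once $K \ge 2e\,|S|\theta$, is at most $n (\theta|S|)^{K+1}/(K+1)! \cdot 2 \le n\, 2^{-K}$ by the standard factorial tail estimate $(K+1)! \ge ((K+1)/e)^{K+1}$. Therefore choosing $K = \OO(\max\{\log(n/\delta), 2e^2|S|\theta\})$ makes the truncation error at most $\delta$; the same computation applies verbatim to the even series.

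Next I would handle the computation of the moments $m_0, m_1, \ldots, m_K$ within the claimed time bound. The key point is that $A = \sum_{s\in S} R(s)$ where $R$ is the right regular representation, so $A^k = \sum_{(s_1,\ldots,s_k)\in S^k} R(s_1 \cdots s_k)$, and $\trace{R(g)}$ equals $n$ if $g = \mathtt{id}$ and $0$ otherwise. Thus $m_k$ is exactly $n$ times the number of $k$-tuples $(s_1,\ldots,s_k) \in S^k$ with $s_1 s_2 \cdots s_k = \mathtt{id}$, i.e. $n$ times the number of closed walks of length $k$ starting (equivalently, ending) at any fixed vertex of $\Cay{G}{S}$. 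Operationally, one maintains the distribution-of-endpoints vector: starting from the indicator of $\mathtt{id}$, repeatedly apply the operator ``multiply on the right by each $s\in S$ and sum'', which is one step of a breadth-first-style propagation over the Cayley graph costing $\OO(n|S|)$ group multiplications (table lookups) per step; after $k$ steps the coordinate at $\mathtt{id}$ gives $m_k/n$. Running this for $k = 1,\ldots,K$ costs $\OO(n|S| K)$ total, which is exactly $\OO(n|S|\max\{\log(n/\delta), 2e^2|S|\theta\})$.

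Finally, I would assemble the approximation: compute $\widetilde{EE} := \sum_{k=0}^{K} \theta^k m_k / k!$ (or the even analogue), which takes $\OO(K)$ additional arithmetic operations once the $m_k$ are known, and invoke the tail bound to conclude $|\widetilde{EE} - EE(\theta A)| \le \delta$. The main obstacle, and the only genuinely delicate part, is the truncation analysis: one must choose the cutoff $K$ large enough that the Taylor tail is below $\delta$ while keeping $K = \OO(\max\{\log(n/\delta), |S|\theta\})$, which forces a careful use of the factorial lower bound and the regularity bound $|m_k| \le n|S|^k$ — the $n$ inside the logarithm is precisely what absorbs the $\log n$ blow-up from the trace. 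Everything else (the combinatorial identity for $m_k$ and the propagation cost) is routine bookkeeping once the representation-theoretic identity $\trace{R(g)} = n\,[g=\mathtt{id}]$ is in hand.
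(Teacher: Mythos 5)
Your proposal is correct and follows essentially the same route as the paper: both truncate the exponential series at $l = \OO(\max\{\log(n/\delta), e^2|S|\theta\})$, bound the tail via the factorial estimate $(l+1)! \ge ((l+1)/\e)^{l+1}$ together with $\norm{\theta A}\le\theta|S|$ (your direct bound $|m_k|\le n|S|^k$ on the trace series is equivalent to the paper's operator-norm Taylor remainder multiplied by $n$), and compute each power by iterated convolution in the group algebra $\reals[G]$ at cost $\OO(n|S|)$ per step — your ``distribution-of-endpoints'' propagation is exactly this convolution, with the identity coefficient giving $m_k/n$ via $\trace{R(g)} = n\,[g=\mathtt{id}]$. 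No gaps.
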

%
%
%
\begin{proof}(of Theorem~\ref{thm:AR_graphs})
By Lemma~\ref{lem:cosh_Estrada}, minimizing the even $\eps/2$-Estrada index in the $i$-th iteration is equivalent to minimizing $\trace{ \coshm{ \theta \sum_{s\in S^{(i-1)}} f(s) +\theta f(g) } }$ over all $g\in G$ with $\theta = \eps$. Notice that $f(g)\in \Sym^{n\times n}$ for $g\in G$, $\EE_{g\in_R{G}}{f(g)} = \zeromtx_n$ since $\sum_{g\in G}R(g) = \J$. It is easy to see that $\norm{f(g)} \leq 2$ and moreover a calculation implies that $\norm{\EE_{g\in_R{G}}{f(g)^2}} \leq 2$ as well. Theorem~\ref{thm:hypercosine:main} implies that we get a multi-set $S$ of size $t$ such that $\lambda (\Cay{G}{ S\sqcup S^{-1}})=\norm{\frac1{|S|} \sum_{s\in S} f(s) }  \leq \eps$. The moreover part follows from Lemma~\ref{lem:fastEstrada:Cayley} with $\delta = \frac{\e^{\eps^2}}{n^c}$ for a sufficient large constant $c>0$. Indeed, in total we incur (following the proof of Theorem~\ref{thm:hypercosine:main}) at most an additive $\ln( \delta n \e^{\eps^2 t}) / \eps$ error which is bounded by $\eps$.
\end{proof}
%
%
%
%
%
%
%
\section{Fast Isotropic Sparsification and Spectral Sparsification}\label{sec:fast_isotrop_sparse}
Let $A$ be an $m\times n$ matrix with $m\gg n$ whose columns are in isotropic position, i.e., $A^\top A = \Id_n$. For $0 <\eps < 1$, consider the problem of finding a small subset of (rescaled) rows of $A$ forming a matrix $\widetilde{A}$ such that $\norm{\widetilde{A}^\top \widetilde{A} - \Id } \leq \eps$. The matrix Bernstein inequality (see~\cite{chernoff:matrix_valued:Tropp}) tells us that there exists such a set with size $\OO(n\log n /\eps^2)$. Indeed, set $f(i)=A_{(i)} \otimes A_{(i)} / p_i - \Id_n$ where $p_i = \norm{A_{(i)}}^2 / \frobnorm{A}^2$. A calculation shows that $\gamma$ and $\rho^2$ are $\OO(n)$. Moreover, Algorithm~\ref{alg:derand:Bernstein} implies an $\OO(mn^4 \log n /\eps^2)$ time algorithm for finding such a set. The running time of Algorithm~\ref{alg:derand:Bernstein} for rank-one matrix samples can be improved to $\OO(mn^3 \polylog{n} /\eps^2)$ by exploiting their rank-one structure. More precisely, using fast algorithms for computing all the eigenvalues of matrices after rank-one updates~\cite{Gu:update}. Next we show that we can further improve the running time by a more careful analysis.
We show how to improve the running time of Algorithm~\ref{alg:derand:Bernstein} to $\OO(\frac{mn^2}{\eps^2} \polylog{n, \frac1{\eps}})$ utilizing results from numerical linear algebra including the Fast Multipole Method~\cite{FMM:CGR} (FMM) and ideas from~\cite{Gu:update}. The main idea behind the improvement is that the trace is invariant under any change of basis. At each iteration, we perform a change of basis so that the matrix corresponding to the previous choices of the algorithm is diagonal. Now, Step $4$ of Algorithm~\ref{alg:derand:Bernstein} corresponds to computing all the eigenvalues of $m$ different eigensystems with special structure, i.e., diagonal plus a rank-one matrix. Such eigensystem can be solved in $\OO(n \polylog{n})$ time using the FMM as was observed in~\cite{Gu:update}. However, the problem now, is that at each iteration we have to represent all the vectors $A_{(i)}$ in the new basis, which may cost $\OO(mn^2)$. The key observation is that the change of basis matrix at each iteration is a Cauchy matrix (see Appendix). It is known that matrix-vector multiplication with Cauchy matrices  can be performed efficiently and numerically stable using FMM. Therefore, at each iteration, we can perform the change of basis in $\OO(mn\polylog{n})$ and $m$ eigenvalue computations in $\OO(mn\polylog{n})$ time. The next theorem states that the resulting algorithm runs in $\OO(mn^2 \polylog{n})$ time (see Appendix for proof).
\begin{theorem}\label{thm:derand:isotrop:fast}
Let $A$ be an $m\times n$ matrix with $A^\top A = \Id_n$, $m\geq n$ and $ 0 < \eps <1$. Algorithm~\ref{alg:fast:isotrop} returns at most $t=\OO(n  \ln n/\eps^2)$ indices $x_1^*,x_2^*,\ldots x_t^*$ over $[m]$ with corresponding scalars $s_1,s_2,\ldots ,s_t$ using $\widetilde{\OO}(mn^2 \log^3 n /\eps^2 )$ operations such that
\begin{equation}\label{eq:main_thm:fast:main_eqn}	
	\norm{ \sum_{i=1}^{t} s_i A_{(x_i^*)} \otimes A_{(x_i^*)} - \Id_n} \leq \eps.
\end{equation}
\end{theorem}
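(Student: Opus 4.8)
The plan is to deduce (\ref{eq:main_thm:fast:main_eqn}) as a direct instance of Theorem~\ref{thm:hypercosine:main} and then to verify that the fast implementation sketched above stays within the claimed time bound; the correctness half is routine, and the running time is where the work lies. For correctness I would take $f_j\equiv f$ for all $j$, with
\[
f(i)\;=\;\tfrac{1}{p_i}\,A_{(i)}\otimes A_{(i)}-\Id_n,
\qquad
p_i\;=\;\norm{A_{(i)}}^2/\frobnorm{A}^2\;=\;\norm{A_{(i)}}^2/n,
\]
and let $X_1,X_2,\dots$ be i.i.d.\ over $[m]$ with $\Prob{X_j=i}=p_i$. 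Since the columns of $A$ are isotropic, $\sum_i A_{(i)}\otimes A_{(i)}=A^\top A=\Id_n$, so $\EE f(X_j)=\zeromtx_n$; the eigenvalues of $f(i)$ are $\norm{A_{(i)}}^2/p_i-1=n-1$ (once) and $-1$ ($n-1$ times), so one may take $\gamma=n-1$; and expanding $\EE f(X_j)^2=\sum_i p_i\big(\tfrac{1}{p_i}A_{(i)}\otimes A_{(i)}-\Id_n\big)^2$ and using $\norm{A_{(i)}}^2/p_i=n$, $\sum_i p_i=1$ and $A^\top A=\Id_n$ gives $\EE f(X_j)^2=(n-1)\Id_n$, so $\rho^2=n-1$. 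Running Algorithm~\ref{alg:derand:Bernstein} with accuracy parameter $\eps/2$ and $t=\lceil 4n\ln(2n)/\eps^2\rceil=\OO(n\ln n/\eps^2)$, Theorem~\ref{thm:hypercosine:main} yields $\norm{\tfrac1t\sum_j f(x_j^*)}\le \tfrac{\gamma\ln(2n)}{t(\eps/2)}+\tfrac{(\eps/2)\rho^2}{\gamma}\le \tfrac{\eps}{2}+\tfrac{\eps}{2}=\eps$; since $\tfrac1t\sum_j f(x_j^*)=\sum_j\tfrac{1}{t\,p_{x_j^*}}A_{(x_j^*)}\otimes A_{(x_j^*)}-\Id_n$, the non-negative scalars $s_j:=1/(t\,p_{x_j^*})$ give exactly (\ref{eq:main_thm:fast:main_eqn}).

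For the running time I would implement Step~$4$ as follows. Maintain an orthogonal basis in which the running matrix $M_{i-1}:=\theta\sum_{j<i}f(x_j^*)$ is diagonal, say $M_{i-1}=\Lambda_{i-1}$, and store each row of $A$ in this basis as $w_k$ (so $\norm{w_k}=\norm{A_{(k)}}$). For a candidate $k$, the matrix $M_{i-1}+\theta f(k)$ is represented in the current basis by $(\Lambda_{i-1}-\theta\Id_n)+\tfrac{\theta}{p_k}\,w_k\otimes w_k$, a diagonal-plus-rank-one matrix; its $n$ eigenvalues $\mu_1,\dots,\mu_n$ are the roots of a secular equation, which by FMM-accelerated safeguarded Newton iteration can all be found to inverse-polynomial precision in $\widetilde{\OO}(n)$ time, and $\trace{\coshm{M_{i-1}+\theta f(k)}}=\sum_\ell\cosh(\mu_\ell)$ is read off from them. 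Thus one iteration evaluates all $m$ candidates in $\widetilde{\OO}(mn)$ time. After the minimizer $x_i^*$ is chosen, $M_i=M_{i-1}+\theta f(x_i^*)$ is again diagonal-plus-rank-one; compute its eigenvalues $\Lambda_i$ (another secular equation) and note that its normalized eigenvector matrix $Q_i$ is a column-scaled Cauchy matrix, with $(a,b)$-entry proportional to $(w_{x_i^*})_a/\big((\Lambda_{i-1}-\theta\Id_n)_{aa}-(\Lambda_i)_{bb}\big)$. Updating the basis then amounts to replacing each stored $w_k$ by $Q_i^\top w_k$, and since Cauchy and Cauchy-like matrix--vector products are computable stably in $\widetilde{\OO}(n)$ by FMM, this costs $\widetilde{\OO}(mn)$ per iteration as well (the single diagonalization of $M_i$ is also $\widetilde{\OO}(n)$). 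Summing over the $t=\widetilde{\OO}(n/\eps^2)$ iterations gives $\widetilde{\OO}(mn^2/\eps^2)$ operations; tracking the $\OO(\log n)$ from $t$, the $\OO(\log n)$ from the FMM precision, and the $\OO(\log n)$ from the number of Newton rounds produces the stated $\widetilde{\OO}(mn^2\log^3 n/\eps^2)$.

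The step I expect to be the main obstacle is the numerical bookkeeping rather than the combinatorics. Since the greedy rule is executed with only $\delta$-accurate values of $\trace{\coshm{\cdot}}$ and with an approximately applied change of basis, I would need a robustness argument in the spirit of the error accounting at the end of the proof of Theorem~\ref{thm:AR_graphs}: show that the potential recursion $\Phi_i\le \min_k\Phi_i(k)+\delta$ still telescopes acceptably, so that choosing $\delta$ to be a sufficiently small inverse polynomial in $n$ (as in the proof of Theorem~\ref{thm:AR_graphs}) keeps the total additive error accumulated over all $t$ iterations below $\eps/2$, and the conclusion of Theorem~\ref{thm:hypercosine:main} survives up to replacing $\eps$ by a constant multiple of itself. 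This in turn requires the standard deflation handling of the diagonal-plus-rank-one eigenproblem when two diagonal entries of $\Lambda_{i-1}-\theta\Id_n$ coincide or a coordinate of $w_{x_i^*}$ (nearly) vanishes — precisely the cases in which the Cauchy structure of $Q_i$ degenerates — together with a check that the relative-error guarantees of the FMM for Cauchy and Cauchy-like matrix--vector products compose over the $t$ successive basis changes without amplification. Everything else — the values of $\gamma$ and $\rho^2$, the parameter choices, and the per-iteration FMM cost — is routine.
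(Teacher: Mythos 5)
Your proposal follows essentially the same route as the paper's proof: it instantiates Theorem~\ref{thm:hypercosine:main} with $f(i)=A_{(i)}\otimes A_{(i)}/p_i-\Id_n$ and $p_i=\norm{A_{(i)}}^2/n$ (computing the same $\gamma,\rho^2=\Theta(n)$), and then implements the argmin by diagonalizing the running sum, reducing each candidate to a diagonal-plus-rank-one secular equation solvable in $\widetilde{\OO}(n)$ via FMM, with the change of basis applied implicitly as a Cauchy (eigenvector) matrix--vector product. The robustness step you flag as the main obstacle is exactly what the paper carries out in Lemma~\ref{lem:technical_cosh}, which shows the approximate potential $\widetilde{\Phi}^{(i)}_k$ is a multiplicative $\e^{\pm\delta}$ perturbation of $\Phi^{(i)}_k$ and then chooses the FMM precision parameter $b$ so that the accumulated factor $\e^{2\delta t}$ over $t$ rounds is harmless.
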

%
%
%
%
\vspace*{-3.5ex}
\begin{algorithm}{}
	\caption{Fast Isotropic Sparsification}\label{alg:fast:isotrop}
\begin{algorithmic}[1]
\Procedure{Isotrop}{$A$, $\eps$} \Comment{$A\in{\reals^{m\times n}}$, $\sum_{k=1}^{m}A_{(k)}\otimes A_{(k)} = \Id_n$ and $0 < \eps <1$}
\State Set $\theta = \eps / n $, $t=\OO( n \ln n/\eps^2)$, and $A_{(k)} \leftarrow A_{(k)}/\sqrt{p_k}$ for every $k\in{[m]}$, where $p_k=\norm{A_{(k)}}^2/n$
\State Set $\Lambda_{\{0\}} = \zeromtx_n$ and $Z = \sqrt{\theta}\ A$
\For {$i=1$ to $t$}
	\State $x_i^* = \argmin_{k\in{[m]}}{\trace{\expm{ \Lambda_{\{i - 1\}} + Z_{(k)} \otimes Z_{(k)} }\e^{-\theta i}  + \expm{- \Lambda_{\{i - 1\}} - Z_{(k)} \otimes Z_{(k)} }\e^{\theta i}  }    } $ \Comment{Apply $m$ times Lemma~\ref{lem:comp_eigs}}
	\State $[\Lambda_{\{i\}}, U_{\{i\}}] = \textbf{eigs} ( \Lambda_{\{i - 1\}} + Z_{(x_i^*)} \otimes Z_{(x_i^*)} )$ \Comment{\textbf{eigs} computes eigensystem}
	\State $Z = Z  U_{\{i\}} $ \Comment{Apply fast matrix-vector multiplication }
\EndFor
\State \textbf{Output:} $t$ indices $x_1^*, x_2^*, \ldots ,x_t^*,\ x_i^* \in{[m]}$  s.t. $\norm{ \sum_{k=1}^{t} \frac{A_{(x_k^*)} \otimes A_{(x_k^*)} }{ tp_{x_k^*}} - \Id_n } \leq \eps $
\EndProcedure
\end{algorithmic}
\end{algorithm}
\vspace*{-4.0ex}
%
%
%
%
%
Next, we show that Algorithm~\ref{alg:fast:isotrop} can be used as a bootstrapping procedure to improve the time complexity of~\cite[Theorem~3.1]{phdthesis:Srivastava:2010}, see also~\cite[Theorem~$3.1$]{graph:sparsifiers:twice_ram}. Such an improvement implies faster algorithms for constructing graph sparsifiers and, as we will see in \S~\ref{sec:sparsification:matrix}, element-wise sparsification of matrices.
\begin{theorem}\label{thm:sparsification:here}
Suppose $0 < \eps < 1$ and $A = \sum_{i=1}^{m} v_i \otimes v_i $ are given, with column vectors $v_i\in\reals^n $ and $m\geq n$. Then there are non-negative weights $\{s_i\}_{i\leq m}$, at most $ \lceil n/\eps^2 \rceil$ of which are non-zero, such that
\begin{equation}
	(1-\eps)^3 A \preceq \widetilde{A} \preceq (1+\eps)^3 A,
\end{equation}
where $\widetilde{A} = \sum_{i=1}^{m}s_i v_i \otimes v_i$. Moreover, there is an algorithm that computes the weights $\{s_i\}_{i\leq m}$ in deterministic $\widetilde{\OO}(mn^2 \log^3 n  /\eps^2 + n^4 \log n /\eps^4)$ time.
\end{theorem}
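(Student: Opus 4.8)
The plan is to use the fast isotropic sparsifier of Theorem~\ref{thm:derand:isotrop:fast} as a cheap preprocessing step that shrinks the number of rank-one terms from $m$ down to $\OO(n\log n/\eps^2)$, and only then to invoke the (slower, but count-optimal) deterministic spectral sparsifier of~\cite[Theorem~3.1]{phdthesis:Srivastava:2010} (see also~\cite[Theorem~3.1]{graph:sparsifiers:twice_ram}) on the reduced instance; chaining the two Loewner-order approximations is what produces the $(1\pm\eps)^3$ guarantee with only $\lceil n/\eps^2\rceil$ nonzero weights.

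First I would reduce to isotropic position. Since every $v_i$ lies in $\im{A}$, the whole argument can be carried out inside $\im{A}$, of dimension $r=\rank{A}\leq n$, so we may assume $A$ is invertible and the final count is $\lceil r/\eps^2\rceil\leq\lceil n/\eps^2\rceil$. Writing $A=V^\top V$ with $V$ the $m\times n$ matrix of rows $v_i^\top$, I set $B:=VA^{-1/2}$, whose rows are $b_i:=A^{-1/2}v_i$; then $B^\top B=\Id_n$, i.e. $\sum_{i}b_i\otimes b_i=\Id_n$. This costs one eigendecomposition, $\OO(n^3)$, plus $\OO(mn^2)$ to form the $b_i$, which is dominated by the other terms.

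Stage 1: apply Algorithm~\ref{alg:fast:isotrop} to $B$ with parameter $\eps$. By Theorem~\ref{thm:derand:isotrop:fast} this returns in $\widetilde{\OO}(mn^2\log^3 n/\eps^2)$ deterministic time a set of $t_1=\OO(n\log n/\eps^2)$ indices $x_i$ and weights $s_i^{(1)}\geq 0$ with $\norm{\sum_i s_i^{(1)}\,b_{x_i}\otimes b_{x_i}-\Id_n}\leq\eps$, i.e. $(1-\eps)\Id_n\preceq\sum_i s_i^{(1)}\,b_{x_i}\otimes b_{x_i}\preceq(1+\eps)\Id_n$. Conjugating by $A^{1/2}$ (a congruence, hence order-preserving) and using $A^{1/2}b_i=v_i$, the matrix $\widetilde A_1:=\sum_i s_i^{(1)}\,v_{x_i}\otimes v_{x_i}$ satisfies $(1-\eps)A\preceq\widetilde A_1\preceq(1+\eps)A$ and is again a nonnegative combination of $t_1$ of the original rank-one terms. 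Stage 2: feed the $t_1$-term decomposition of $\widetilde A_1$ to~\cite[Theorem~3.1]{phdthesis:Srivastava:2010} with parameter $\eps$ — its hypotheses hold, $\widetilde A_1$ being psd and a sum of $t_1\geq n$ rank-one matrices spanning $\im{A}$. This outputs weights, at most $\lceil n/\eps^2\rceil$ of them nonzero, whose reweighting $\widetilde A_2$ obeys $(1-\eps)^2\widetilde A_1\preceq\widetilde A_2\preceq(1+\eps)^2\widetilde A_1$, in $\OO(t_1\cdot n^3/\eps^2)=\OO(n^4\log n/\eps^4)$ deterministic time. Chaining the two bounds gives $(1-\eps)^3 A\preceq\widetilde A_2\preceq(1+\eps)^3 A$; setting $\widetilde A:=\widetilde A_2$ and taking $s_i$ to be the product of the two sets of weights (supported on the finally selected indices, hence $\leq\lceil n/\eps^2\rceil$ nonzero) proves the statement, and adding the running times of the reduction and of Stages 1 and 2 gives the claimed $\widetilde{\OO}(mn^2\log^3 n/\eps^2+n^4\log n/\eps^4)$ bound.

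The only genuinely new content is Theorem~\ref{thm:derand:isotrop:fast} itself (the FMM/Cauchy-matrix acceleration of Step~4 of Algorithm~\ref{alg:derand:Bernstein}), which I use as a black box; everything else is composition. Within this proof the delicate points are handling rank-deficient $A$ by working inside $\im{A}$ so that the output size stays $\lceil n/\eps^2\rceil$, checking that~\cite[Theorem~3.1]{phdthesis:Srivastava:2010} applies verbatim to the Stage-1 output, and tracking the Loewner factors so that a $(1\pm\eps)$-approximation composed — through the congruence by $A^{1/2}$ — with a $(1\pm\eps)^2$-approximation yields exactly $(1\pm\eps)^3$. I do not expect any single hard step; the main obstacle is simply making the two cited constructions fit together with matching accuracy parameters and confirming that neither the reduction nor Stage 2 dominates the target running time.
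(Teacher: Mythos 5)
Your proposal is correct and follows essentially the same two-stage argument as the paper: reduce to isotropic position via $A^{-1/2}$, run Algorithm~\ref{alg:fast:isotrop} to cut the number of rank-one terms to $\OO(n\log n/\eps^2)$ with a $(1\pm\eps)$ guarantee, then invoke~\cite[Theorem~3.1]{phdthesis:Srivastava:2010} on the reduced instance and chain the Loewner bounds to get $(1\pm\eps)^3$ with $\lceil n/\eps^2\rceil$ nonzero weights and the $\widetilde{\OO}(mn^2\log^3 n/\eps^2+n^4\log n/\eps^4)$ running time. Your explicit handling of rank-deficient $A$ by working inside $\im{A}$ is a small clarification of the paper's ``assume w.l.o.g.\ full rank,'' but otherwise the two proofs coincide.
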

%
%
%
\section{Element-wise Matrix Sparsification}\label{sec:sparsification:matrix}
%
%
%
A deterministic algorithm for the element-wise matrix sparsification problem can be obtained by derandomizing a recent result whose analysis is based on the matrix Bernstein inequality~\cite{matrix:sparsification:IPL2011}.
\begin{theorem}\label{thm:matrix_sparse:slow}
Let $A$ be an $n\times n$ matrix and $ 0 < \eps <1$.  There is a deterministic polynomial time algorithm that, given $A$ and $ \eps$, outputs a matrix $\widetilde{A}\in \reals^{n\times n}$ with at most $ 28 n \ln (\sqrt{2n} ) \sr{A} /\eps^2$ non-zero entries such that $\norm{A - \widetilde{A}} \leq \eps \norm{A}.$
\end{theorem}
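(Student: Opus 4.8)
The plan is to derandomize the randomized element-wise sparsifier of~\cite{matrix:sparsification:IPL2011}, whose analysis rests on the matrix Bernstein inequality, by replacing its random sampling with a run of the matrix hyperbolic cosine algorithm (Algorithm~\ref{alg:derand:Bernstein}, Theorem~\ref{thm:hypercosine:main}). Since $A$ is not assumed symmetric, I would pass to dilations. Fix a probability vector $p=\{p_{ij}\}$ supported on a set $\Omega$ of positions with $A_{ij}\neq 0$, let $A_\Omega$ denote $A$ with all entries outside $\Omega$ zeroed out, and for $(i,j)\in\Omega$ put $f(i,j):=\dil{\,(A_{ij}/p_{ij})\,e_i e_j^{\top}-A_\Omega\,}\in\Sym^{2n\times 2n}$. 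By linearity of the dilation, $\EE[f(X)]=\zeromtx_{2n}$ for $X\sim p$, and since $\norm{\dil{B}}=\norm{B}$, a tuple $\{x_k^{*}\}_{k\leq t}$ produced by the algorithm yields $\widetilde A:=\tfrac1t\sum_{k\leq t}(A_{x_k^{*}}/p_{x_k^{*}})\,e_{i_k}e_{j_k}^{\top}$ with $\norm{A_\Omega-\widetilde A}=\norm{\tfrac1t\sum_{k}f(x_k^{*})}$ and at most $t$ non-zero entries. It therefore remains to choose $\Omega$, $p$ and $t$.

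The weights $p_{ij}\propto A_{ij}^2$ control the matrix variance but not the norm bound $\gamma$, since $\norm{(A_{ij}/p_{ij})\,e_i e_j^{\top}}=\frobnorm{A_\Omega}^2/|A_{ij}|$ blows up for tiny entries. I would remedy this by thresholding: set $\tau:=\eps\norm{A}/(2n)$ and $\Omega:=\{(i,j):|A_{ij}|\geq\tau\}$, so that $A-A_\Omega$ has all entries below $\tau$, whence $\norm{A-A_\Omega}\leq\frobnorm{A-A_\Omega}\leq n\tau=\tfrac\eps2\norm{A}$; in particular $\norm{A_\Omega}\geq\tfrac12\norm{A}$ and $\sr{A_\Omega}=\OO(\sr{A})$. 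With $p_{ij}=A_{ij}^2/\frobnorm{A_\Omega}^2$ on $\Omega$ one has $\norm{(A_{ij}/p_{ij})\,e_i e_j^{\top}}\leq\frobnorm{A}^2/\tau$, giving a valid $\gamma=\OO(n\frobnorm{A}^2/(\eps\norm{A}))$. For the variance, use $\dil{B}^2=\diag{BB^{\top},\ B^{\top}B}$: writing $M_X:=(A_X/p_X)\,e_{i_X}e_{j_X}^{\top}$, one has $\zeromtx\preceq\EE[f(X)^2]=\EE[\dil{M_X}^2]-\dil{A_\Omega}^2\preceq\EE[\dil{M_X}^2]$, and $\EE[M_X M_X^{\top}]=\sum_{(i,j)\in\Omega}(A_{ij}^2/p_{ij})\,e_i e_i^{\top}=\frobnorm{A_\Omega}^2\,\diag{r_1,\dots,r_n}$, where $r_i$ is the number of non-zeros of $A_\Omega$ in row $i$ (the other block is the column analogue); hence $\rho^2:=\norm{\EE[f(X)^2]}\leq n\frobnorm{A_\Omega}^2\leq n\frobnorm{A}^2$.

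Now feed $\{f(\cdot)\}$, with i.i.d.\ $X_1,X_2,\dots\sim p$, to Theorem~\ref{thm:hypercosine:main}: for any $0<\eps_{\mathrm{alg}}<1$ and any $t$ it outputs deterministic indices with $\norm{A_\Omega-\widetilde A}\leq\gamma\ln(4n)/(t\,\eps_{\mathrm{alg}})+\eps_{\mathrm{alg}}\rho^2/\gamma$. Taking $\eps_{\mathrm{alg}}$ of order $\eps\norm{A}\,\gamma/\rho^2$ (which is $<1$ once $t$ carries a large enough absolute constant) balances the two terms, and then $t=\OO\!\big(\rho^2\log n/(\eps\norm{A})^2\big)=\OO(n\log n\,\sr{A}/\eps^2)$ forces $\norm{A_\Omega-\widetilde A}\leq\tfrac\eps2\norm{A}$; a sharper book-keeping (tightening the truncation bound, and using Lemma~\ref{lem:dil_vs_expm} to run the potential on $n\times n$ rather than $2n\times 2n$ matrices) recovers the stated bound $28\,n\ln(\sqrt{2n})\,\sr{A}/\eps^2$ on the number of non-zeros. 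Output $\widetilde A$; by the triangle inequality $\norm{A-\widetilde A}\leq\norm{A-A_\Omega}+\norm{A_\Omega-\widetilde A}\leq\eps\norm{A}$, and $\nnz{\widetilde A}\leq t$. The algorithm is polynomial: Algorithm~\ref{alg:derand:Bernstein} performs $t$ iterations, each scanning $|\Omega|\leq n^2$ candidates and, for each, evaluating $\trace{\coshm{\cdot}}$ of a fixed matrix plus a rank-one term, i.e.\ an eigendecomposition ($\OO(n^3)$, or faster with rank-one-update techniques).

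The step I expect to be the crux is the thresholding analysis: one must verify that the cut-off at scale $\eps\norm{A}/n$ \emph{simultaneously} keeps the discarded part below $\tfrac\eps2\norm{A}$ in operator norm and keeps the product $\gamma\rho^2/\norm{A}^2$ at $\OO(n\,\sr{A})$ --- precisely what converts the generic sample count $t\propto\rho^2\log n/(\eps\norm{A})^2$ coming out of Theorem~\ref{thm:hypercosine:main} into the dimension-linear, stable-rank-sensitive bound asserted here. Everything else is a routine instantiation.
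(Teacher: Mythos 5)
Your proposal matches the paper's proof essentially step for step: normalize, zero out entries below the $\eps\norm{A}/(2n)$ threshold, pass to the dilation of $(A_{ij}/p_{ij})E_{ij}-A$ with $p_{ij}\propto A_{ij}^2$, verify $\gamma=\OO(n\,\sr{A}/\eps)$ and $\rho^2\leq n\,\sr{A}$, and invoke Theorem~\ref{thm:hypercosine:main} on Algorithm~\ref{alg:derand:Bernstein}. The paper outsources the $\gamma,\rho^2$ bounds to Lemmas~2--3 of~\cite{matrix:sparsification:IPL2011} whereas you derive them directly, but the construction and the reduction are the same.
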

Next, we give two improved element-wise sparsification algorithms for self-adjoint and diagonally dominant-like matrices; one of them is randomized and the other is its derandomized version. Both algorithms share a crucial difference with all previously known algorithms for this problem; during their execution they may densify the diagonal entries of the input matrices. On the one hand, there are at most $n$ diagonal entries, so this does not affect asymptotically their sparsity guarantees. On the other hand, as we will see later this twist turns out to give strong sparsification bounds.
Recall that the results of~\cite{graph:sparsifiers:eff_resistance,graph:sparsifiers:twice_ram} imply an element-wise sparsification algorithm that works only for Laplacian matrices. It is easy to verify that Laplacian matrices are also diagonally dominant. Here we extend these results to a wider class of matrices (with a weaker notion of approximation). The diagonally dominant assumption is too restrictive and we will show that our sparsification algorithms work for a wider class of matrices. To accommodate this, we say that a matrix $A$ is $\theta$-symmetric diagonally dominant (abbreviate by $\theta$-SDD) if $A$ is self-adjoint and the inequality $\infnorm{A} \leq \sqrt{\theta} \norm{A}$ holds.
By definition, any diagonally dominant matrix is also a $4$-SDD matrix. On the other extreme, every self-adjoint matrix of size $n$ is $n$-SDD since the inequality $\infnorm{A}\leq \sqrt{n} \norm{A}$ is always valid. The following elementary lemma gives a connection between element-wise matrix sparsification and spectral sparsification as defined in~\cite{phdthesis:Srivastava:2010}.
\begin{lemma}\label{lem:sparsif:decomp}
Let $A$ be a self-adjoint matrix of size $n$ and $R=\diag{R_1,R_2,\ldots ,R_n}$ where $R_i = \sum_{j\neq i} |A_{ij}|$. Then there is a matrix $C$ of size $n\times m$ with $m \leq \binom{n}{2}$ such that
\begin{eqnarray}\label{eqn:sparsify_lemma}
 A = CC^\top +\diag{A} - R.
\end{eqnarray}
Moreover, each column of $C$ is indexed by the ordered pairs $(i,j)$, $i<j$ and equals to $C^{(i,j)} = \sqrt{|A_{ij}|} e_i + \sign{A_{ij}}  \sqrt{|A_{ij}|} e_j$ for every $i<j$, $i,j\in[n]$.
\end{lemma}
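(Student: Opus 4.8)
The plan is to verify the claimed identity $A = CC^\top + \diag{A} - R$ directly, entry by entry, using the explicit description of the columns of $C$. First I would fix the matrix $C$ whose columns are indexed by ordered pairs $(i,j)$ with $i<j$, setting $C^{(i,j)} = \sqrt{|A_{ij}|}\, e_i + \sign{A_{ij}}\sqrt{|A_{ij}|}\, e_j$. Then $CC^\top = \sum_{i<j} C^{(i,j)} \otimes C^{(i,j)}$, and I would expand each rank-one term:
\[
 C^{(i,j)} \otimes C^{(i,j)} = |A_{ij}|\, e_i\otimes e_i + |A_{ij}|\, e_j \otimes e_j + \sign{A_{ij}}|A_{ij}|\, (e_i \otimes e_j + e_j \otimes e_i).
\]
Since $\sign{A_{ij}}|A_{ij}| = A_{ij}$ and (by self-adjointness) $A_{ij}=A_{ji}$, summing over all pairs $i<j$ contributes $A_{ij}$ to the off-diagonal entry $(i,j)$ and to $(j,i)$, exactly matching the off-diagonal part of $A$.

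Next I would track the diagonal contributions. The term $|A_{ij}|\, e_i\otimes e_i + |A_{ij}|\, e_j\otimes e_j$ deposits $|A_{ij}|$ into the $(i,i)$ and $(j,j)$ diagonal slots, so the $(k,k)$ entry of $CC^\top$ equals $\sum_{j\neq k} |A_{kj}| = R_k$. Hence $CC^\top$ has the same off-diagonal entries as $A$ and diagonal entries $R_k$; adding $\diag{A} - R$ replaces the diagonal $R_k$ by $A_{kk}$ while leaving the off-diagonal entries untouched, which gives exactly $A$. Finally I would note that the number of columns is $\binom{n}{2}$ in general, but columns with $A_{ij}=0$ are zero and may be dropped, so $m \le \binom{n}{2}$; more precisely one keeps only the pairs with $A_{ij}\neq 0$.

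This lemma is a routine bookkeeping computation, so there is no real obstacle; the only point requiring a moment's care is the sign bookkeeping in the cross terms, namely that $\sign{A_{ij}}\sqrt{|A_{ij}|}\cdot\sqrt{|A_{ij}|} = A_{ij}$ and that self-adjointness makes the $(i,j)$ and $(j,i)$ contributions consistent with $A$. I would present the proof as the entrywise verification above, perhaps displaying the single identity $(CC^\top)_{kl} = \sum_{i<j} C^{(i,j)}_k C^{(i,j)}_l$ and then splitting into the cases $k=l$ and $k\neq l$.
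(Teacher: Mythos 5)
Your proof is correct and follows the same route as the paper: expand $CC^\top=\sum_{i<j}C^{(i,j)}\otimes C^{(i,j)}$, verify the off-diagonal entries using $\sign{A_{ij}}|A_{ij}|=A_{ij}$ and $A_{ij}=A_{ji}$, and show the diagonal of $CC^\top$ equals $R$ so that adding $\diag{A}-R$ recovers $A$. Your expansion of the rank-one term is in fact stated more cleanly than the paper's (which contains a typo, writing $A_{lk}\,e_k\otimes e_k$ where $A_{lk}\,e_l\otimes e_k$ is meant), and your remark that zero columns may be dropped correctly justifies $m\le\binom{n}{2}$.
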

\begin{remark}
In the special case where $A$ is the Laplacian matrix of some graph, the above decomposition is precisely the vertex-edge decomposition of the Laplacian matrix, since in this case $\diag{A} =R$.
\end{remark}
Using the above lemma, we give a randomized and a deterministic algorithm for sparsifying $\theta$-SDD matrices. First we present the randomized algorithm.
\begin{theorem}\label{thm:matrix_sparsif:rand}
Let $A$ be a $\theta$-SDD matrix of size $n$ and $ 0 < \eps <1$.  There is a randomized linear time algorithm that, given $A$, $\norm{A}$ and $\eps$, outputs a matrix $\widetilde{A}\in \reals^{n\times n}$ with at most $\OO( n\theta \log n /\eps^2)$ non-zero entries such that w.p. at least $1-1/n$, $\norm{A - \widetilde{A}} \leq \eps \norm{A}.$
\end{theorem}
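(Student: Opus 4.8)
The plan is to invoke the decomposition of Lemma~\ref{lem:sparsif:decomp}, leave its diagonal part $\diag{A}-R$ completely untouched, and sparsify only the rank-one part $CC^\top$ by i.i.d.\ sampling of its columns, analyzed with the matrix Bernstein inequality~\cite{chernoff:matrix_valued:Tropp}. Concretely, write $A = CC^\top + \diag{A} - R$ with $C$ as in Lemma~\ref{lem:sparsif:decomp}, so $CC^\top = \sum_{i<j} C^{(i,j)}\otimes C^{(i,j)}$ and $\norm{C^{(i,j)}}^2 = 2|A_{ij}|$. The algorithm I would analyze draws $t$ pairs $(i,j)$ independently, each with probability $p_{ij} := \norm{C^{(i,j)}}^2/\frobnorm{C}^2 = 2|A_{ij}|/\sum_k R_k$, and outputs $\widetilde{A} := \frac1t\sum_{\ell=1}^{t}\frac1{p_{(i_\ell,j_\ell)}}C^{(i_\ell,j_\ell)}\otimes C^{(i_\ell,j_\ell)} + \diag{A} - R$. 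Since $A-\widetilde{A}$ is exactly the difference between $CC^\top$ and the sampled sum $\widetilde{CC^\top}$, the whole task reduces to showing $\norm{CC^\top-\widetilde{CC^\top}}\le\eps\norm{A}$ with high probability.

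For that I would apply matrix Bernstein to $t$ i.i.d.\ copies of the zero-mean self-adjoint matrix $f := \frac1{p_{IJ}}C^{(I,J)}\otimes C^{(I,J)} - CC^\top$, where $(I,J)$ has law $p$; I need $\gamma\ge\max_{i<j}\norm{f}$ and $\rho^2\ge\norm{\EE f^2}$. The norm bound is immediate: $\frac1{p_{ij}}\norm{C^{(i,j)}\otimes C^{(i,j)}} = \frobnorm{C}^2$ and $CC^\top\preceq\trace{CC^\top}\Id=\frobnorm{C}^2\Id$, so $\gamma\le\frobnorm{C}^2=\sum_k R_k$. For the variance, a one-line rank-one computation gives $\EE f^2 = \frobnorm{C}^2\,CC^\top - (CC^\top)^2 \preceq \frobnorm{C}^2\,CC^\top$, hence $\rho^2\le\frobnorm{C}^2\norm{CC^\top}$. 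Here the $\theta$-SDD hypothesis enters twice. First, $R_k\le\infnorm{A}\le\sqrt\theta\norm{A}$ for every $k$, so $\frobnorm{C}^2=\sum_k R_k\le n\sqrt\theta\norm{A}$. Second, $CC^\top = A + (R-\diag{A})$ and $R-\diag{A}$ is diagonal with entries of absolute value at most $|A_{kk}|+R_k\le\infnorm{A}\le\sqrt\theta\norm{A}$, so $\norm{CC^\top}\le(1+\sqrt\theta)\norm{A}$. Combining, $\gamma=\OO(n\sqrt\theta\norm{A})$ and $\rho^2=\OO(n\theta\norm{A}^2)$, and the standard matrix Bernstein tail with deviation $r=\eps\norm{A}$, together with a union bound over the two extreme eigenvalues, shows that $t=\OO((\rho^2+\gamma r)r^{-2}\log n)=\OO(n\theta\log n/\eps^2)$ samples force $\norm{CC^\top-\widetilde{CC^\top}}\le\eps\norm{A}$ with probability at least $1-1/n$.

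Finally I would count nonzeros and bound the runtime. Each $C^{(i,j)}\otimes C^{(i,j)}$ is supported on $\{(i,i),(i,j),(j,i),(j,j)\}$, so $\widetilde{CC^\top}$ has at most $2t$ off-diagonal nonzeros and $n$ diagonal ones; adding the diagonal matrix $\diag{A}-R$ affects only the diagonal, so $\nnz{\widetilde{A}}\le 2t+n=\OO(n\theta\log n/\eps^2)$. The numbers $R_1,\dots,R_n$ and the sampling distribution are built in one pass over $A$ in $\OO(\nnz{A})$ time, the $t$ samples are drawn in $\OO(\nnz{A}+t)$ time (e.g.\ via an alias table), assembling $\widetilde{A}$ costs $\OO(t+n)$, and $\norm{A}$ (hence an admissible $\theta$, since one may take $\theta=\lceil(\infnorm{A}/\norm{A})^2\rceil$) is supplied or computed in $\OO(\nnz{A})$ and used only to fix $t$. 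So the total is linear.

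The step I expect to be the real obstacle is the variance estimate: using only the crude $\norm{CC^\top}\le\frobnorm{C}^2$ would give $\rho^2=\OO(n^2\theta\norm{A}^2)$ and a useless $t=\OO(n^2\theta\log n/\eps^2)$. Getting the claimed sparsity forces the sharper bound $\norm{CC^\top}=\OO(\sqrt\theta\norm{A})$, which is precisely what the $\theta$-SDD inequality $\infnorm{A}\le\sqrt\theta\norm{A}$ delivers when applied to the diagonal perturbation $R-\diag{A}$; symmetrically, $\theta$-SDD-ness is what converts the generic trace bound $\frobnorm{C}^2\le n\infnorm{A}$ into $n\sqrt\theta\norm{A}$. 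A minor point to keep honest is that keeping $\diag{A}-R$ exactly (the deliberate diagonal densification) adds at most $n$ nonzeros, absorbed since $\theta\log n/\eps^2\ge1$.
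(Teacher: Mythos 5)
Your proposal is correct and follows essentially the same route as the paper's proof: decompose $A = CC^\top + \diag{A} - R$ via Lemma~\ref{lem:sparsif:decomp}, sparsify only $CC^\top$ by sampling columns $C^{(i,j)}$ with probabilities proportional to $\norm{C^{(i,j)}}^2=2|A_{ij}|$, and analyze via matrix Bernstein, using the $\theta$-SDD hypothesis exactly where you do---to bound $\frobnorm{C}^2 \leq n\sqrt\theta\norm{A}$ and $\norm{CC^\top}\leq (1+\sqrt\theta)\norm{A}\leq 2\sqrt\theta\norm{A}$. The only (immaterial) differences are that your $\gamma$ and $\rho^2$ bounds are tighter by a factor of $2$.
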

Next we state the derandomized algorithm of the above result.
\begin{theorem}\label{thm:matrix_sparsif:det}
Let $A$ be a $\theta$-SDD matrix of size $n$ and $ 0 < \eps <1/2$.  There is an algorithm that, given $A$ and $ \eps$, outputs a matrix $\widetilde{A}\in \reals^{n\times n}$ with at most $\OO( n \theta /\eps^2)$ non-zero entries such that $\norm{A - \widetilde{A}} \leq \eps \norm{A}$. Moreover, the algorithm computes $\widetilde{A}$ in deterministic $\widetilde{\OO}(\nnz{A}n^2 \theta\log^3 n  /\eps^2 + n^4 \theta^2 \log n /\eps^4)$ time.
\end{theorem}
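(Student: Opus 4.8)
The plan is to reduce element-wise sparsification of a $\theta$-SDD matrix to spectral sparsification of a positive semi-definite matrix via Lemma~\ref{lem:sparsif:decomp}, and then invoke Theorem~\ref{thm:sparsification:here}. Normalizing so that $\norm{A}=1$, I would first write $A = CC^\top + \diag{A} - R$ as in Lemma~\ref{lem:sparsif:decomp}, where $CC^\top = \sum_{i<j,\,A_{ij}\neq 0} C^{(i,j)}\otimes C^{(i,j)}$ is a sum of at most $\nnz{A}$ rank-one terms, each supported on the four entries indexed by $\{i,j\}$. The matrix $D:=\diag{A}-R$ is diagonal, so it contributes at most $n$ nonzero entries and is left untouched; only $CC^\top$ will be sparsified. (If $\nnz{A}<n$ the matrix $A$ already has $\OO(n)$ nonzeros and one simply returns $\widetilde{A}=A$, so assume $\nnz{A}\geq n$.)

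The crucial observation is that the $\theta$-SDD hypothesis controls $\norm{CC^\top}$ in terms of $\norm{A}$. Indeed, the $i$-th diagonal entry of $CC^\top$ equals $R_i=\sum_{j\neq i}|A_{ij}|$ and its off-diagonal entries coincide with those of $A$, so $\infnorm{CC^\top}=2\max_i R_i\leq 2\infnorm{A}\leq 2\sqrt{\theta}\norm{A}$; since $CC^\top$ is self-adjoint, $\norm{CC^\top}\leq\infnorm{CC^\top}\leq 2\sqrt{\theta}$. Now set $\eps' = \Theta(\eps/\sqrt{\theta})$, small enough that $\max\{(1+\eps')^3-1,\,1-(1-\eps')^3\}\cdot 2\sqrt{\theta}\leq\eps$ (possible since $\eps<1/2$ forces $\eps'<1$, using also that $\theta\geq 1$ as $\norm{A}\leq\infnorm{A}$), and apply Theorem~\ref{thm:sparsification:here} to the representation of $CC^\top$ with error parameter $\eps'$. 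This yields weights $\{s_{ij}\}$, at most $\lceil n/\eps'^2\rceil = \OO(n\theta/\eps^2)$ of them nonzero, with $(1-\eps')^3 CC^\top \preceq \widetilde{M} \preceq (1+\eps')^3 CC^\top$ where $\widetilde{M} = \sum s_{ij}\,C^{(i,j)}\otimes C^{(i,j)}$.

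Setting $\widetilde{A} := \widetilde{M} + D$, the support of $\widetilde{M}$ has size $\OO(n\theta/\eps^2)$ since each chosen rank-one term touches four entries, and $D$ adds at most $n$ more, so $\nnz{\widetilde{A}} = \OO(n\theta/\eps^2)$. Moreover $A - \widetilde{A} = CC^\top - \widetilde{M}$, and the two-sided spectral bound together with $CC^\top\preceq\norm{CC^\top}\Id\preceq 2\sqrt{\theta}\,\Id$ gives $\norm{A-\widetilde{A}} = \norm{CC^\top-\widetilde{M}} \leq \max\{(1+\eps')^3-1,\,1-(1-\eps')^3\}\,\norm{CC^\top}\leq\eps=\eps\norm{A}$, as required. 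For the running time, the decomposition is computed in $\OO(\nnz{A})$ time, and Theorem~\ref{thm:sparsification:here} applied with $m=\nnz{A}$ rank-one terms and error $\eps'=\Theta(\eps/\sqrt{\theta})$ runs in $\widetilde{\OO}(mn^2\log^3 n/\eps'^2 + n^4\log n/\eps'^4) = \widetilde{\OO}(\nnz{A}\,n^2\theta\log^3 n/\eps^2 + n^4\theta^2\log n/\eps^4)$ time, matching the claim.

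The main obstacle — essentially the only conceptual point — is the norm comparison in the second paragraph: the spectral sparsifier gives only a \emph{multiplicative} guarantee on $CC^\top$, which is useless for an additive operator-norm bound unless $\norm{CC^\top}$ is comparable to $\norm{A}$, and this is exactly where the $\theta$-SDD condition enters (and why the $\theta$- and $\eps$-dependencies in the sparsity and running time come out the way they do). A secondary bookkeeping point is to propagate the rescaling $\eps\mapsto\eps'=\Theta(\eps/\sqrt{\theta})$ consistently through both the $\OO(n/\eps'^2)$ sparsity bound of Theorem~\ref{thm:sparsification:here} and its running time.
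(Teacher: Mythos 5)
Your proposal is correct and follows essentially the same route as the paper: decompose $A = CC^\top + \diag{A} - R$ via Lemma~\ref{lem:sparsif:decomp}, apply Theorem~\ref{thm:sparsification:here} to $CC^\top$, bound $\norm{CC^\top} = \OO(\sqrt{\theta})\norm{A}$ using the $\theta$-SDD hypothesis, and rescale $\eps \mapsto \Theta(\eps/\sqrt{\theta})$ to absorb the multiplicative-to-additive loss into the sparsity and running-time bounds. The only cosmetic differences are that the paper converts the $(1\pm\eps)^3$ sandwich to an operator-norm bound via Weyl's inequality rather than your direct computation of $\max\{(1+\eps')^3-1,\,1-(1-\eps')^3\}$, and bounds $\norm{CC^\top}$ by triangle inequality on $\norm{A-\diag{A}+R}$ rather than via $\infnorm{CC^\top}$.
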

\begin{remark}
The results of~\cite{graph:sparsifiers:twice_ram,phdthesis:Srivastava:2010} imply a deterministic $\OO(\nnz{A} \theta n^3 /\eps^2 )$ time algorithm that outputs a matrix $\widetilde{A}$ with at most $ \lceil 19(1+\sqrt{\theta})^2 /\eps^2\rceil n $ non-zero entries such that $\norm{\widetilde{A}-A} \leq \eps\norm{A}$.
\end{remark}
%
%
%
%
\subsection*{Acknowledgements}
%
%
%
%
The author would like to thank Mark Braverman for several interesting discussions and comments about this work.
%
%
{

}
%
%
\section*{Appendix}
\section*{Fast Multiplication with Cauchy Matrices and Special Eigensystems}
We start by defining the so-called Cauchy (generalized Hilbert) matrices. An $m\times n$ matrix $C$ defined by
\[ C_{i,j} := \frac1{t_i - s_j},\quad i\in{[m]},j\in{[n]},\]
where $t=(t_1,\ldots ,t_m),\ t\in\reals^m$ and $s=(s_1,\ldots ,s_n),\ s\in\reals^n$  and $t_i\neq s_j$ for all $i\in{[m]}$ and $j\in{[n]}$ is called \emph{Cauchy}. Given a vector $x\in\reals^n$, the naive algorithm for computing the matrix-vector product $C x$ requires $\OO(mn)$ operations. It is not clear if it is possible to perform this computation in less than $\OO(mn)$ operations. Surprisingly enough, it is possible to compute this product with $\OO((m+n)\log^2 (m+n))$ operations. This computation can be done by two different approaches. The first one is based on fast polynomial multiplication, polynomial interpolation and polynomial evaluation at distinct points~\cite[Algorithm~$1$, p.~$130$]{book:fast_matrix:Bini_Pan}. The main drawback of this approach is its numerical instability. The second approach is based on the so-called Fast Multipole Method (FMM) introduced in~\cite{FMM:CGR}. This method returns an approximate solution to the matrix-vector product for any given error parameter\footnote{That is, given an $n\times n$ Cauchy matrix, a vector $x\in\reals^n$ and $0<\eps< 1$, returns a vector $y\in\reals^n$ so that $\infnorm{y - Cx} \leq \eps$ in time $\OO(n \log^2 (1/\eps))$. In an actual implementation, setting $\eps$ to be a small constant relative to the machine's (numerical) precision suffices; see~\cite[\S~$3$]{Gu:update} for a more careful implementation and discussion on numerical issues.}. Ignoring numerical issues that are beyond the scope of this work, we summarize our discussion to the following 
\begin{lemma}\cite{book:fast_matrix:Bini_Pan,FMM:CGR}\label{lem:fast_mm:gerasoulis}
Let $x\in\reals^n$ and $C$ be a Cauchy matrix defined as above with $t\in\reals^m, s\in\reals^n$. There is an algorithm that, given vectors $s,t,x$, computes the product $C  x$ using $\OO((m+n)\log^2 (m+n))$ operations.
\end{lemma}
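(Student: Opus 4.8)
The plan is to reduce the computation of $Cx$ to classical fast polynomial arithmetic, so that, numerical issues aside, the stated operation count follows by combining three standard subroutines: FFT-based polynomial multiplication, the subproduct (product) tree, and fast multipoint evaluation. The starting observation is that the $i$-th coordinate of $Cx$ is
\[
(Cx)_i \;=\; \sum_{j=1}^{n}\frac{x_j}{t_i - s_j}.
\]
Introduce the degree-$n$ polynomial $q(z) := \prod_{j=1}^{n}(z - s_j)$ and the degree-$(n-1)$ polynomial $p(z) := \sum_{j=1}^{n} x_j \prod_{k\neq j}(z - s_k)$. Since $t_i \neq s_j$ for all $i\in[m]$, $j\in[n]$ (this is part of the definition of a Cauchy matrix), we have $q(t_i)\neq 0$ and $(Cx)_i = p(t_i)/q(t_i)$. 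Hence it suffices to (a) compute the coefficient vectors of $p$ and $q$, and (b) evaluate both polynomials at the $m$ points $t_1,\ldots,t_m$.

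For step (a) I would use a single binary subproduct tree over the leaves $1,\ldots,n$: at leaf $j$ store the pair $(x_j,\; z - s_j)$, and at an internal node whose children carry pairs $(p_L,q_L)$ and $(p_R,q_R)$ store $(p_L q_R + p_R q_L,\; q_L q_R)$. An easy induction on the tree shows the root carries exactly $(p,q)$. Each level performs FFT-based polynomial multiplications whose degrees sum to $\OO(n)$, costing $\OO(n\log n)$ per level, and the tree has $\OO(\log n)$ levels, so step (a) runs in $\OO(n\log^2 n)$ operations. For step (b) I would invoke fast multipoint evaluation, which evaluates a degree-$d$ polynomial at $k$ arbitrary points in $\OO((d+k)\log^2(d+k))$ operations by building a product tree over the evaluation points and recursively taking remainders down it; applied to $p$ and $q$, and splitting the $t_i$ into $\lceil m/n\rceil$ batches of $n$ points when $m>n$, this costs $\OO((m+n)\log^2(m+n))$. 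The final $m$ divisions $p(t_i)/q(t_i)$ contribute only $\OO(m)$ more operations, and summing the three contributions gives the claimed bound.

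The work here is essentially bookkeeping rather than a genuine obstacle: one must verify that the polynomial degrees occurring at each level of both trees are bounded so that each FFT runs on arrays of size matching that level, and that the recurrence $T(N) = 2T(N/2) + \OO(N\log N)$ solves to $\OO(N\log^2 N)$; the mismatch $m\neq n$ is absorbed by the batching remark. I would also note, as the text does, that this exact-arithmetic route is numerically unstable, so in practice one uses the Fast Multipole Method of~\cite{FMM:CGR}, which returns an $\eps$-approximate product in $\OO((m+n)\log^2(1/\eps))$ time and is the variant actually exploited elsewhere in the paper; either route yields the operation count asserted in the lemma.
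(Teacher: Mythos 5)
Your proposal is correct and fills in exactly the route the paper only cites: the paper gives no proof of this lemma, instead pointing to Bini--Pan's fast polynomial-arithmetic algorithm and, for a numerically stable alternative, the FMM of Carrier--Greengard--Rokhlin; your reduction of $Cx$ to $p(t_i)/q(t_i)$ via a subproduct tree and fast multipoint evaluation is the standard content of that cited algorithm and yields the stated $\OO((m+n)\log^2(m+n))$ bound. Your closing remark on instability and the FMM-based $\eps$-approximate variant also matches the paper's own discussion, so there is no substantive divergence.
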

%
Given a self-adjoint matrix $B = \Sigma + \rho u \otimes u$, where $\Sigma = \diag{\sigma_1 ,\ldots ,\sigma_n}$, $\rho >0$ and $u\in\reals^n$ is a unit vector, our goal is to efficiently compute all the eigenvalues of $B$. It is well-known that the eigenvalues of $B$ are the roots of a special function, known as secular function~\cite{rank_one_update:Golub} and are interlaced with $\{\sigma_{i}\}_{i\leq n}$. In addition, evaluating the secular function requires $\OO(n)$ operations implying that a standard (Newton) root-finding procedure requires $\OO(n)$ operations per each eigenvalue. Hence, $\OO(n^2)$ operations are required for all eigenvalues. In their seminal paper~\cite{Gu:update}, Gu and Eisenstat showed that it is possible to encode the updates of the root-finding procedure for \emph{all} eigenvalues as matrix-vector multiplication with an $n\times n$ Cauchy matrix. Based on this observation, they showed how to use the Fast Multipole Method for approximately computing all the eigenvalues of this special type of eigenvalue problem.
\begin{lemma}\cite{Gu:update}\label{lem:comp_eigs}
Let $b\in\NN$, $\rho>0$, $\Sigma=\diag{\sigma_1,\sigma_2,\ldots , \sigma_n}$ and $u\in\reals^n$ be a unit vector. There is an algorithm that given $\Sigma, \rho, u$ computes all the eigenvalues of $B=\Sigma + \rho u \otimes u$ within an additive error $2^{-b}\norm{B}$ in $\OO(n \log^2 n \log b )$ operations.
\end{lemma}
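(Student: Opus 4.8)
(Sketch)
The plan is to turn the eigenvalue computation into $n$ simultaneous one-dimensional root-finding problems for the \emph{secular equation} and then to observe that a single step of all $n$ root-finders can be executed together by $\OO(1)$ fast Cauchy matrix--vector multiplications. First I would reduce to the nondegenerate case: after sorting the $\sigma_i$ (cost $\OO(n\log n)$) and deflating we may assume $\sigma_1 < \sigma_2 < \dots < \sigma_n$ and $u_i \neq 0$ for every $i$, since any repeated $\sigma_i$ or any $\sigma_i$ with $u_i = 0$ is already an eigenvalue of $B$ with an obvious eigenvector orthogonal to $u$ and is output directly. For the reduced data, if $Bv = \lambda v$ with $\lambda\notin\{\sigma_i\}$ then $v = -\rho(u^\top v)(\Sigma - \lambda\Id)^{-1}u$ and $u^\top v \neq 0$ (otherwise $(\Sigma-\lambda\Id)v=\zeromtx$ forces $v=\zeromtx$), so $\lambda$ is an eigenvalue of $B$ exactly when $w(\lambda) := 1 + \rho\sum_{i=1}^{n} u_i^2/(\sigma_i - \lambda) = 0$. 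Because $\rho > 0$ and all $u_i\neq 0$, $w$ is strictly increasing from $-\infty$ at the left endpoint to a positive value (possibly $+\infty$) at the right endpoint of each interval $(\sigma_i,\sigma_{i+1})$ and of $(\sigma_n,\infty)$; hence $B$ has exactly one eigenvalue $\lambda_i$ in each, and the $\lambda_i$ interlace the $\sigma_i$.

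Next I would solve each scalar equation $w=0$ on its interval by the classical safeguarded rational-interpolation iteration for secular equations (Golub~\cite{rank_one_update:Golub}, with the pole-respecting variants used in~\cite{Gu:update}): at the current iterate one fits a low-order rational model of $w$ built from the two neighbouring poles $\sigma_i,\sigma_{i+1}$ and from the values and derivatives of the ``left'' partial sum $\sum_{k\le i}$ and the ``right'' partial sum $\sum_{k>i}$ at that iterate, and takes the model's zero inside the interval as the next iterate. This iteration never leaves the interval and converges at least quadratically, so from the standard starting point $\OO(\log b)$ iterations bring the iterate within additive error $2^{-b}\norm{B}$ of $\lambda_i$. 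Done separately for the $n$ eigenvalues, one step costs $\OO(n)$ arithmetic operations and the whole scheme $\OO(n^2\log b)$ operations.

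Finally comes the acceleration. I would run the $n$ iterations in lockstep, keeping after round $k$ the vector $(\lambda_1^{(k)},\dots,\lambda_n^{(k)})$ of current iterates. Advancing one round needs, for every $i$, the quantities $w(\lambda_i^{(k)})$, $w'(\lambda_i^{(k)})$ and the associated partial sums; collected over all $i$, each such family of numbers is a matrix--vector product $Cx$ or $(C\circ C)x$ (the entrywise square arising from the derivatives), where $C = \left[\,1/(\sigma_j - \lambda_i^{(k)})\,\right]_{i,j}$ is an $n\times n$ Cauchy matrix with node vectors $(\lambda_i^{(k)})_i$ and $(\sigma_j)_j$ and $x$ is the fixed vector with entries $u_j^2$, with the partial sums reduced to $\OO(1)$ more such products by a hierarchical splitting of $[n]$ as in~\cite{Gu:update}. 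By Lemma~\ref{lem:fast_mm:gerasoulis}, together with the fast-multipole evaluation of the squared kernel, each product costs $\OO(n\log^2 n)$, so one lockstep round costs $\OO(n\log^2 n)$ and the root-finding phase costs $\OO(n\log^2 n\log b)$; eigenvalues that have converged are frozen, and the $\OO(n\log n)$ sorting and deflation are absorbed. The part I expect to be the genuine obstacle is the convergence bookkeeping: showing that $\OO(\log b)$ iterations suffice \emph{uniformly} across all $n$ intervals, replacing the reference to $\norm{B}$ by an implementable residual-based stopping rule, and controlling the iteration when an eigenvalue lies pathologically close to some $\sigma_i$ --- exactly the delicate part of the analysis of Gu and Eisenstat~\cite{Gu:update}.
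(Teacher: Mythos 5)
Your sketch is correct and follows exactly the route the paper relies on: reduce to the secular equation (deflating repeated $\sigma_i$ and zero components of $u$), use interlacing and monotonicity to isolate one root per interval, run the safeguarded root-finders in lockstep, and encode each lockstep round as a few Cauchy matrix--vector products accelerated by the fast multipole method, with the delicate uniform-convergence and near-pole analysis deferred to Gu and Eisenstat. The paper itself does not reprove this lemma but cites~\cite{Gu:update} with precisely this outline, so there is nothing to add.
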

%
%
%
%
\section*{Omitted Proofs}
%
%
\begin{proof}(of Lemma~\ref{lem:expm:outerprod})
The proof is immediate by the definition of the matrix exponential. Notice that $(x \otimes x)^k = \norm{x}^{2(k-1)} x\otimes x$ for $k\geq 1$.
\begin{eqnarray*}
		\expm{x \otimes x}  & = & \Id + \sum_{k=1}^{\infty}{ \frac{(x \otimes x)^k}{k!}} \  =\  \Id + \sum_{k=1}^{\infty}{ \frac{\norm{x}^{2(k-1)} x\otimes x}{k!}}
				\ignore{\ = \ \Id + \left(\sum_{k=1}^{\infty}{ \frac{\norm{x}^{2k}}{k!}}\right) \frac{xx^\top}{\norm{x}^2} \\
				& = & \Id + \left(\sum_{k=1}^{\infty}{ \frac{\norm{x}^{2k}}{k!}}\right) \frac{xx^\top}{\norm{x}^2} }
				\ = \ \Id + \frac{\e^{\norm{x}^2} - 1}{\norm{x}^2} x\otimes x.
\end{eqnarray*}
Similar considerations give that $\expm{-x\otimes x } = \Id - \frac{1 - \e^{-\norm{x}^2}}{\norm{x}^2} x \otimes x$.
\end{proof}
%

%
\begin{proof}(of Lemma~\ref{lem:dil_vs_expm})
Set $B :=\dil{A} = \left[ \begin{matrix} \zeromtx & A \\
 A^\top & \zeromtx
\end{matrix}\right]$. Notice that for any integer $k\geq 1$, $B^{2k} = \left[ \begin{matrix}
 A^{2k} & \zeromtx \\
 \zeromtx & A^{2k}
\end{matrix}\right]$ and $
B^{2k+1} = \left[ \begin{matrix}
 \zeromtx & A^{2k+1} \\
 A^{2k+1} & \zeromtx
\end{matrix}\right]$. Since the odd powers of $B$ are trace-less, it follows that
\begin{eqnarray*}
      \trace{ \expm{ B }} &  =  &  \trace{ \Id_{2n}  + \sum_{k=1}^{\infty}  \frac{B^{2k}}{(2k ) !}    + \sum_{k=0}^{\infty}  \frac{B^{2k+1}}{(2k + 1 ) !} }
			  \  =  \  \trace{ \Id_{2n}  + \sum_{k=1}^{\infty}  \frac{B^{2k}}{(2k ) !} } \\
			  &  =  &  2 \trace{ \Id_{n}  + \sum_{k=1}^{\infty}  \frac{A^{2k}}{(2k ) !} }
			  \  =  \  \trace{ \expm{A} + \expm{- A}  }
			  \  =  \  2\trace{ \coshm{A}}.
\end{eqnarray*}
\end{proof}
%
%
\begin{proof}(of Lemma~\ref{lem:coshm_with_proj})
By the definition of $\coshm{\cdot}$, it suffices to show that $\expm{PA}=P\expm{A}+ \Id - P$, 
\begin{eqnarray*}
	\expm{PA}= \Id + \sum_{k=1}^{\infty} \frac{(PA)^k}{k!} = \Id + P\sum_{k=1}^{\infty} \frac{A^k}{k!} = P\expm{A}+ \Id - P.
\end{eqnarray*}
\end{proof}
%
\begin{proof}(of Lemma~\ref{lem:balanc_mtx})
We wish to apply matrix Azuma's inequality, see~\cite[Theorem~$7.1$]{chernoff:matrix_valued:Tropp}. For every $j\in{[n]}$, define the matrix-valued difference sequence $f_j: [2] \to \Sym^{n\times n}$ as $f_j(k) = (2(k -1) -1 )M_j $ with $\norm{f_j(\cdot)} \leq 1$. Let $X$ be a uniform random variable over the set $[2]$. Then $\EE_X f_j(X)= \zeromtx_n$. Set $\eps = \sqrt{10\ln (4 n) / n}$. Matrix-valued Azuma's inequality tells us that w.p. at least $1/2$, a random set of signs $\{s_j\}_{j\in{[n]}}$ satisfies $\norm{\frac1{n} \sum_{j=1}^{n} s_j M_j } \leq \eps$. Rescale the last inequality to conclude.
\end{proof}
%
\begin{proof}(of Theorem~\ref{thm:hypercosine:main})
Using the notation of Algorithm~\ref{alg:derand:Bernstein}, for every $i=1,2,\ldots , t$, define recursively $W(i) := \theta \sum_{j=1}^{i} f_j(x_j^*)$ and the potential function $\Phi^{(i)} := 2\trace{\coshm{W(i)}}$. For all steps $i=1,2,\ldots , t$, we will prove that
\begin{eqnarray}\label{ineq:barrier_incr}
  \Phi^{(i)}   & \leq & \Phi^{(i-1)} \exp\left( \eps^2 \rho^2/\gamma^2  \right).
\end{eqnarray}
Assume that the algorithm has fixed the first $(i-1)$ indices $x_1^*,\ldots ,x_{(i-1)}^* $. An averaging argument applied on the expression of the argmin of Step $4$ gives that
\begin{eqnarray*}
\EE_{X_i}  2\trace{\coshm{ \theta  W(i - 1) + \theta  f_i(X_i)}} &   =  & \EE_{X_i}  \trace{\expm{ \theta \dil{ W(i - 1)} + \theta \dil{ f_i(X_i)} }} \\
                                                                   & \leq &   \trace{\expm{ \dil{\theta W(i - 1)}} \EE_{X_i} \expm{\dil{ \theta f_i(X_i)} }} \\
                                                                   & \leq &   \trace{\expm{ \dil{\theta W(i - 1)}}  \Id_{2n}} \exp\left( \eps^2 \rho^2 / \gamma^2 \right) \\
                                                                   &   =  &   \Phi^{(i-1)} \exp\left( \eps^2 \rho^2 / \gamma^2 \right)
\end{eqnarray*}
where in the first inequality we used Lemma~\ref{lem:dil_vs_expm} and linearity of dilation, in the second inequality we used the Golden-Thompson inequality (Lemma~\ref{lem:ineq:golden_thompson}) and linearity of trace, in the third inequality we used Lemma~\ref{lem:trace:incr_psd} together with Lemma~\ref{lem:bounding_w} and in the last equality we used again Lemma~\ref{lem:dil_vs_expm}. Since the algorithm seeks the minimum of the expression in Step $4$, it follows that $\Phi^{(i)} \leq \EE_{X_i}  2\trace{\expm{ \theta \dil{ W(i-1)} + \theta \dil{ f_i(X_i)} }}$ which proves Ineq.~\eqref{ineq:barrier_incr}. Apply $t$ times Ineq.~\eqref{ineq:barrier_incr} to conclude that $\Phi^{(t)} \leq \Phi^{(0)} \exp\left( t\frac{\eps^2 \rho^2}{ \gamma^2} \right).$
Recall that $\Phi^{(0)} = 2\trace{\coshm{\zeromtx_n}} = 2\trace{\Id_n }  = 2n$. On the other hand, we can lower bound $\Phi^{(t)}$
\[\Phi^{(t)} = 2\trace{\coshm{\theta \sum_{j=1}^{t}  f_j(x_j^*)} }  \geq \exp\left(\norm{ \theta \sum_{j=1}^{t}  f_j(x_j^*)  }\right). \]
The last inequality follows since $2\trace{\coshm{ C}} = 2\sum_{i=1}^{n} \cosh ( \lambda_i ( C )) \geq 2\cosh\left(\lambda_{\max}( C )\right) + 2\cosh\left( \lambda_{\min}(C)\right)  \geq \exp (\norm{C})$ for any matrix $C\in\Sym^{n\times n}$ . Take logarithms on both sides and divide by $\theta$, we conclude that $\norm{ \sum_{j=1}^{t}  f_j(x_j^*) } \leq \frac{\ln (2n)}{\theta} + t\frac{\eps^2 \rho^2}{\theta \gamma^2}$.
Rescale by $t$ the last inequality to conclude the proof.
%
%
\end{proof}
%
%
\begin{proof}(of Lemma~\ref{lem:cosh_Estrada})
For notational convenience, set $P:= \Id_n - \J_n/n$ and $B := \frac{\theta}{2} \sum_{s\in S} (R(s) + R(s)^{-1})  $. Since $\J R(g) = R(g) \J = \J$, we have that $\trace{ \coshm{ \theta \sum_{s\in S} f(s) } } = \trace{ \coshm{ P B }}$. Now using Lemma~\ref{lem:coshm_with_proj}, it follows $\trace{ \coshm{ P B }} = \trace{P \coshm{B} + \Id - P} = \trace{\coshm{B}} + \trace{-\frac{\J}{n} \coshm{B} + \Id - P}$
Notice that $\J/n$ is a projector matrix, hence applying Lemmas~\ref{lem:expm:outerprod},\ref{lem:coshm_with_proj} we get that \[\trace{-\frac{\J}{n} \coshm{B} + \Id - P} = \trace{-\coshm{\J/n B} + P +\Id - P} = 1 - \cosh(\theta |S|).\]
\end{proof}
%
\begin{proof}(of Lemma~\ref{lem:fastEstrada:Cayley})
We will prove the Lemma for $EE(A,\theta)$, the other case is similar. Let $h:=\theta \sum_{s\in S} s$ be a group algebra element of $G$, i.e, $h\in \reals [G]$. Define $\expm{h} := \mathtt{id} + \sum_{k=1}^{\infty} \frac{h^{\star k}}{k!}$ and $T_l(h):= \mathtt{id} + \sum_{k=1}^{l} \frac{h^{\star k}}{k!}$ (where $h^{\star k}$ is the $k$-folded convolution/multiplication over $\reals [G]$) the exponential operator and its $l$ truncated Taylor series, respectively. Notice that $\theta A=\theta \sum_{s\in S} R(s) = R(h) $, so $EE(A,\theta) = \trace{\expm{R(h)}}= \trace{R(\expm{h})}$. We will show that the quantity $\trace{R(T_l(h) )}$ is a $\delta$ approximation for $EE(A,\theta)$ when $l\geq  \max\{ \log (n/\delta) , 2\e^2 |S| \theta\}$.

Compute the sum of $T_l(h)$ by summing each term one by one and keeping track of all the coefficients of the group algebra elements. The main observation is that at each step there are at most $n$ such coefficients since we are working over $\reals [G]$. For $k > 1$, compute the $k$-th term of the sum by $(\sum_{s\in S} c_s s)^k /k! = (\sum_{s\in S} c_s s)^{k - 1 }/(k - 1 )! \cdot \sum_{s\in S} (c_s/k) s.$
	Assume that we have computed the first term of the above product, which is some group algebra element denote it by $\sum_{g\in G} \beta_g g$ for some $\beta_g\in\reals$. Hence, at the next iteration, we have to compute the product/convolution of $\sum_{g\in G} \beta_g g$ with $\theta /k \sum_{s\in S} s$, which can be done in  $\OO( n |S|)$ time. Since the sum has $l$ terms, in total we require $\OO(n|S| l)$ operations. Now, we show that it is a $\delta$ approximation. We need the following fact (see~\cite[Theorem~$10.1$,~p.~$234$]{book:Higham:Matrix_fcn})
\begin{fact}\label{fact:expm:taylor_exp}
For any $B\in \reals^{n\times n}$, let $T_{l}(B) := \sum_{k=0}^{l} \frac{B^k}{k!}$. Then, $ \norm{ \expm{B} - T_{l}(B) } \leq \frac{\norm{B}^{l+1}}{(l+1)!} \e^{\norm{B}}.$
\end{fact}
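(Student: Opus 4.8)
The plan is to reduce the matrix inequality to a purely scalar tail estimate for the exponential series. First I would write the Taylor remainder explicitly: since $\expm{B} = \sum_{k=0}^{\infty} B^k/k!$ converges absolutely with respect to the (submultiplicative) operator norm, we have $\expm{B} - T_l(B) = \sum_{k=l+1}^{\infty} B^k/k!$. Applying the triangle inequality together with the bound $\norm{B^k} \leq \norm{B}^k$ gives
\[
\norm{\expm{B} - T_l(B)} \ \leq\ \sum_{k=l+1}^{\infty} \frac{\norm{B}^k}{k!},
\]
so the claim now depends on $B$ only through the scalar $x := \norm{B} \geq 0$.

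It then remains to establish the elementary inequality $\sum_{k=l+1}^{\infty} x^k/k! \leq \frac{x^{l+1}}{(l+1)!}\,\e^{x}$ for $x \geq 0$. I would prove this by factoring out the leading term: re-indexing $k = l+1+j$,
\[
\sum_{k=l+1}^{\infty} \frac{x^k}{k!} \ =\ \frac{x^{l+1}}{(l+1)!}\,\sum_{j=0}^{\infty} \frac{(l+1)!}{(l+1+j)!}\, x^{j},
\]
and observing that for every $j \geq 1$ one has $\frac{(l+1)!}{(l+1+j)!} = \prod_{i=1}^{j}\frac{1}{\,l+1+i\,} \leq \prod_{i=1}^{j}\frac{1}{i} = \frac{1}{j!}$ (the $j=0$ term being $1 = 1/0!$), because each factor $l+1+i$ is at least $i$. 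Hence the inner sum is bounded by $\sum_{j\geq 0} x^j/j! = \e^{x}$, and combining the two displays finishes the argument.

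There is no real obstacle here; the computation is routine. The only two points worth a moment of care are (i) justifying the term-by-term passage to norms, which follows from absolute convergence of the exponential series together with submultiplicativity of the operator norm, and (ii) the trivial factorial comparison $(l+1)!/(l+1+j)! \leq 1/j!$. An alternative would be to use the integral form of the Taylor remainder for the matrix path $t \mapsto \expm{tB}$, but the series estimate above is shorter and sidesteps any non-commutativity subtleties, so I would go with it.
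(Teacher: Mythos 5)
Your proof is correct. Note that the paper itself does not actually prove this fact—it is stated with a citation to Higham's \emph{Functions of Matrices} (Theorem~$10.1$, p.~$234$), and Lemma~\ref{lem:fastEstrada:Cayley} then uses it as a black box. Your argument is the standard one: pass the operator norm inside the tail of the exponential series via triangle inequality and submultiplicativity, then re-index the scalar tail $\sum_{k\ge l+1} x^k/k!$ and use the elementary comparison $(l+1)!/(l+1+j)! \leq 1/j!$ (valid since each of the $j$ factors $l+1+i$ is at least $i$) to pull out $x^{l+1}/(l+1)!$ and bound the remaining sum by $\e^x$. This is clean and matches the flavor of what Higham does; there is nothing missing. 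One could alternatively bound $(l+1)!/(l+1+j)! \leq (l+2)^{-j}$ to get the geometric-series variant $\frac{\norm{B}^{l+1}}{(l+1)!}\cdot\frac{1}{1-\norm{B}/(l+2)}$, which is sharper when $\norm{B}$ is small relative to $l$, but your $\e^{\norm{B}}$ bound holds unconditionally and is precisely what the paper states, so your choice is the right one here.
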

Notice that $ \norm{\theta A} = \norm{\sum_{s\in S}\theta R(s)} \leq \theta |S|$ by triangle inequality and the fact that $\norm{R(g)}=1$ for any $g\in G$. Applying Fact~\ref{fact:expm:taylor_exp} on $\theta A$ we get that
	\begin{eqnarray*}
		\norm{\expm{\theta A} - T_l(\theta A)}	& \leq & \frac{(\theta |S|)^{l+1}}{(l+1)!}\e^{\theta |S|}\ \leq\ \left(\frac{ \e\theta |S|}{l+1}\right)^{l+1} \e^{\theta |S|} \\
									&   =  & \left(\frac{ \e^{1+ (\theta |S|)/(l+1)}\theta |S|}{l+1}\right)^{l+1}  \leq \frac1{2^{l+1}} \leq \frac{\delta}{n}.
	\end{eqnarray*}
	where we used the inequality $(l+1)! \geq  (\frac{l+1}{e})^{l+1}$ and the assumption that $l\geq \max\{ \log (n/\delta) , 2 \e^2 \theta |S|\}$.
\end{proof}
%
%
\begin{lemma}\label{lem:technical_cosh}
Assume that the first $(i-1)$ indices, $i< t$ have been fixed by Algorithm~\ref{alg:fast:isotrop}. Let $\Phi_k^{(i)}$ be the value of the potential function when the index $k$ has been selected at the next iteration of the algorithm. Similarly, let $\widetilde{\Phi}_k^{(i)}$ be the (approximate) value of the potential function computed using Lemma~\ref{lem:comp_eigs} within an additive error $\delta>0$ for all eigenvalues. Then,
\begin{eqnarray*}
	\e^{-\delta} \Phi_k^{(i)} \leq \widetilde{\Phi}^{(i)}_k \leq  \e^{\delta} \Phi_k^{(i)}
\end{eqnarray*}
\end{lemma}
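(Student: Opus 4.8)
The plan is to make the potential function in Step~$5$ of Algorithm~\ref{alg:fast:isotrop} explicit in terms of the spectrum of a diagonal-plus-rank-one matrix, and then to reduce the claim to an elementary one-variable inequality for $\cosh$.

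First I would fix a candidate index $k$ and set $B := \Lambda_{\{i-1\}} + Z_{(k)}\otimes Z_{(k)}$. The matrix $\Lambda_{\{i-1\}}$ is diagonal --- it is $\zeromtx_n$ when $i=1$ and is produced by \textbf{eigs} at the previous iteration otherwise --- and $Z_{(k)}\otimes Z_{(k)} = \norm{Z_{(k)}}^2\, u\otimes u$ with $u := Z_{(k)}/\norm{Z_{(k)}}$ a unit vector, so $B = \Sigma + \rho\, u\otimes u$ with $\rho>0$, which is exactly the form to which Lemma~\ref{lem:comp_eigs} applies. Let $\mu_1,\dots,\mu_n$ be the eigenvalues of $B$. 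Since $\trace{\expm{C}} = \sum_j \e^{\lambda_j(C)}$ and the eigenvalues of $-B$ are $-\mu_1,\dots,-\mu_n$, linearity of the trace gives
\[
	\Phi_k^{(i)} \;=\; \e^{-\theta i}\sum_{j=1}^{n}\e^{\mu_j} + \e^{\theta i}\sum_{j=1}^{n}\e^{-\mu_j} \;=\; 2\sum_{j=1}^{n}\cosh(\mu_j - \theta i).
\]
Lemma~\ref{lem:comp_eigs} returns approximations $\widetilde\mu_1,\dots,\widetilde\mu_n$ with $|\widetilde\mu_j - \mu_j|\le\delta$ for every $j$, and the algorithm then forms $\widetilde\Phi_k^{(i)} = 2\sum_{j=1}^{n}\cosh(\widetilde\mu_j - \theta i)$.

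The core estimate is the scalar inequality: for every $a\in\reals$ and every $b$ with $|b|\le\delta$,
\[
	\e^{-\delta}\cosh(a) \;\le\; \cosh(a+b) \;\le\; \e^{\delta}\cosh(a).
\]
To prove it I would use $\cosh(a+b) = \cosh a\cosh b + \sinh a\sinh b$ together with $|\sinh a|\le\cosh a$ and the identities $\cosh b + |\sinh b| = \e^{|b|}$, $\cosh b - |\sinh b| = \e^{-|b|}$: this gives $\cosh(a+b) \le \cosh a\,(\cosh b + |\sinh b|) = \e^{|b|}\cosh a \le \e^{\delta}\cosh a$, and symmetrically $\cosh(a+b) \ge \cosh a\,(\cosh b - |\sinh b|) = \e^{-|b|}\cosh a \ge \e^{-\delta}\cosh a$. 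Applying this with $a = \mu_j - \theta i$ and $b = \widetilde\mu_j - \mu_j$ for each $j$, then summing over $j$ and multiplying by $2$, yields $\e^{-\delta}\Phi_k^{(i)}\le\widetilde\Phi_k^{(i)}\le\e^{\delta}\Phi_k^{(i)}$, which is the claim.

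I do not expect a genuine obstacle here. The two points that need care are: verifying that $B$ is in diagonal-plus-rank-one form so that Lemma~\ref{lem:comp_eigs} is legitimately invoked at \emph{every} iteration (this relies on \textbf{eigs} keeping $\Lambda_{\{i\}}$ diagonal and on $Z_{(k)}\otimes Z_{(k)}$ being rank one), and correctly folding the two terms $\e^{\mp\theta i}\trace{\expm{\pm B}}$ into a single sum of hyperbolic cosines --- it is precisely this $\cosh$ form that makes the clean multiplicative $\e^{\pm\delta}$ bound available rather than a weaker additive one.
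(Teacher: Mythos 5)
Your proposal is correct and follows essentially the same route as the paper: both express the potential as $2\sum_{j}\cosh(\mu_j-\theta i)$ and reduce the claim to the termwise scalar bound $\e^{-|b|}\cosh(a)\le\cosh(a+b)\le\e^{|b|}\cosh(a)$, which the paper obtains by splitting the exponentials and taking a max while you use the addition formula with $|\sinh a|\le\cosh a$ --- a cosmetic difference. Your extra check that $\Lambda_{\{i-1\}}+Z_{(k)}\otimes Z_{(k)}$ is genuinely in diagonal-plus-rank-one form is a reasonable addition but not a divergence in method.
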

%
%
\begin{proof}
Let $\tau_1,\tau_2 , \ldots , \tau_n$ be the eigenvalues of $\Lambda_{\{i -1\}} + Z_{(k)} \otimes Z_{(k)}$. Let $\widetilde{\tau}_1, \widetilde{\tau}_2 ,\dots ,\widetilde{\tau}_n$ be the approximate eigenvalues of the latter matrix when computed via Lemma~\ref{lem:comp_eigs} within an additive error $\delta>0 $, i.e, $|\widetilde{\tau}_j - \tau_j| \leq \delta$ for all $j\in{[n]}$.

First notice that, by Step $5$ of Algorithm~\ref{alg:fast:isotrop}, $\Phi_k^{(i)} = 2 \sum_{j=1}^{n} \cosh (\tau_j - \lambda i)$. Similarly, $\widetilde{\Phi}_k^{(i)}:= 2 \sum_{j=1}^{n} \cosh (\widetilde{\tau}_j - \lambda i)$. By the definition of the hyperbolic cosine, we get that
\begin{eqnarray*}
	\sum_{j=1}^{n} \cosh (\widetilde{\tau}_j - \lambda i )   & = &  \sum_{j=1}^{n} \cosh (\tau_j - \lambda i  + \widetilde{\tau}_j - \tau_j )  \\
	& = & \frac1{2}\sum_{j=1}^{n} \left[\exp (\tau_j - \lambda i)\exp(\widetilde{\tau}_j - \tau_j ) + \exp (-\tau_j + \lambda i)\exp(-\widetilde{\tau}_j + \tau_j )\right].
\end{eqnarray*}
To derive the upper bound notice that $\sum_{j=1}^{n} \cosh (\widetilde{\tau}_j - \lambda i )  \leq  \sum_{j=1}^{n} \cosh (\tau_j - \lambda i) \max_{j\in{[n]}}\{ \exp(\widetilde{\tau}_j - \tau_j), \exp ( - \widetilde{\tau}_j + \tau_j) \}$
and the maximum is upper bounded by $\exp(\delta)$. Similarly, for the lower bound.
\end{proof}
%
%
%
\begin{proof}(of Theorem~\ref{thm:derand:isotrop:fast})
The proof consists of three steps: (\emph{a}) we show that Algorithm~\ref{alg:fast:isotrop} is a reformulation of Algorithm~\ref{alg:derand:Bernstein}; (\emph{b}) we prove that in Step $5$ of Algorithm~\ref{alg:fast:isotrop} it is enough to compute the values of the potential function within a sufficiently small multiplicative error using Lemma~\ref{lem:comp_eigs}, and (\emph{c}) we give the advertised bound on the running time of Algorithm~\ref{alg:fast:isotrop}.
Set $p_i = \norm{A_{(i)}}^2/\frobnorm{A}^2$, $f(i) = A_{(i)}\otimes A_{(i)}/p_i - \Id_n$ and $s_i=1/p_i$ for every $i\in{[m]}$. Observe that $\frobnorm{A}^2=\trace{A^\top A} = \trace{\Id_n} = n$. Let $X$ be a random variable distributed over $[m]$ with probability $p_i$. Notice that $\EE{f(X)} = \zeromtx_n$ and $\gamma = n $, since $\norm{f(i)} = \norm{n A_{(i)}\otimes A_{(i)}/\norm{A_{(i)}}^2 - \Id_n } \leq n $ for every $i\in{[m]}$. Moreover, a direct calculation shows that $\EE{f(X)^2} = \EE{ (A_{(X)} \otimes A_{(X)}/p_X)^2} - \Id_n = n\sum_{i=1}^{m} A_{(i)}\otimes A_{(i)} - \Id_n = (n-1)\Id_n $, hence $\rho^2 \leq  n$. Algorithm~\ref{alg:derand:Bernstein} with $t=\OO(n\ln n /\eps^2)$ returns indices $x_1^*,x_2^*,\dots, x_t^*$ so that $\norm{\frac1{t} \sum_{j=1}^t f_j(x_j^*)} \leq \frac{\gamma \ln ( 2n)}{t \eps} + \eps \rho^2 /\gamma \leq 2\eps$. We next prove by induction that the same set of indices are also returned by Algorithm~\ref{alg:fast:isotrop}.
For ease of presentation, rescale every row of the input matrix $A$, i.e., set $ \widehat{A}_{(k)} = A_{(k)} \sqrt{ \theta / p_{k}}$ for every $k\in{[m]}$ (see Steps $2$ and $3$ of Algorithm~\ref{alg:fast:isotrop}). For sake of the analysis, let us define the following sequence of self-adjoint matrices of size $n$
\begin{eqnarray*}
T_{\{0\}} &:=& \zeromtx_n,\\
T_{\{i\}} &:=&  T_{\{i - 1\}} + \widehat{A}_{(x_i^*)} \otimes \widehat{A}_{(x_i^*)}  \text{ for } i\in{[t]}
\end{eqnarray*}
with eigenvalue decompositions $T_{\{i\}} = Q_{\{i\}} \Lambda_{\{i\}} Q^\top_{\{i\}} $, where $\Lambda_{\{i\}}$ are diagonal matrices containing the eigenvalues and the columns of $Q_{\{i\}}$ contain the corresponding eigenvectors. Set $Q_{\{0\}}=\Id$ and $\Lambda_{\{0\}}=\zeromtx$.
Notice that for every $k\in{[m]}$, by the eigenvalue decomposition of $T_{\{ i - 1\}}$, $T_{\{ i - 1\}} +  \widehat{A}_{(k)}\otimes \widehat{A}_{(k)} = Q_{\{i-1\}}\left(\Lambda_{\{i-1\}} + Q_{\{i-1\}}^\top \widehat{A}_{(k)} \otimes Q_{\{i-1\}}^\top \widehat{A}_{(k)}\right) Q_{\{i - 1\}}^\top.$ Observe that the above matrix (left hand side) and $\Lambda_{\{i-1\}} + Q_{\{i-1\}}^\top \widehat{A}_{(k)} \otimes Q_{\{i-1\}}^\top \widehat{A}_{(k)}$ have the same eigenvalues, since they are similar matrices. Let $\Lambda_{\{i-1\}} + Q_{\{i-1\}}^\top \widehat{A}_{(x_i^*)} \otimes Q_{\{i-1\}}^\top \widehat{A}_{(x_i^*)}  = U_{\{i\}} \Lambda_{\{i\}} U_{\{i\}}^\top$ be its eigenvalue decomposition\footnote{by its definition, $T_{\{i\}}$ has the same eigenvalues with $\Lambda_{\{i-1\}} + Q_{\{i-1\}}^\top \widehat{A}_{(x_i^*)} \otimes Q_{\{i-1\}}^\top \widehat{A}_{(x_i^*)}$.}. Then
\begin{eqnarray*}
T_{\{i - 1\}} + \widehat{A}_{(x_i^*)} \otimes \widehat{A}_{(x_i^*)} &  =  & Q_{\{i-1\}}\left(\Lambda_{\{i-1\}} + Q_{\{i-1\}}^\top \widehat{A}_{(x_i^*)} \otimes Q_{\{i-1\}}^\top \widehat{A}_{(x_i^*)}\right) Q_{\{i - 1\}}^\top\\ &  =  & Q_{\{i-1\}}U_{\{i\}} \Lambda_{\{i\}} U_{\{i\}}^\top  Q_{\{i - 1\}}^\top.
\end{eqnarray*}
It follows that $Q_{\{i\}} = Q_{\{i-1\}} U_{\{i\}}$ for every $i\geq 1$, so $Q_{\{i\}} = U_{\{1\}} U_{\{2\}} \dots  U_{\{i\}}$. The base case of the induction is immediate. Now assume that Algorithm~\ref{alg:fast:isotrop} has returned the same indices as Algorithm~\ref{alg:derand:Bernstein} up to the $(i-1)$-th iteration. It suffices to prove that at the $i$-th iteration Algorithm~\ref{alg:fast:isotrop} will return the index $x_i^*$.
We start with the expression in Step $4$ of Algorithm~\ref{alg:derand:Bernstein} and prove that it's equivalent (up to a fixed multiplicative constant factor) with the expression in Step $5$ of Algorithm~\ref{alg:fast:isotrop}. Indeed, for any $k\in{[m]}$, (let $C := \theta\sum_{j=1}^{i-1} f(x_j^*)$)
\begin{eqnarray*}
2\trace{\coshm{ C + \theta f(k) }}  =  \trace{\expm{ C  + \theta f(k) } + \expm{-C - \theta f(k) }}  \\
																   =  \trace{\expm{ T_{\{i-1\}} + \widehat{A}_{(k)}\otimes \widehat{A}_{(k)} }\e^{- \theta i} + \expm{- T_{\{i-1\}} - \widehat{A}_{(k)}\otimes \widehat{A}_{(k)} }\e^{ \theta i}}
\end{eqnarray*}
where we used the definition of $\coshm{\cdot}$, $f(i)$ and $T_{\{i-1\}}$ and the fact that the matrices commute. In light of Algorithm~\ref{alg:fast:isotrop} and the induction hypothesis, observe that the $m\times n$ matrix $Z$ at the start of the $i$-th iteration of Algorithm~\ref{alg:fast:isotrop} is equal to $\widehat{A} U_{\{1\}} U_{\{2\}} \ldots U_{\{i -1 \}}=  \widehat{A} Q_{\{i - 1\}}$. Now, multiply the latter expression that appears inside the trace with $Q_{\{i-1\}}^\top $ from the left and $Q_{\{i-1\}}$ from the right, it follows that ((let $C := \theta\sum_{j=1}^{i-1} f(x_j^*)$))
\begin{eqnarray*}
	2\trace{\coshm{ C + \theta f(k) }} & = & \trace{\expm{ \Lambda_{\{i-1\}} + Z_{(k)}\otimes Z_{(k)} }\e^{- \theta  i} + \expm{- \Lambda_{\{i-1\}} - Z_{(k)}\otimes Z_{(k)} }\e^{ \theta i}}
\end{eqnarray*}
using that $Q_{\{i-1\}}$ are the eigenvectors of $T_{\{i - 1\}}$ and the cyclic property of trace. This concludes part (\emph{a}).
Next we discuss how to deal with the technicality that arises from the approximate computation of the $\argmin$ expression in Step $5$ of Algorithm~\ref{alg:fast:isotrop}. First, let's assume that we have approximately (by invoking Lemma~\ref{lem:comp_eigs}) minimized the potential function in Step $5$ of Algorithm~\ref{alg:fast:isotrop}; denote this sequence of potential function values by  $\widetilde{\Phi}^{(1)},\ldots , \widetilde{\Phi}^{(t)}$. Next, we sufficiently bound the parameter $b$ of Lemma~\ref{lem:comp_eigs} so that the above approximation will not incur a significant multiplicative error.
Recall that at every iteration, by Ineq.~\eqref{ineq:barrier_incr} there exists an index over $[m]$ such that the current value of the potential function increases by at most a multiplicative factor $\exp\left( \eps^2 \rho^2 / \gamma^2\right)$. Lemma~\ref{lem:technical_cosh} tells us that at every iteration of Algorithm~\ref{alg:fast:isotrop} we increase the value of the potential function (by not selecting the optimal index over $[m]$) by at most an \emph{extra} multiplicative factor $\e^{2\delta}$, where $\delta$ is the additive error when computing the eigenvalues of the matrix in Step $5$ via Lemma~\ref{lem:comp_eigs}. Therefore, it follows that $\widetilde{\Phi}^{(t)} \leq \exp( 2\delta t) \Phi^{(t)}.$
Observe that, at the $i$-th iteration we apply Lemma\ref{lem:comp_eigs} on a matrix $\sum_{j=1}^{i} \widehat{A}_{(x_j)} \otimes \widehat{A}_{(x_j)}$ for some indices $x_j\in{[m]}$ and moreover $\norm{\sum_{j=1}^{i}{ \widehat{A}_{(x_j)} \otimes \widehat{A}_{(x_j)}} } = $ \\ $ \norm{ \theta \sum_{j=1}^{i}{ A_{(x_j)} \otimes A_{(x_j)} / p_{x_j} } } =   \norm{\theta \sum_{j=1}^{i} f(x_j ) - \theta i \Id}$. Triangle inequality tells us that $\norm{ \sum_{j=1}^{i}{ \widehat{A}_{(x_j)} \otimes \widehat{A}_{(x_j)}} }$ is at most $2\gamma\theta t$ for every $i \in{[t]}$. It follows that $\delta$ is at most $2^{-b+1} \theta t \gamma$ where $b$ is specified in Lemma~\ref{lem:comp_eigs}. The above discussion suggests that by setting $b= \OO(\log( \theta \gamma t))=\OO(\log (n\log n /\eps^3))$ we can guarantee that the potential function $\widetilde{\Phi}^{(t)} \leq  2n \exp\left( 3t \eps^2 \right)$. This concludes part~(\emph{b}).
%

%
Finally, we conclude the proof by analyzing the running time of Algorithm~\ref{alg:fast:isotrop}. Steps $2$ and $3$ can be done in $\OO(mn)$ time. Step $5$ requires $\widetilde{\OO}(mn\log^2 n )$ operations by invoking $m$ times Lemma~\ref{lem:comp_eigs}. Steps $6$ can be done in $\OO(n^2)$ time and Step $7$ requires $\widetilde{\OO}(mn\log^2 n )$ operations by invoking Lemma~\ref{lem:fast_mm:gerasoulis}. In total, since the number of iterations is $\OO(n\log n /\eps^2)$, the algorithm requires $\widetilde{\OO}( mn^2 \log^3 n /\eps^2)$ operations.
\end{proof}
%
\begin{proof}(of Theorem~\ref{thm:sparsification:here})
	Assume without loss of generality that $A$ has full rank. Define $u_i = A^{-1/2}v_i$ and notice that $\sum_{i=1}^{m} u_i \otimes u_i =\Id_n$. Run Algorithm~\ref{alg:fast:isotrop} with input $\{u_i\}_{i\in{[m]}}$ and $\eps$ which returns $\{\tau_i\}_{i\leq m}$, at most $t=\OO(n\log n /\eps^2)$ of which are non-zero such that
\begin{equation}\label{eqn:spectral_sparse:inner}
\norm{\sum_{i=1}^{m} \tau_i u_i \otimes u_i - \Id_n } \leq \eps.
\end{equation}
Define $\widehat{A} = A^{1/2}\left(\sum_{i=1}^{m}\tau_i u_i \otimes u_i\right) A^{1/2} = \sum_{i=1}^{m}\tau_i v_i\otimes v_i$. Eqn.~\eqref{eqn:spectral_sparse:inner} is equivalent to $(1-\eps) \Id_n \preceq \sum_{i=1}^{m} \tau_i u_i\otimes u_i \preceq (1+\eps) \Id_n$. Conjugating the latter expression by $A^{1/2}$, see~\cite[\S~$7.7$]{book:matrix_analysis:HornJohnson}, we get that $ (1-\eps) A \preceq \widehat{A} \preceq (1+\eps) A.$ Apply~\cite[Theorem~3.1]{phdthesis:Srivastava:2010} on $\widehat{A}$ which outputs a matrix $\widetilde{A}=\sum_{i=1}^{m}s_i v_i\otimes v_i$ with non-negative weights $\{s_i\}_{i\in{[m]}}$ at most $\lceil n /\eps^2 \rceil$ of which are non-zero, such that $ (1-\eps)^2 \widehat{A} \preceq \widetilde{A} \preceq (1+\eps)^2 \widehat{A}.$ Using the positive semi-definite partial ordering, we conclude that $(1-\eps)^3 A \preceq \widetilde{A} \preceq (1+\eps)^3 A$.
\end{proof}
%
%
\begin{proof}(of Theorem~\ref{thm:matrix_sparse:slow})
By homogeneity, assume that $\norm{A} = 1$. Following the proof of~\cite{matrix:sparsification:IPL2011}, we can assume that w.l.o.g. all non-zero entries of $A$ have magnitude at least $\eps/(2n)$ in absolute value, otherwise we can zero-out these entries and incur at most an error of $\eps/2$ (see~\cite[\S~$4.1$]{matrix:sparsification:IPL2011}).

Consider the bijection $\pi$ between the sets $[n^2]$ and $[n]\times [n]$ defined by $\pi (l)  \mapsto ( \lceil l / n\rceil , (l - 1) \mod n + 1) $ for every $l\in[n^2]$. Let $E_{ij}\in\reals^{n\times n}$ be the all zeros matrix having one only in the $(i,j)$ entry. Set $h(l) = \dil{ \frac{A_{\pi (l)}}{p_{l}} E_{\pi (l)} - A }$ where $p_l = A_{\pi (l)}^2/\frobnorm{A}^2$ for every $l\in{[n^2]}$. Observe that $h(\cdot ) \in \Sym^{2n\times 2n}$. Let $X$ be a random variable over $[n^2]$ with distribution $p_l$, $l\in{[n^2]}$. The same analysis as in Lemmas~$2$ and $3$ of~\cite{matrix:sparsification:IPL2011} together with properties of the dilation map imply that $\norm{h(l)} \leq 4n\sr{A} /\eps$ for every $l\in{[n^2]}$, $\EE{h(X)}=\zeromtx_{2n}$, and $\norm{\EE h(X)^2} \leq n\sr{A}$.

Run Algorithm~\ref{alg:derand:Bernstein} with $h(\cdot )$ as above. Algorithm~\ref{alg:derand:Bernstein} returns at most $t=28 n  \ln (\sqrt{2}n)\sr{A}/\eps^2$ indices $x_1^*,x_2^*,\ldots x_t^*$ over $[n^2]$ using $\OO(n^6 \sr{A} \log n /\eps^2 )$ operations such that
	\begin{equation}\label{ineq:esoteric}
		\norm{ \frac1{t}\sum_{l=1}^{t} h(x_l^*)} \leq \eps / 2.
	\end{equation}
Set $\widetilde{A} : = \frac1{t} \sum_{l=1}^{t} A_{\pi (x_l^*)} /p_{x_l^*} E_{\pi (x_l^*)}$. Observe that $\widetilde{A}$ has at most $t$ non-zero entries. Now, by the definition of $h(\cdot)$ and properties of the  dilation map, it follows that Ineq.~\eqref{ineq:esoteric} is equivalent to $\norm{ \dil{\widetilde{A} - A }} \ =\ \norm{ \widetilde{A} - A } \leq \eps /2.$
\end{proof}
%
%
%
\begin{proof}(of Lemma~\ref{lem:sparsif:decomp})
The key identity is $CC^\top : = \sum_{l,k\in{[n]},\ l<k} C^{(l,k)} \otimes C^{(l,k)}$. Let $l,k\in{[n]}$ with $l<k$, it follows that 
\begin{eqnarray*}
C^{(l,k)} \otimes C^{(l,k)}   & = & \left(\sqrt{|A_{lk}|} e_l + \sign{A_{lk}}  \sqrt{|A_{lk}|} e_k \right)\left(\sqrt{|A_{lk}|} e_l + \sign{A_{lk}}  \sqrt{|A_{lk}|} e_k\right)^\top\\
 & = & |A_{lk}| e_l \otimes e_l + A_{lk} e_k \otimes e_k + A_{lk} e_k \otimes e_l + |A_{lk}| e_k \otimes e_k.
\end{eqnarray*}
Therefore
\begin{equation}\label{eqn:sparse_decomp}
 CC^\top = \sum_{l,k\in{[n]}:\ l< k }\left[|A_{lk}| e_l \otimes e_l + A_{lk} e_k \otimes e_k + A_{lk} e_k \otimes e_l + |A_{lk}| e_k \otimes e_k\right].
\end{equation}
Let's first prove the equality for the off-diagonal entries of Eqn~\eqref{eqn:sparsify_lemma}. Let $l<k$ and $l,k\in{[n]}$. By construction, the only term of the sum that contributes to the $(i,j)$ and $(j,i)$ entry of the right hand side of Eqn.~\eqref{eqn:sparse_decomp} is the term $C^{(i,j)} \otimes C^{(i,j)} $. Moreover, this term equals $|A_{ij}| e_i \otimes e_i + A_{ij} e_i \otimes e_j + A_{ij} e_j \otimes e_i + |A_{ij}| e_j \otimes e_j$. Since $A_{ij} = A_{ji}$ this proves that the off-diagonal entries are equal.
For the diagonal entries of Eqn.~\eqref{eqn:sparsify_lemma}, it suffices to prove that $(CC^\top)_{ii} = R_i$. First observe that the last two terms of the sum in the right hand side of~\eqref{eqn:sparse_decomp} do not contribute to any diagonal entry. Second, the first two terms contribute only when $l=i$ or $k=i$. In the case where $l=i$, the contribution of the sum equals to $\sum_{i<k} |A_{ik}|$. On the other case ($k=i$), the contribution of the sum is equal to $\sum_{l<i} |A_{li}|$. However, $A$ is self-adjoint so $A_{li} =A_{il}$ for every $l<i$. It follows that the total contribution is $\sum_{i<k} |A_{ik}| + \sum_{l<i} |A_{il}| = \sum_{j\neq i} |A_{ij}| = R_i$.
\end{proof}
%
%
\begin{proof}( of Theorem~\ref{thm:matrix_sparsif:rand})
In one pass over the input matrix $A$ normalize the entries of $A$ by $\norm{A}$, so assume without loss of generality that $\norm{A}=1$. Let $C$ be the $n\times m$ matrix guaranteed by Lemma~\ref{lem:sparsif:decomp}, where $m= \binom{n}{2}$, each column of $C$ is indexed by the ordered pairs $(i,j)$, $i<j$ and $A = CC^\top +\diag{A} - R$. By definition of $C$ and the hypothesis, we have that $\norm{CC^\top} = \norm{A - \diag{A} +R} \leq \norm{A} +\infnorm{A}\leq 2\sqrt{\theta}$ and $\frobnorm{C}^2 =2 \sum_{i,j} |A_{ij}| \leq 2 n \infnorm{A}\leq 2n \sqrt{\theta} $.
Consider the bijection between the sets $[m]$ and $\{(i,j)\ |\ i<j,\ i,j\in{[n]}\}$ defined by $\pi (l)  \mapsto ( \lceil l / n\rceil , (l-1) \mod n + 1) $. For each $l\in{[m]}$, set $p_l=\norm{C^{\pi (l)}}^2 / \frobnorm{C}^2$ and define $f(l):= C^{\pi(l)} \otimes C^{\pi(l)}/p_l - CC^\top$. Let $X$ be a real-valued random variable over $[m]$ with distribution $p_l$. It is easy to verify that $\EE{f(X)} = \zeromtx_n$, $\norm{f(l)} \leq 2 \frobnorm{C}^2$ for every $l\in{[m]}$. A direct calculation gives that $\norm{\EE{ f(X)^2}} \leq 2\frobnorm{C}^2\norm{CC^\top}$. Matrix Bernstein inequality (see~\cite{chernoff:matrix_valued:Tropp}) with $f(\cdot)$ as above ($\gamma = 4n\sqrt{\theta}$ and $\rho^2 = 8 n \theta$) tells us that if we sample $t=38 n\theta \ln(\sqrt{2}n) /\eps^2 $ indices $x_1^*, x_2^*,\ldots , x_t^*$ over $[m]$ then with probability at least $1-1/n$, $\norm{ \frac1{t} \sum_{j=1}^{t} f(x_j^*)} \leq \eps$. Now, set $\widetilde{C}\in\reals^{n\times t}$ where the $j$-th column of $\widetilde{C}^{(j)}$ equals $\frac1{\sqrt{t}} C^{\pi(x_j^*)}$. It follows that $\norm{ \frac1{t} \sum_{j=1}^{t} f(x_j^*)} = \norm{ \frac1{t} \sum_{j=1}^{t} C^{\pi(x_j^*)} \otimes C^{\pi(x_j^*)} - CC^\top} = \norm{\widetilde{C}\widetilde{C}^\top - CC^\top}$. Define $\widetilde{A} = \widetilde{C}\widetilde{C}^\top +\diag{A} - R $. First notice that $\norm{\widetilde{A} - A} = \norm{\widetilde{C} \widetilde{C}^\top -CC^\top} \leq \eps$. It suffices to bound the number of non-zeros of $\widetilde{A}$. To do so, view the matrix-product $\widetilde{C}\widetilde{C}^\top$ as a sum of rank-one outer-products over all columns of $\widetilde{C}$. By the special structure of the entries of $\widetilde{C}$, every outer-product term of the sum contributes to at most four non-zero entries, two of which are off-diagonal. Since $\widetilde{C}$ has at most $t$ columns, $\widetilde{A}$ has at most $n + 2t$ non-zero entries; $n$ for the diagonal entries and $2t$ for the off-diagonal.
\end{proof}
\begin{proof}(of Theorem~\ref{thm:matrix_sparsif:det})
Let $C$ be the $n\times m$ matrix such that $A = CC^\top +\diag{A} - R$ and $m\leq \nnz{A}$ guaranteed by Lemma~\ref{lem:sparsif:decomp}. Apply Theorem~\ref{thm:sparsification:here} on the matrix $CC^\top$  and $\eps$ which outputs, in deterministic $\widetilde{\OO}(\nnz{A} n^2 \theta \log^3 n  /\eps^2 + n^4 \theta^2 \log n /\eps^4)$ time, an $n\times \lceil n/\eps^2\rceil$ matrix $\widetilde{C}$ such that $(1-\eps)^3 CC^\top \preceq \widetilde{C}\widetilde{C}^\top \preceq (1+\eps)^3 CC^\top.$ By Weyl's inequality~\cite[Theorem~$4.3.1$]{book:matrix_analysis:HornJohnson} and the fact that $\eps<1/2$, it follows that $\norm{CC^\top - \widetilde{C}\widetilde{C}^\top} \leq 5 \eps \norm{CC^\top}$. Define $\widetilde{A}:= \widetilde{C}\widetilde{C}^\top + \diag{A} - R$. First we argue that the number of non-zero entries of $\widetilde{A}$ is at most $n+ \lceil 2n/\eps^2 \rceil $. Recall that every column of $\widetilde{C}$ is a rescaled column of $C$. Now, think the matrix-product $\widetilde{C}\widetilde{C}^\top$ as a sum of rank-one outer-products over all columns of $\widetilde{C}$. By the special structure of the entries of $\widetilde{C}$, every outer-product term of the sum contributes to at most four non-zero entries, two of which are off-diagonal. Since $\widetilde{C}$ has at most $\lceil n/\eps^2 \rceil$ columns, $\widetilde{A}$ has at most $n + \lceil 2 n/\eps^2\rceil$ non-zero entries; $n$ for the diagonal entries and $\lceil 2 n/\eps^2\rceil$ for the off-diagonal. Moreover, $\widetilde{A}$ is close to $A$ in the operator norm sense. Indeed,
\begin{eqnarray*}
	\norm{A - \widetilde{A} } &   =  &  \norm{CC^\top - \widetilde{C}\widetilde{C}^\top}  \leq\ 5\eps\norm{CC^\top } \   =  \  5\eps\norm{A - \diag{A} +R }\\
	  						  & \leq &  5\eps(\norm{A} + \infnorm{A}) \ \leq\ 10 \eps \sqrt{\theta}\norm{A}
\end{eqnarray*}
where we used the definition of $\widetilde{A}$, Eqn.~\eqref{eqn:sparsify_lemma}, triangle inequality, the assumption that $A$ is $\theta$-SDD and the fact that $\theta \geq 1$. Repeating the proof with $\eps' =\frac{\eps}{10\sqrt{\theta}}$ and elementary manipulations conclude the proof.
\end{proof}

\ignore{
\begin{theorem}\label{thm:matrix_sparse:BSS}
Let $A$ be a $\theta$-SDD matrix of size $n$ and $ 0 < \eps <1$.  There is an algorithm that, given $A$ and $ \eps$, outputs a matrix $\widetilde{A}\in \reals^{n\times n}$ with at most $n + \lceil \frac{18 (1+\sqrt{\theta})^2 n}{\eps^2} \rceil$ non-zero entries such that
\begin{equation}
	\norm{A - \widetilde{A}} \leq \eps \norm{A}.
\end{equation}
Moreover, the algorithm computes $\widetilde{A}$ in deterministic $\OO(\nnz{A} n^3 \theta /\eps^2)$ time.
\end{theorem}
\begin{proof}(of Theorem~\ref{thm:matrix_sparse:BSS})
Let $C$ be the $n\times m$ matrix such that $A = CC^\top +\diag{A} - R$ and $m\leq \nnz{A}$ guaranteed by Lemma~\ref{lem:sparsif:decomp}. Then apply Theorem~\ref{thm:sparsification:strong} on the matrix $CC^\top$ which outputs, in deterministic $\OO(\nnz{A} n^3 /\eps^2)$ time, an $n\times \lceil n/\eps^2\rceil$ matrix $\widetilde{C}$ such that $(1-\eps)^2 CC^\top \preceq \widetilde{C}\widetilde{C}^\top \preceq (1+\eps)^2 CC^\top.$ By Weyl's inequality~\cite[Theorem~$4.3.1$]{book:matrix_analysis:HornJohnson} and the fact that $\eps<1$, it follows that $\norm{CC^\top - \widetilde{C}\widetilde{C}^\top} \leq 3 \eps \norm{CC^\top}$. Define $\widetilde{A}:= \widetilde{C}\widetilde{C}^\top + \diag{A} - R$. First we argue that the number of non-zero entries of $\widetilde{A}$ is at most $n+ \lceil 2n/\eps^2 \rceil $. Recall that every column of $\widetilde{C}$ is a rescaled column of $C$. Now, think the matrix-product $\widetilde{C}\widetilde{C}^\top$ as a sum of rank-one outer-products over all columns of $\widetilde{C}$. By the special structure of the entries of $\widetilde{C}$, every outer-product term of the sum contributes to at most four non-zero entries, two of which are off-diagonal. Since $\widetilde{C}$ has at most $\lceil n/\eps^2 \rceil$ columns, $\widetilde{A}$ has at most $n + \lceil 2 n/\eps^2\rceil$ non-zero entries: $n$ for the diagonal entries and $\lceil 2 n/\eps^2\rceil$ for the off-diagonal. Moreover, $\widetilde{A}$ is close to $A$ in the operator norm sense. Indeed,
\begin{eqnarray*}
	\norm{A - \widetilde{A} } &   =  &  \norm{CC^\top - \widetilde{C}\widetilde{C}^\top} \ \leq\ 3\eps\norm{CC^\top } \   =  \  3\eps\norm{A - \diag{A} +R }\\
	  						  & \leq &  3\eps(\norm{A} + \infnorm{A}) \ \leq\ 3\eps(1 + \sqrt{\theta})\norm{A}
\end{eqnarray*}
where we used the definition of $\widetilde{A}$, Eqn.~\eqref{eqn:sparsify_lemma}, triangle inequality, and assumption that $A$ is $\theta$-SDD. Repeating the proof with $\eps' =\frac{\eps}{3(1+\sqrt{\theta})}$ and elementary manipulations conclude the proof.
\end{proof}
}

\end{document}